\newcommand{\newref}[2][]{\hyperref[#2]{#1~\ref*{#2}}}
\renewcommand{\eqref}[1]{\hyperref[#1]{(\ref*{#1})}}
\numberwithin{equation}{section}
\newcommand{\sref}[1]{\newref[Section]{#1}}
\newcommand{\tref}[1]{\newref[Theorem]{#1}}
\newcommand{\lref}[1]{\newref[Lemma]{#1}}
\newcommand{\cref}[1]{\newref[Corollary]{#1}}
\newcommand{\eref}[1]{\newref[Equation]{#1}}
\theoremstyle{plain}
\newtheorem{theorem}{Theorem}[section]
\newtheorem{lemma}[theorem]{Lemma}
\newtheorem{claim}[theorem]{Claim}
\newtheorem{corollary}[theorem]{Corollary}
\newtheorem{definition}[theorem]{Definition}
\newtheorem{fact}[theorem]{Fact}
\theoremstyle{definition}
\newcommand{\eat}[1]{}
\newcommand{\calU}{\ensuremath{\mathcal{U}}}
\newcommand{\E}{\mathop{\mathbb{E}\/}}
\newcommand{\poly}{\ensuremath{\mathrm{poly}}}
\renewcommand{\epsilon}{\ensuremath{\varepsilon}}
\newcommand{\zo}{\{0,1\}}
\newcommand{\lfta}{\ensuremath{\leftarrow}}
\newcommand{\R}{\mathbb R}
\newcommand{\pr}{\Pr}
\newcommand{\ignore}[1]{{}}
\newcommand{\bkets}[1]{\left(#1\right)}
\newcommand{\sbkets}[1]{\left[#1\right]}
\newcommand{\abs}[1]{\left|#1\right|}
\newcommand{\calD}{\mathcal{D}}
\newcommand{\calP}{\mathcal{P}}
\newcommand{\mnote}[1]{}
\newcommand{\pnote}[1]{}
\newcommand{\dnote}[1]{}
\title{Explicit resilient functions matching Ajtai-Linial}
\author{Raghu Meka\\Department of Computer Science\\University of California, Los Angeles}
\date{}
\begin{document}
\maketitle
\abstract{
A Boolean function on $n$ variables is $q$-resilient if for any subset of at most $q$ variables, the function is very likely to be determined by a uniformly random assignment to the remaining $n-q$ variables; in other words, no coalition of at most $q$ variables has significant influence on the function. Resilient functions have been extensively studied with a variety of applications in cryptography, distributed computing, and pseudorandomness. The best known resilient function on $n$ variables due to Ajtai and Linial \cite{AjtaiL93} has the property that only sets of size $\Omega(n/(\log^2 n))$ can have influence bounded away from zero. However, the construction of Ajtai and Linial is by the probabilistic method and does not give an efficiently computable function. 

We construct an explicit monotone depth three almost-balanced Boolean function on $n$ bits that is $\Omega(n/(\log ^2 n))$-resilient matching the bounds of Ajtai and Linial. The best previous explicit constructions of Meka \cite{Meka09} (which only gives a logarithmic depth function), and Chattopadhyay and Zuckerman \cite{ChattopadhyayZ15} were only $(n^{1-\beta})$-resilient for any constant  $0 < \beta < 1$. Our construction and analysis are motivated by (and simplifies parts of) the recent breakthrough of \cite{ChattopadhyayZ15} giving explicit two-sources extractors for polylogarithmic min-entropy; a key ingredient in their result was the construction of explicit constant-depth resilient functions.

An important ingredient in our construction is a new randomness-optimal oblivious sampler that  preserves moment generating functions of sums of variables and could be useful elsewhere. 
}

\section{Introduction}
In this work we study \emph{resilient functions} introduced by Ben-Or and Linial \cite{BenOrL85} in the context of collective-coin flipping. Consider the following game: There are $n$ players who communicate by broadcast and want to agree on a random coin-toss. If all the players are honest, this is trivial: pick a player, have the player toss a coin and use the resulting value as the collective coin-toss. Now suppose that there are a few \emph{bad} players who are computationally unbounded, can collude amongst themselves, and broadcast last in each round, i.e., they broadcast after observing the bits broadcast by the good players in each round. The problem of \emph{collective coin-flipping} is to design protocols so that the bad players cannot bias the collective-coin too much. An important and well-studied case of protocols are \emph{one-round} collective coin-flipping protocols. We will adopt the notation of boolean functions instead of protocols, as both are equivalent for a single round. 

\begin{definition}
For a Boolean function $f:\zo^n \to \zo$, and $Q \subseteq [n]$, let $I_Q(f)$ be the probability that $f$ is not-determined by a uniformly random partial assignment to the bits not in $Q$. Let $I_q(f) = \min_{Q\subseteq [n], |Q| \leq q} I_Q(f)$. We say $f$ is \emph{$(q,\delta)$-resilient} if $I_q(f) \leq \delta$. In addition, for $0 < \tau < 1$, we say $f$ is $\tau$-strongly resilient if for all $1 \leq q \leq n$, $I_q(f) \leq \tau \cdot q$. 
\end{definition}

Intuitively, $I_q(f)$ quantifies the amount of influence any set of $q$ variables can exert on the evaluation of the function $f$. If $f$ is almost-balanced and $I_q(f)$ is small, say $o(1)$, then evaluating $f$ gives a one-round coin-flipping protocol that outputs a nearly unbiased bit even in the presence of up to $q$ bad players. More information and discussion of other models can be found in the survey of Dodis \cite{Dodis06}. 

In their work introducing the problem, Ben-Or and Linial constructed an explicit balanced\footnote{We say $f:\zo^n \to \zo$ is balanced if $\pr_{x \in_u \zo^n}[f(x) = 1] = 1/2$.} $(1/n^\alpha)$-strongly resilient function for $\alpha = \log_3 2$. Subsequently, the seminal work of Kahn, Kalai, and Linial \cite{KahnKL88} showed that for any balanced function $f$, $I_q(f) = 1 - o(1)$ for $q = \omega(n/\log n)$. Following this, Ajtai and Linial \cite{AjtaiL93} showed the \emph{existence} of a balanced function that is $\tau$-strongly resilient for $\tau = O(\log^2 n/n)$; this in particular implies the existence of a $(\Omega(n/\log^2 n), 1/3)$ resilient function. However, the construction in Ajtai and Linial is probabilistic and does not lead to an efficiently computable resilient function---it only gives a non-uniform polynomial-size circuit for computing such a resilient function. We construct an efficiently computable function matching the existential result of Ajtai and Linial:

\begin{theorem}[Main]\label{th:main}
For some universal constants $c_1, c_2 \geq 1$ the following holds. There exists an efficiently computable function $f:\zo^n \to \zo$ such that\footnote{Henceforth, for a multi-set $S$, $x \in_u S$ denotes a uniformly random element of $S$.}
\begin{itemize}
\item $f$ is almost-balanced: $\pr_{x \in_u \zo^n}[f(x) =1 ] = 1/2 \pm 1/10$.
\item $f$ is $(c_1 (\log^2 n)/n)$-strongly resilient. 
\item $f$ has a \emph{uniform} depth $3$ 	monotone circuit of size at most $n^{c_2}$. 
\end{itemize}
\end{theorem}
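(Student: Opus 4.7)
My starting point is to derandomize the classical Ajtai--Linial existence proof. Concretely, take a monotone depth three formula
\[
f(x) \;=\; \bigvee_{i=1}^{m_1} \bigwedge_{j=1}^{m_2} \bigvee_{k \in S_{i,j}} x_k,
\]
where each $S_{i,j}\subseteq [n]$ is a ``bundle'' of $m_3$ coordinates, and $m_1,m_2,m_3$ are all polylogarithmic in $n$. I would tune the parameters so that on a uniformly random input each inner $\vee$ outputs $1$ with probability close to $1/2$, each middle $\wedge$ outputs $1$ with probability $\approx 1/m_1$, and consequently $\Pr[f=1]\approx 1-(1-1/m_1)^{m_1}$, which can be adjusted (e.g.\ by padding or by slight tuning of $m_1$) to lie inside $1/2\pm 1/10$. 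The only non-standard element is that the bundles $S_{i,j}$ are produced by the new oblivious sampler promised in the abstract rather than chosen uniformly at random as in Ajtai--Linial.

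Next I would reduce both the balance and the resilience estimates to tail bounds on sums of $\{0,1\}$-valued indicators. For a fixed input $x$ and a fixed coalition $Q$, the events ``$\wedge$-gate $i$ is forced to $1$ by some extension of the bits outside $Q$'' and ``$\wedge$-gate $i$ is killed regardless of $Q$'' are determined by the bundles feeding gate $i$, and the number of such gates is a sum of $m_1$ independent indicators (once the bundles themselves are independent). By the MGF-preservation property of the sampler, any Chernoff-type bound that would hold for truly uniform bundles carries over to sampler-generated bundles, up to losses that are absorbed by the polylog overhead in $m_1$.

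With this reduction in hand the resilience estimate follows the Ajtai--Linial template. For a coalition $Q$ of size $q$, a $\wedge$-gate is ``vulnerable to $Q$'' only if every one of its $m_2$ clauses either already evaluates to $1$ on the bits outside $Q$ or intersects $Q$. Setting $m_3\approx \log m_1$ gives each clause probability $\approx 1/2$ of being satisfied on the random bits, and setting $m_2\approx \log n$ makes the per-gate vulnerability probability at most $O((qm_2/n)^{m_2})+1/m_1$, in the sense that the relevant ``excess count'' of gates flipped by $Q$ has expectation $O(qm_1/n)$. Summing over the $m_1$ gates and normalizing by the expected number of satisfied gates yields $I_Q(f)=O(q\log^2 n/n)$, matching Ajtai--Linial.

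The main obstacle, and the reason the sampler has to preserve moment generating functions rather than merely low moments, is that the resilience bound must hold \emph{simultaneously} for every coalition $Q$ of size at most $q$, of which there are $n^{\Theta(q)}$. A Chebyshev-style argument would be drowned by that union bound, so one needs subgaussian / Chernoff strength concentration for the counts of vulnerable and forced gates; this is exactly what MGF preservation provides, since $\E[\exp(\lambda\sum_i X_i)]$ factors over independent bundles and hence unionizes cheaply via a Markov step. The bulk of the technical work, and the step I expect to be hardest, is therefore building the oblivious sampler itself: it must output $m_1m_2$ correlated bundles using only $O(\log n)$ random bits per bundle while faithfully preserving the MGFs of all sums of the form $\sum_i g_i(S_{i,1},\dots,S_{i,m_2})$ that arise in the analysis above.
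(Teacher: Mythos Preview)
Your plan correctly identifies the high-level structure---derandomize Ajtai--Linial via an MGF-preserving sampler---but misidentifies \emph{why} the MGF condition is needed, and this misconception would derail the proof. You claim the MGF is required to beat a union bound over the $n^{\Theta(q)}$ coalitions $Q$. There is no such union bound in the argument, and at $q=\Theta(n/\log^2 n)$ none could succeed: you would need per-$Q$ failure probability below $\exp(-n/\log n)$, which no polynomial-size family of partitions can deliver. Instead the paper bounds $I_Q(f)$ \emph{directly for each fixed $Q$} via $I_Q(f)\le\sum_\alpha I_Q(T_{P^\alpha})$, and the per-partition term unpacks to $\sum_j 2^{-w}\cdot 2^{|Q\cap P^\alpha_j|}$, because a block overlapping $Q$ in $s$ positions is ``completable'' by the coalition with conditional probability $2^{-(w-s)}$. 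The exponential thus enters as part of an \emph{expectation} requirement on the family of partitions---formalized as the $(q,\tau)$-\emph{load-balancing} condition $\E_{\alpha}\bigl[\mathsf{1}(Q\cap P^\alpha_j\ne\emptyset)\,2^{|Q\cap P^\alpha_j|}\bigr]\le\tau q/v$ for every $Q$ and $j$---and the sampler's job is to output a polynomial-size family satisfying this expectation bound for all $Q$, which is precisely what \tref{th:expsampler} provides. No concentration and no union bound over coalitions is used.

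A second gap is the bias analysis. The Tribes $T_{P^\alpha}$ share variables and are correlated, so balance cannot be reduced to a Chernoff bound on independent indicators as you propose. The paper controls the bias via Janson's inequality together with a combinatorial \emph{design} condition on the partitions (any two blocks across partitions overlap in at most $w-d$ coordinates, and in at most $w-k$ with probability $1-\delta$ over a random pair); see \tref{th:biascontrol}. Enforcing this design requires padding the sampler's output with a short Reed--Solomon codeword, a step absent from your plan. Finally, several parameters are off: the paper's formula is AND--OR--AND (an AND of Tribes), not OR--AND--OR, with fan-ins $u=\mathrm{poly}(n)$, $v=\Theta(n/\log n)$, $w=\Theta(\log n)$; if all three were polylogarithmic as you write, $f$ would depend on only $\mathrm{polylog}(n)$ coordinates and could not be resilient on $n$ bits.
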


The best previously known explicit resilient functions as above could only tolerate roughly at most $q \ll n^{1-\beta}$ bad players for all constants $0 < \beta < 1$: \cite{Meka09} gave such a function of logarithmic-depth while the recent breakthrough of \cite{ChattopadhyayZ15}  gave such a function of depth $4$.  

The existential guarantee of \cite{AjtaiL93} is slightly stronger than the above; they show the existence of a balanced constant-depth function with similar resilience; however, their function is not monotone\footnote{While it is possible to make their construction monotone, this blows up the depth.}. Our construction essentially matches theirs while being efficiently computable. We can have the bias of the function be $1/2 \pm o(1)$ at the expense of reducing the resilience; see \cref{cor:polyresilientkwise} for one such trade-off.

\subsection{Two-source extractors} The present work builds on a recent breakthrough of Chattopadhyay and Zuckerman \cite{ChattopadhyayZ15} who gave an explicit \emph{two-source extractor} for poly-logarithmic \emph{min-entropy} sources---resolving a longstanding problem in pseudorandomness. One of the main building blocks of their work is an efficiently computable resilient function with a stronger guarantee described below. We noticeably simplify the construction and analysis of \cite{ChattopadhyayZ15} and obtain better quantitative bounds. We explain these next starting with the definitions of extractors.

For a random variable $X$, the min-entropy of $X$ is defined by 
$$H_\infty(X) = \min_{x \in Support(X)}(\log_{2} (1/\pr[X=x])).$$
 A two-source extractor is a function that takes two high min-entropy sources and outputs a nearly uniform random bit:

\begin{definition}
A function $Ext:\zo^n \times \zo^n \to \zo$ is a $(n,k)$ \emph{two-source extractor} with error $\epsilon$ if for any two independent $X,Y$ with $H_\infty(X), H_\infty(Y) \geq k$, $Ext(X,Y)$ is $\epsilon$-close to a uniformly random bit. If $Ext(X,Y)$ has full support for all such sources $X,Y$, then we say $Ext$ is a \emph{two-source disperser}. 
\end{definition}

Extractors have many applications across several areas including cryptography, error-correcting codes, randomness amplification; we refer the reader to the history of the problem in \cite{ChattopadhyayZ15} (the references are too many). By the probabilistic method, there exist two-source extractors with error $2^{-\Omega(k)}$ for all $k \geq 2\log n$ (even outputting $\Omega(k)$ bits; as in \cite{ChattopadhyayZ15} we only focus on extracting one bit in this work). Constructing such functions explicitly, as is required in most applications, is much harder and has been studied extensively. 

Until very recently, the best explicit two-source extractor due to Bourgain \cite{Bourgain05} required min-entropy at least $k \geq c n$ for some constant $c = 0.49... < 0.5$; the best explicit two-source disperser due to \cite{BarakRSW12} required min-entropy at least $\exp(\poly(\log \log n))$. Chattopadhyay and Zuckerman broke the barrier for two-source extractors and gave an explicit construction for min-entropies at least $C (\log n)^{74}$; independently, Cohen \cite{Cohen15} gave an explicit two-source disperser for min-entropy $\log^C n$ for some (unspecified) constant $C$. We show the following: 
\begin{theorem}\label{th:ext1}
For a sufficiently big constant $C$, there exists an explicit $(n,k)$ two-source extractor $D: \zo^n \times \zo^n \to \zo$ with constant-error for $k \geq C \log^{10} n$. In particular, we get an explicit $(n,k)$ two-source disperser for $k \geq C \log^{10} n$.
\end{theorem}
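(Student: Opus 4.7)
The plan is to follow the Chattopadhyay--Zuckerman reduction from two-source extraction to extraction from a non-oblivious bit-fixing (NOBF) source and then plug in the improved resilient function from \tref{th:main} to obtain the sharper min-entropy bound. The argument proceeds in three stages.

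First, I would establish the following black-box connection: a $\tau$-strongly resilient almost-balanced $f\colon \zo^N \to \zo$ is automatically an extractor for NOBF sources. Concretely, suppose $Z=(Z_1,\ldots,Z_N)$ is a distribution for which there exists a (possibly random) set $\mathcal{Q}\subseteq [N]$ of size at most $Q$ such that, conditioned on $\mathcal{Q}$ and on the bits $\{Z_i : i\in \mathcal{Q}\}$, the remaining bits $\{Z_i : i\notin \mathcal{Q}\}$ are $\epsilon$-close to uniform. Then $f(Z)$ has bias at most $O(\tau Q + \epsilon)$. The proof is a conditioning argument: for each realization of $\mathcal{Q}$, strong resilience gives bias at most $\tau\cdot |\mathcal{Q}|\leq \tau Q$ from the fixed bits, and averaging over $\mathcal{Q}$ costs nothing extra since the bad set is determined by the bad bits.

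Second, I would invoke the Chattopadhyay--Zuckerman machinery (which the paper claims to simplify) to produce such an NOBF source from two independent sources $X,Y$ with $H_\infty(X),H_\infty(Y)\geq k$. The standard pipeline uses a non-malleable extractor together with an independence-preserving merger iterated $O(\log n)$ times to output $N=\poly(k)$ candidate bits, only $Q\leq N/\polylog(N)$ of which can fail to look uniform (even conditioned on the rest), up to error $\epsilon = n^{-\omega(1)}$. I would trace the parameter dependencies carefully to verify that $k \geq C\log^{10} n$ suffices to make $N$ polynomial in $k$ while keeping $Q$ small enough to apply the resilient function.

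Finally, I would set $D(X,Y)=f(\mathrm{NOBF}(X,Y))$, where $\mathrm{NOBF}$ is the reduction above and $f$ is the function of \tref{th:main} on $N$ input bits, which is $\tau$-strongly resilient for $\tau = O((\log^2 N)/N)$. The bias bound from step one becomes $O(\tau Q + \epsilon) = O(Q\log^2 N / N) + n^{-\omega(1)} = o(1)$, yielding a constant-error two-source extractor, which is automatically a disperser. The main obstacle is bookkeeping: the improvement over \cite{ChattopadhyayZ15}'s $\log^{74} n$ bound comes from the fact that our resilient function tolerates a $1/\log^2 N$ fraction of bad bits rather than an $N^{-\beta}$ fraction, which allows both $N$ and (through the recursive NOBF construction) $k$ to be much smaller; verifying that the accounting closes at $\log^{10} n$ rather than some other power is where the main care is needed, and where the paper's simplifications to the NOBF construction should pay off.
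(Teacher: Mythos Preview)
Your high-level plan is right, but there is a genuine gap in step one that breaks the argument as written. In the Chattopadhyay--Zuckerman reduction, the ``good'' bits of the NOBF source are \emph{not} $\epsilon$-close to uniform conditioned on the bad bits; they are only (close to) $t$-wise independent for some $t = \poly(\log n)$. This is exactly why the paper introduces $t$-wise resilience (Definition preceding \tref{th:mainkwise}) and why the reduction, as stated in the final section, asks for a $t(m)$-wise $(m^{1-c},\epsilon)$-resilient function, not an ordinary resilient one. Your black-box connection in step one therefore does not apply: \tref{th:main} gives you control of $I_q(f)$, but you need control of $I_{q,t}(f)$, i.e.\ you must invoke \tref{th:mainkwise} instead.

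Once you make that substitution, the bookkeeping you worry about in step three becomes mechanical and explains the exponent $10$ directly. The paper quotes the CZ reduction as yielding min-entropy $k \geq C' \cdot (t(n^{C}))^4 \log^2 n$. Plugging in the $t(m) = O(\log^2 m)$ from \tref{th:mainkwise} gives $t(n^C) = O(\log^2 n)$ and hence $k = O(\log^{10} n)$. No recursive NOBF-construction accounting or merger iteration analysis is needed here; the improvement over $\log^{74} n$ comes from the smaller independence parameter $t$, not from a better fraction of tolerated bad bits.
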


\begin{theorem}\label{th:ext2}
For any constant $c \geq 1$, there exists a constant $C$ such that there exists an explicit $(n,k)$ two-source extractor $Ext: \zo^n \times \zo^n \to \zo$ with error $1/n^c$ for $k \geq C \log^{18} n$. 
\end{theorem}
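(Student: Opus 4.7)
The plan is to follow the Chattopadhyay--Zuckerman template already used for \tref{ext1} and simply push the internal parameters to drive the error polynomially small. In that template, given independent sources $X,Y$ on $\zo^n$ of min-entropy $k$, one uses short seeds extracted from $X$ to generate $t$ derived bits $Z_1,\dots,Z_t$ by applying a correlation-breaking (or $t$-non-malleable) extractor to $Y$, and then outputs $f(Z_1,\dots,Z_t)$ where $f$ is the strongly resilient function of \tref{main}. The guarantee of the correlation-breaking step is that outside of some ``bad'' set $Q$ of at most $q$ indices, the remaining $Z_i$ are jointly $\epsilon'$-close to uniform. Combining this with the strong resilience $I_q(f)\leq c_1 q \log^2(n)/n$ from \tref{main} bounds the bias of $f(Z_1,\ldots,Z_t)$ by roughly $\epsilon' + c_1 q \log^2(n)/n$.

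To obtain error $1/n^c$ rather than a constant, both terms must be pushed below $1/n^c$. First, I would need $q \leq n^{1-c}/(c_1 \log^2 n)$, meaning the fraction of bad indices out of $t$ must be polynomially smaller than in the constant-error regime; this forces $t$ to grow polynomially. Second, the per-index correlation-breaking error must satisfy $\epsilon' \leq 1/n^{c+O(1)}$, which in turn forces the inner non-malleable/alternating extractor to output roughly $\Omega(c\log n)$ bits at error $2^{-\Omega(c\log n)}$. Both demands feed back into the min-entropy budget through the standard parameter dependences of the non-malleable extractor chain (seed length, output length, entropy loss per application), and propagating these choices through the accounting gives a requirement of the form $k \geq C\log^{18} n$ for some $C = C(c)$. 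The jump from $\log^{10} n$ in \tref{ext1} to $\log^{18} n$ here reflects the extra $\log^{O(1)}n$ factors introduced at each stage to accommodate the smaller error.

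The main technical obstacle is that the strong resilience of $f$ is linear in $q$ with the essentially optimal constant $c_1\log^2(n)/n$, which leaves no slack: to drive the resilience term below $1/n^c$ one must shrink $q$ itself polynomially in $n$, and hence enlarge $t$ accordingly. This cascade is what dictates the final $\log^{18}n$ entropy requirement; with a weaker resilient function the exponent would blow up further, which is precisely why the tight bound in \tref{main} is essential. Once the parameters are tuned consistently, the analysis reduces to a routine hybrid argument: replace each $Z_i$ outside $Q$ by a truly uniform bit, paying $t\epsilon'$ in total variation, and then invoke strong resilience of $f$ against the worst-case assignment on $Q$ to control the remaining bias.
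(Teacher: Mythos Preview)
There is a genuine gap. The final error of the Chattopadhyay--Zuckerman construction is lower-bounded by the bias deviation of the resilient function $f$ itself: even if the correlation-breaking error $\epsilon'$ and the influence term $I_q(f)$ are both driven to $1/n^c$, the output bit is still only $(1/2 \pm \text{bias-error})$-close to uniform. The function of \tref{th:main} (equivalently \tref{th:mainkwise}) has bias only $1/2 \pm 1/10$, and the paper is explicit that improving this to $1/2 \pm o(1)$ costs resilience. So with the function you propose, no amount of shrinking $q$ or $\epsilon'$ yields error $1/n^c$; the bottleneck you never address is the constant bias of $f$, not the resilience or the non-malleable extractor chain.

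The paper's actual route is different and simpler in its accounting. For \tref{th:ext2} it plugs in the Reed--Solomon based function of \cref{cor:polyresilientkwise}, whose bias is $1/2 \pm n^{-\Omega(1)}$ and which is $t$-wise strongly resilient for $t = O(\log^4 n)$. Feeding this $t$ into the black-box reduction $k \geq C'\, t(n^C)^4 \log^2 n$ gives $k = O(\log^{18} n)$ directly. In particular, the jump from $\log^{10} n$ to $\log^{18} n$ is not caused by ``extra $\log^{O(1)} n$ factors at each stage to accommodate smaller error'' as you suggest, but by swapping the $t = O(\log^2 n)$ function (constant bias) for the $t = O(\log^4 n)$ function (polynomially small bias) inside the same reduction.
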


\paragraph{Two-source extractors and resilient functions}
Along with the above quantitative improvements, our construction and analysis simplify \cite{ChattopadhyayZ15}:
\begin{itemize}
\item \cite{ChattopadhyayZ15} uses several known extractors in their construction of $n^{1-\beta}$-resilient functions for all constant $0 < \beta < 1$; we present a simpler construction based on Reed-Solomon codes.
\item More importantly, the analysis of \cite{ChattopadhyayZ15} uses Braverman's celebrated result---\cite{Braverman10}---that polylog-wise independence fools constant-depth circuits. We present a direct and self-contained analysis without recourse to Braverman's result that in turn uses several non-trivial tools from the study of constant-depth circuits, e.g., \cite{Razborov87}, \cite{Smolensky87}, \cite{LinialMN93}. 
\end{itemize}

Chattopadhyay and Zuckerman  reduce the problem of computing two-source extractors to that of constructing an explicit $n^{1-\delta}$-resilient function for some constant $\delta > 0$ with the following stronger property; this reduction was also implicit in \cite{Li15}. A distribution $\calD$ on $\zo^n$ is $t$-wise independent if for $X \lfta \calD$, and all $I\subseteq [n]$ with $|I| \leq t$, the projection of $X$ onto the coordinates in $I$, $X_{I},$ is uniformly distributed over $\zo^I$. 

\begin{definition}
For a Boolean function $f:\zo^n \to \zo$, $Q \subseteq [n]$, and a distribution $\calD$ on $\zo^{[n]\setminus Q}$, let $I_{Q,\calD}(f)$ be the probability that $f$ is not-determined by setting the bits not in $Q$ according to $\calD$. Let $I_{q,t}(f) = \min\{I_{Q,\calD}(f): Q\subseteq [n], |Q| \leq q,\; \text{ $\calD$ is $t$-wise independent}\}$. We say $f$ is $t$-wise \emph{$(q,\epsilon)$-resilient} if $I_{q,t}(f) \leq \epsilon$. In addition, for $0 < \tau < 1$, we say $f$ is $t$-wise $\tau$-strongly resilient if for all $1 \leq q \leq n$, $I_{q,t}(f) \leq \tau \cdot q$. 

\end{definition}

The core of \cite{ChattopadhyayZ15} is the construction of a $\poly(\log n)$-wise $(n^{1-\beta},1/n^{\Omega(1)})$-resilient function for all $\beta > 0$. They achieve this as follows: a) Construct an explicit constant-depth monotone $(n^{1-\beta},1/n^{\Omega(1)})$-resilient function. b) Apply Braverman's result to conclude that for all $q$, $\epsilon > 0$, a constant-depth monotone $(q,\epsilon)$-resilient function is also $t$-wise $(q,2\epsilon)$-resilient for $t = \poly(\log(n/\epsilon))$. In contrast, our analysis of resilience is robust enough to imply resilience even under limited independence with little extra work. As a corollary, we get the following:
\begin{theorem}\label{th:mainkwise}
For some universal constant $c \geq 1$ the following holds. There exists an efficiently computable function $f:\zo^n \to \zo$ such that,
\begin{itemize}
\item $f$ is almost-balanced: For any $(c \log^2 n)$-wise independent distribution $\calD$ on $\zo^n$, $\pr_{x \sim \calD}[f(x) =1 ] = 1/2 \pm 1/9$.
\item $f$ is $(c \log^2 n)$-wise $(c (\log^2 n)/n)$-strongly resilient. 
\item $f$ has a \emph{uniform} depth $3$ 	monotone circuit of size at most $n^{c}$. 
\end{itemize}
\end{theorem}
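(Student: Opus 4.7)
The plan is to use the \emph{same} function $f$ constructed in the proof of Theorem~\ref{th:main} and observe that every probabilistic estimate in that proof goes through verbatim when the input bits are replaced by a $(c\log^2 n)$-wise independent distribution $\calD$, at the price of worsening the constants (for instance, the bias from $1/10$ to $1/9$). This is the ``moment-based'' nature of the analysis advertised in the introduction, in contrast to the approach of \cite{ChattopadhyayZ15} which invokes Braverman's theorem to bridge the gap from uniform to $t$-wise independence.

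First I would handle the bias. Because $f$ is a monotone depth-$3$ circuit with bottom fan-in $b = O(\log n)$, its value can be expanded by inclusion-exclusion at the two lower levels into a multilinear polynomial whose tail beyond degree $O(\log^2 n)$ contributes at most $n^{-\omega(1)}$ to $\E[f]$. Since $(c\log^2 n)$-wise independence preserves expectations of multilinear polynomials of degree at most $c\log^2 n$, one obtains $\pr_{x \sim \calD}[f(x)=1] = \pr_{x \in_u \zo^n}[f(x)=1] \pm n^{-\omega(1)} = 1/2 \pm 1/9$.

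For the resilience statement I would fix $Q \subseteq [n]$ with $|Q| \leq q$ and any $(c\log^2 n)$-wise independent $\calD$ on $\zo^{[n]\setminus Q}$, and bound $\pr_{x\sim\calD}[f \text{ is determined by } x_{[n]\setminus Q}]$ by $O(\log^2 n / n)\cdot q$. The proof of Theorem~\ref{th:main} reduces this to concentration statements on sums $S = \sum_i Z_i$ of indicator variables attached to middle-layer gate events, each $Z_i$ depending on $O(\log n)$ input bits. I would replace the Chernoff bound used there by the $t$-wise moment inequality of Bellare--Rompel and Schmidt--Siegel--Srinivasan:
\[
\pr\bigl[\,\abs{S - \E[S]} \geq \lambda\,\bigr] \leq \bkets{\frac{O(t)(\E[S]+t)}{\lambda^2}}^{t/2},
\]
which, for $t = c\log^2 n$, matches the Chernoff bound used in Theorem~\ref{th:main} throughout the polylogarithmic deviation regime needed there.

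The main obstacle is bookkeeping: the indicator $Z_i$ attached to a middle gate is itself a degree-$O(\log n)$ multilinear polynomial in the input bits, so expanding $(S-\E[S])^t$ formally requires $O(t\log n)$-wise independence of the bits. One must therefore either take $c$ large enough that $c\log^2 n \geq \Omega(t \log n)$ for the specific $t$ dictated by the deviation requirements of Theorem~\ref{th:main}, or exploit the fact that the middle-gate supports overlap sparsely by design (a property enforced by the Reed--Solomon-based oblivious sampler used in the construction) so that $Z_1,\dots,Z_m$ are themselves approximately $t$-wise independent and the moment inequality applies directly to them. Either route keeps the asymptotic parameters as stated.
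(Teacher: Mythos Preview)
Your high-level plan---use the same function as in Theorem~\ref{th:main} and argue that the analysis goes through under limited independence---is exactly what the paper does. However, the specific mechanisms you propose for both bias and resilience are not what the paper uses, and each has a real gap.

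For the bias, your claim that a two-level inclusion--exclusion yields a polynomial whose tail beyond degree $O(\log^2 n)$ contributes $n^{-\omega(1)}$ is unjustified and, as stated, cannot be right: the paper's bias error under limited independence is only a constant (hence the $1/9$), precisely because the relevant truncation is at a \emph{constant} degree in the top-level variables. Concretely, the paper writes $\prod_\alpha Z_\alpha=\sum_a(-1)^a S_a(1-Z_1,\dots,1-Z_u)$ and truncates at $a=d$, where $d$ is the (constant) design parameter; the Bonferroni error is $\exp(-\Omega(d))$, a constant. It then shows (the Claim preceding Corollary~\ref{cor:biascontrol}) that for each $|I|\le d$ the event $\prod_{\alpha\in I}(1-Z_\alpha)=\neg\bigvee_{\alpha\in I}T_{P^\alpha}$ is fooled by $O(d^2w^2+dw\log(1/\epsilon))$-wise independence, via a second inclusion--exclusion on the block indicators $X^\alpha_i$ truncated at level $\Theta(d^2+d\log(1/\epsilon)+w)$. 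Only the combination of the \emph{constant} top-level cutoff $d$ with the $O(\log n)$ bottom fan-in $w$ yields total degree $O(\log^2 n)$; a naive two-level expansion cannot, since the top fan-in $u$ is polynomial in $n$. If you instead pushed $d$ to $\omega(1)$ to get $n^{-\omega(1)}$ error, the required independence would exceed $\Theta(\log^2 n)$.

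For resilience, you say the proof of Theorem~\ref{th:main} ``reduces this to concentration statements on sums $S=\sum_i Z_i$'' and propose replacing a Chernoff bound by a $t$-th moment inequality. But the influence analysis (Theorem~\ref{th:influencecontrol}) uses no Chernoff bound over the input bits: it is a union bound $I_Q(f_\calP)\le\sum_\alpha I_Q(T_{P^\alpha})$ followed by a direct computation of each $I_Q(T_{P^\alpha})$ as the product of two probabilities. The load-balancing/MGF machinery controls an expectation over the \emph{choice of partition $\alpha$}, not over the input $x$, and is therefore unaffected when uniform $x$ is replaced by $t$-wise independent $x$. What does change is that the two events ``every clean block has a $0$'' and ``some dirty block is all $1$'s outside $Q$'' are no longer independent. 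The paper handles this in Corollary~\ref{cor:kwiseinfluence} by writing $\pr[f_1\wedge f_2]=\pr[f_1]-\pr[f_1\wedge\neg f_2]$ and observing that both $f_1$ and $f_1\wedge\neg f_2$ are \emph{read-once CNFs} of width $w$, then invoking Theorem~\ref{th:rcnf} (\cite{DeETT10}) that $O(w\log(1/\epsilon))$-wise independence fools such formulas. Your Bellare--Rompel route does not address this step, since the quantity to control is the expectation of a width-$w$ read-once CNF on $\Theta(n)$ variables, not a deviation probability for a sum.
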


As in the case of \tref{th:main}, we can have the bias of the function be $1/2 \pm o(1)$ at the expense of reducing the resilience.

\subsection{Oblivious samplers preserving moment generating functions} 
A critical ingredient in our proof of \tref{th:main} is an explicit \emph{oblivious sampler} with optimal---up to constant factors---seed-length that approximates \emph{moment generating functions} (MGF). We state this result next which may be of independent interest.

\begin{theorem}\label{th:expsampler}
For all $0 <  \mu \leq 1$, $1 \leq v,w$, there exists an explicit generator $G:\zo^r \to [v]^w$ such that for all functions $f_1,\ldots,f_w:[v] \to [0,1]$ with $\sum_{i=1}^w \E_{x \in_u [v]}[f_i(x)] = \mu$, 
$$\E_{y \in_u \zo^r}\sbkets{2^{\sum_{i=1}^w f_i(G(y)_i)}} = 1 + O(\mu).$$
The seed-length of the generator is $r = O(w + (\log v) + w((\log \log v) + \log(1/\mu))/(\log w))$. 
\end{theorem}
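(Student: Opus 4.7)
The plan is to match, via a small seed, the ``product of expectations'' bound that holds for truly uniform samples. Let $h_i(x) = 2^{f_i(x)} - 1$; by convexity of $2^x$ on $[0,1]$, $0 \le h_i \le 1$ and $\E_{x \in_u [v]}[h_i(x)] \le \mu_i := \E_{x \in_u [v]}[f_i(x)]$, with $\sum_i \mu_i = \mu$. For fully uniform independent $X_1,\dots,X_w$,
\[
\E\Bigl[2^{\sum_i f_i(X_i)}\Bigr] \;=\; \prod_i \bigl(1 + \E[h_i(X_i)]\bigr) \;\le\; \prod_i(1+\mu_i) \;\le\; e^\mu \;=\; 1 + O(\mu),
\]
which is exactly the target inequality. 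Matching it under a pseudorandom $G$ reduces to bounding $\sum_{S \subseteq [w]} \E\bigl[\prod_{i \in S} h_i(X_i)\bigr]$ by $1+O(\mu)$.

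I would split this sum by subset size. For \emph{light} subsets $|S| \le k$, any $k$-wise independent joint distribution of the $X_i$'s makes each expectation factor exactly, so $\sum_{|S|\le k}\E[\prod_{i\in S}h_i(X_i)] \le \sum_{|S|\le k}\prod_{i\in S}\mu_i \le e^\mu$. The trouble lies with the \emph{heavy} subsets $|S| > k$, where the trivial bound $\E[\prod_{i\in S}h_i(X_i)] \le 1$ is catastrophically loose (summing gives something exponential in $w$), so the construction itself must yield nontrivial decay as $|S|$ grows.

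For the construction, I plan a two-tier design: (i) an outer $k$-wise independent sequence $(Y_1,\dots,Y_w) \in [M]^w$ over a small alphabet $[M]$, costing $O(k\log(Mw))$ bits via standard polynomial-based constructions; and (ii) a shared ``lift'' seed of length $O(\log v) + O(w)$, for instance a uniform start together with an $O(w)$-step walk on a constant-degree expander over $[v]$, or a short pairwise-independent hash seed, combined with each $Y_i$ to produce $X_i \in [v]$ that is marginally uniform. Conditioning on the lift, the $X_i$'s become a $k$-wise independent function of the outer seed, so the light-subset analysis goes through verbatim. Balancing $k = \Theta(w/\log w)$ and $\log M = \Theta(\log\log v + \log(1/\mu))$ then gives total seed length $O(w) + O(\log v) + O(w(\log\log v + \log(1/\mu))/\log w)$, matching the theorem.

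The main obstacle will be bounding the heavy tail $\sum_{|S|>k}\E[\prod_{i\in S}h_i(X_i)] = O(\mu)$; already in the truly independent case this requires controlling $\prod(1+\mu_i) - \sum_{|S|\le k}\prod_{i\in S}\mu_i$ via elementary symmetric polynomials, which yields a bound of roughly $(e\mu/(k{+}1))^{k+1}$. My plan to transport this to the pseudorandom setting is to exploit concentration within the outer sequence: conditioned on the lift seed, the indicator of $\{h_i(X_i) \ge \tau\}$ at each $i$ is a bounded function of $Y_i$, and by $k$-wise independence the count of such ``heavy'' indices has strong moment tail bounds around its (small) expectation. Bounding each heavy monomial by $\prod_{i \in S \cap \mathrm{heavy}} h_i(X_i)$, summing over $S$ using these moment bounds, and averaging over the lift seed should reduce the tail contribution to $O(\mu)$ and close the MGF estimate.
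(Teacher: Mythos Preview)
Your high-level decomposition---write $2^{\sum f_i}=\prod(1+h_i)$, expand into symmetric pieces, and handle degree $\le k$ via $k$-wise independence---is exactly the paper's route. Two points, however, are not yet nailed down and the paper's proof handles them differently (and more simply) than your sketch suggests.

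\textbf{The lift must be a sampler, not just ``marginally uniform''.} After you condition on the lift seed, the light-subset sum becomes $\sum_{|S|\le k}\prod_{i\in S}\E_y[h_i(X_i)\mid \text{lift}]$, which is at most $\exp\bigl(\sum_i \E_y[h_i(X_i)\mid\text{lift}]\bigr)$. But $\sum_i \E_y[h_i\mid\text{lift}]$ can be much larger than $\mu$ for a bad lift; marginal uniformity of each $X_i$ (an average over both seeds) does not control this conditional sum. The paper's lift is a strong extractor $E:[v^c]\times[D]\to[v/D]$ with error $\epsilon=\mu/w$: by the sampling property of extractors (\lref{lm:extsampler}), all but a $v^{-c/2}$ fraction of sources $x$ satisfy $\sum_i \E_y[h_i\mid x]\le \mu+w\epsilon=2\mu$. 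This is the step your expander-walk / pairwise-hash proposal has to match, and it is not automatic; you should either use an extractor directly or invoke an explicit sampler guarantee with these parameters.

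\textbf{The heavy tail needs no threshold argument.} Your plan to classify indices as ``heavy'' by $h_i\ge\tau$ and then bound $\sum_{|S|>k}\E[\prod_{i\in S}h_i]$ via moment tails of the heavy count is both vague and unnecessary. The paper instead uses the pointwise inequality
\[
\prod_{i=1}^w (1+a_i)\ \le\ \sum_{j=0}^{\ell-1} S_j(a)\ +\ (1+\max_i a_i)^{w}\, S_\ell(a),
\]
valid for $a_i\ge 0$. Taking $a_i=h_i(X_i)\in[0,1]$, the tail is replaced by $2^{w}\,S_\ell(h_1,\dots,h_w)$, which has degree exactly $\ell$ and is therefore handled by $\ell$-wise independence: conditioned on a good lift,
\[
\E_y\bigl[S_\ell(h)\bigr]\ =\ S_\ell\bigl(\E_y[h_1],\dots,\E_y[h_w]\bigr)\ \le\ \Bigl(\frac{e\cdot 2\mu}{\ell}\Bigr)^{\ell}.
\]
With $\ell=\Theta(w/\log w)$ this is $\le 2^{-2w}$, killing the $2^{w}$ prefactor. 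This single inequality replaces your entire heavy-subset program; see \lref{lm:samplerkwise} and \lref{lm:mtech}.
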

We state a more precise version which works for estimating $\E\sbkets{\exp\bkets{\theta \cdot \sum_i f_i(\;)}}$ for all $\theta > 0$, i.e., the MGF, in \sref{sec:expsampler}; here we focus on the above for simplicity and as it captures the main ideas. 

Let us first compare the above with known randomness efficient samplers such as those of \cite{Zuckerman97, Gillman98, Kahale97}. For concreteness, let us consider the special case when $v = 2^{O(w)}$ and $\mu = 1$; these are the parameters we face in our application. The seed-length of \tref{th:expsampler} in this case is $O(w)$. On the other hand, the samplers of \cite{Zuckerman97, Gillman98, Kahale97} when instantiated to obtain a guarantee as above require a seed-length of $\Omega(w \log w)$. The improvement from $O(w \log w)$ to $O( w)$ is critical in our application; as we enumerate our all possible seeds eventually, the improvement from $O(w \log w)$ to $O(w)$ in seed-length translates to an improvement from super-polynomial running-time ($n^{O(\log \log n)}$) to polynomial running-time of the resilient function. 

To illustrate the gap further, with the above parameters, let $X_i = f_i(x_i)$ for an independent uniformly random $x_i \in_u [v]$ and let $Y_i = f_i(G(y)_i)$, where $G$ is as in the theorem. In this case, standard Chernoff bounds imply that for a sufficiently big constant $C$,
\begin{equation}\label{eq:truechernoff}
\pr[ X_1 + \cdots + X_w > C w/(\log w)] \leq \exp(-w).
\end{equation}

Our argument in \sref{sec:expsampler} shows that our generator satisfies the same property: $\pr[Y_1 + \cdots + Y_w > C w/(\log w)] \leq  \exp(-w)$. Note that the seed-length used by our generator for this setting is $O(w)$ which is optimal\footnote{To get a tail bound of $\exp(-w)$, the sample space has to have $\exp(w)$ points.}.

In contrast, if one uses the \emph{expander sampler} as in \cite{Gillman98, Kahale97, Healy08} on an expander graph with degree $D$ the best one could get using the current analyses is (cf.~Corollary 23 of \cite{Healy08}),
$$\pr[Y_1 + \cdots + Y_w > C w/(\log w)] \leq \exp(-w \cdot (1 - (w^2/D^4))).$$
In other words, to get an $\exp(-w)$ bound on the tail-bound, one needs the degree $D$ of the expander to be $w^{\Omega(1)}$; as the number of random bits needed by such a generator is $(\log v) + w (\log D)$, the total seed-length will be $O(w \log w)$. Similarly, applying the analysis of expander Chernoff bounds from \cite{Gillman98, Kahale97, Healy08} to get a bound on the MGF as in the theorem requires the degree of the expander to be at least $w^{\Omega(1)}$. This in turn requires the total seed-length to be $\Omega(w \log w)$. 

Another potential approach for constructing samplers as in theorem is to apply the pseudorandom generators (PRGs) for small-space machines of Nisan or Impagliazzo, Nisan, and Wigderson \cite{Nisan92, ImpagliazzoNW94}.  However, to obtain a guarantee on the MGF as in the theorem or to satisfy \eref{eq:truechernoff} in the special-case above, we need to instantiate the generators with error $\ll \exp(-w)$. This in turn forces the seed-length of the generators to be $\Omega(w \log w)$. 

\section{Overview of construction}
We next give a high level overview of our main construction and analysis. First, some notations:
\begin{itemize}
\item Throughout, by a partition $P$ of $[n]$ we mean a division of $[n]$ into $w$-sized blocks $P_1,\ldots,P_v$, where $v = n/w$ (we assume $w$ divides $n$). 
\item Let $\calD_p$ denote the product distribution on $\zo^n$ where each bit is $p$-biased. 
\end{itemize}

As in \cite{AjtaiL93} and \cite{ChattopadhyayZ15} our construction will be an \emph{AND} of several \emph{Tribe} functions:
\begin{definition}
For a partition $P = \{P_1,\ldots, P_v\}$ of $[n]$ into $w$-sized blocks, the associated \emph{Tribes} function is the DNF defined by $T_P = \vee_{j=1}^v \bkets{\wedge_{\ell \in P_j} x_\ell}$. A collection of partitions $\calP = \{P^1,\ldots,P^u\}$ defines a function $f \equiv f_{\calP}:\zo^n \to \zo$ as follows:
$$f_\calP(x) := \bigwedge_{i=1}^u \;\bigvee_{j=1}^v \;\bkets{\bigwedge_{k \in P^i_j} x_k} =  \bigwedge_{i=1}^u T_{P^i}(x).$$
\end{definition}

The final function satisfying \tref{th:main} will be $f_\calP$ for a suitably chosen set of partitions. To analyze such functions, we first state two abstract properties that allow us to analyze the bias as well as influences of such functions; we then design partitions that satisfy the properties. The properties we define are motivated by \cite{ChattopadhyayZ15} and abstracting them in this way allows us to give a modular analysis of the construction. The partitions themselves will be designed using the sampler we construct in \tref{th:expsampler} which forms the core of our analysis and construction. The analysis of resilience under limited independence follows a similar approach in addition to some careful, but elementary, calculations involving elementary symmetric polynomials. 

\paragraph{Analyzing bias} The first condition allows us to approximate the bias of functions of the form $f_\calP$.

\begin{definition}
Let $\calP = \{P^1,\ldots,P^u\}$ be a collection of partitions of $[n]$ into $w$-sized blocks. For $d \leq w$, we say $\calP$ is a $d$-design if no two blocks across any of the partitions overlap in more than $w-d$ elements: formally, for all $\alpha \neq \beta \in [u]$, and $i,j \in [v]$, $|P^\alpha(i) \cap P^\beta(j)| \leq w-d$. In addition, for $d \leq k \leq w$ and $\delta \in (0,1)$, we say $\calP$ is a $(d,k,\delta)$-design if it is a $d$-design and for all $\alpha \in [u]$, and $i,j \in [v]$,  
$$\pr_{\beta \in_u [u]}\sbkets{|P^\alpha_i \cap P^\beta_j| \leq w-k} \geq 1 - \delta.$$
\end{definition}
We should think of $k \gg d$. Intuitively, the first condition says that any two blocks arising in our partitions differ in at least $d$ elements (i.e., do not overlap completely); in contrast, the second condition says that with probability $1-\delta$, two random blocks differ in at least $k$ elements (i.e, have very little overlap if $k \gg d$). 

When a collection of partitions $\calP = \{P^1,\ldots,P^u\}$ satisfies the above condition, the following claim gives a formula for the bias of $f_\calP$. To parse the formula, even if clearly false, suppose that the tribes involved in $f_\calP$ were on disjoint sets of variables. Then, we would have
$$\pr_{x \in_u \zo^n}[f_\calP(x) = 1] = \prod_{\alpha=1}^u \pr_{x \in_u \zo^n}\sbkets{T_{P^\alpha}(x) = 1} = (1 - (1- 2^{-w})^v)^u := \mathsf{bias}(u,v,w).$$
The next claim shows that when $\calP$ forms a $(d,k,\delta)$-design, the different tribes behave as if they are on disjoint sets of variables and the above formula for the bias is approximately correct. As is sufficient in our applications, we specialize to the case where $\mathsf{bias}(u,v,w)$ is close to $1/2$; this corresponds to choosing\footnote{Our arguments also give analogous, albeit more cumbersome, bounds for all $u,v,w$ and even other product distributions.} $v = \Theta(1) w 2^w$ and $u$ such that $1/3 \leq \mathsf{bias}(u,v,w) \leq 2/3$. 
\begin{theorem}\label{th:biascontrol}
Let $\calP = \{P^1,\ldots,P^u\}$ be a collection of partitions of $[n]$ into $w$-sized blocks that is a $(d,k,\delta)$-design. Let $u,v,w$ be such that $v = \Theta(1) w 2^w$ and $1/3 \leq \mathsf{bias}(u,v,w) = (1 - (1- 2^{-w})^v)^u \leq 2/3$. Then,
$$\abs{\pr_{x \in_u \zo^n}[f_\calP(x) = 1] - \mathsf{bias}(u,v,w)} \leq \min\begin{cases}O(1) w \exp(-\Omega(d))\\ O(1) \cdot \bkets{w \exp(-\Omega(k)) + \exp(-\Omega(d)) + 2^{w} \delta} \end{cases}. $$
\end{theorem}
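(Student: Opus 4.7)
The plan is to expand $\E[f_\calP]$ by inclusion-exclusion over the $u$ tribes and estimate each summand using FKG together with Janson's inequality. Writing $F_\alpha := \{T_{P^\alpha}(x) = 0\}$ for the event that tribe $\alpha$ fails and using $f_\calP = \prod_\alpha (1 - \mathbf{1}_{F_\alpha})$, I get
$$\E[f_\calP] = \sum_{S \subseteq [u]} (-1)^{|S|} P_S, \qquad P_S := \Pr\Bigl[\bigcap_{\alpha \in S} F_\alpha\Bigr].$$
If the partitions were on disjoint variables, each $P_S$ would equal $q^{|S|}$ with $q = (1-2^{-w})^v$, and the sum would telescope exactly to $(1-q)^u = \mathsf{bias}(u,v,w)$. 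The task is therefore to show $P_S = q^{|S|}(1 + \epsilon_S)$ with small $\epsilon_S$ and then control the accumulated error.

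Since each $F_\alpha$ is a decreasing event, FKG gives the lower bound $P_S \geq q^{|S|}$. For the matching upper bound, I apply Janson's inequality to the $|S|v$ atomic increasing events $E_{\alpha,j} := \{x|_{P^\alpha_j} \equiv 1\}$, using $\bigcap_{\alpha} F_\alpha = \bigcap_{\alpha,j} \bar E_{\alpha,j}$. This yields $P_S \leq \exp(-\mu_S + \Delta_S)$, where $\mu_S = |S| v \cdot 2^{-w}$ and $\Delta_S$ is the Janson-style sum of $\Pr[E_{\alpha,i} \cap E_{\alpha',j}]$ over pairs of dependent atomic events---necessarily from distinct partitions, since within-partition blocks are disjoint. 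Expanding $\log(1 - 2^{-w})$ gives $\exp(-\mu_S) = q^{|S|}(1 + O(sw \cdot 2^{-w}))$, so $|\epsilon_S| = O(\Delta_S + sw \cdot 2^{-w})$ with $s := |S|$.

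Next, I bound $\Delta_S$ via the design. Under a $d$-design, every cross-partition block-pair overlaps in at most $w - d$ positions and contributes at most $2^{-(w+d)}$; since any block meets at most $w$ blocks of any other partition (by the partition property), $\Delta_S = O(s^2 v w \cdot 2^{-(w+d)}) = O(s^2 w^2 2^{-d})$ after using $v = \Theta(w 2^w)$. Under a $(d,k,\delta)$-design, I split pairs by overlap into \emph{far} (overlap in $[1, w-k]$, contributing $\leq 2^{-(w+k)}$ each, with $O(s^2 vw)$ such pairs by geometry) and \emph{close} (overlap in $(w-k, w-d]$, contributing $\leq 2^{-(w+d)}$ each); the design property bounds the total number of close ordered quadruples $(\alpha, i, \beta, j)$ over all of $[u]$ by $\delta u^2 v^2$.

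Assembling, $|\E[f_\calP] - (1-q)^u| \leq \sum_S q^{|S|} |\epsilon_S|$. The hypothesis $(1-q)^u \in [1/3, 2/3]$ forces $uq = \Theta(1)$ and $(1+q)^u = \Theta(1)$, so $\sum_s \binom{u}{s} s^2 q^s = O(1)$ by a direct derivative calculation. For the $d$-design, this yields total error $O(w^2 2^{-d}) = O(w \exp(-\Omega(d)))$ in the meaningful regime $d \geq 2\log w$ (and is vacuous otherwise). For the $(d,k,\delta)$-design, the close-pair contribution is handled by swapping summations:
$$\sum_S q^{|S|} \cdot \#\{\text{close pairs in }S\} = q^2(1+q)^{u-2} \cdot \#\{\text{close pairs in }[u]\} \leq q^2(1+q)^{u-2} \cdot \delta u^2 v^2 = O(\delta v^2),$$
giving a close contribution of $O(\delta v^2 \cdot 2^{-(w+d)}) = O(\delta w^2 2^{w-d})$; the far contribution is $O(w^2 2^{-k})$ and the $sw \cdot 2^{-w}$ correction totals $O(w 2^{-w})$, together matching $O(2^w \delta + w \exp(-\Omega(k)) + \exp(-\Omega(d)))$ since $d \leq w$. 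The main obstacle is this last accounting step: a naive per-$S$ bound on the close part of $\Delta_S$ carries a prohibitive factor of $u \sim 2^{\Theta(w)}$ (as $uq = \Theta(1)$ and $q = e^{-\Theta(w)}$), so one must re-order summations to cancel the $u^2$ in the global close-quadruple count against the binomial weights $q^{|S|}$ before converting to the target $2^w\delta$ term.
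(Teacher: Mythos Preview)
There is a real gap in the step $|\epsilon_S| = O(\Delta_S + sw\cdot 2^{-w})$. Janson yields $P_S \le e^{O(\Delta_S)}\, q^{s}$, so $\epsilon_S \le e^{O(\Delta_S)} - 1$, and this linearizes to $O(\Delta_S)$ only when $\Delta_S = O(1)$. But your full inclusion--exclusion runs $s = |S|$ all the way to $u$, and since $uq = \Theta(1)$ with $q = e^{-\Theta(w)}$ you have $u = e^{\Theta(w)}$. Under a $d$-design $\Delta_S$ scales like $s^2 \cdot \mathrm{poly}(w)\cdot 2^{-\Omega(d)}$, so once $s$ exceeds roughly $2^{\Omega(d)}/\mathrm{poly}(w)$ the factor $e^{\Delta_S}$ grows like $\exp(\Omega(s^2))$ and overwhelms the weight $\binom{u}{s}q^s \le (O(1)/s)^s$; whenever $d = o(w)$ there are exponentially-in-$w$ many levels $s \le u$ at which the summand already exceeds $1$, so $\sum_S q^{|S|}|\epsilon_S|$ is not bounded by anything close to the target. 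The same failure hits the $sw\cdot 2^{-w}$ correction for $s$ near $u$. Your swap-of-summation for the close pairs correctly bounds $\sum_S q^{|S|}\Delta_S^{\mathrm{close}}$, but the quantity you actually need is $\sum_S q^{|S|}\bigl(e^{\Delta_S}-1\bigr)$, which is not linear in $\Delta_S$ precisely in the regime that matters.

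The paper's fix is to Bonferroni-truncate at level $d$: write $\prod_\alpha Z_\alpha = \sum_{a<d}(-1)^a S_a(1{-}Z_1,\dots,1{-}Z_u) \pm S_d(\cdot)$, so only subsets $I$ with $|I|\le d$ ever enter. For such $I$ one has $\Delta_I = O(wd\cdot 2^{-d_I/2})$ with $d_I \ge d$, and the truncation cost is $\E[S_d] \le \gamma\binom{u}{d}q^d = e^{-\Omega(d)}$ since $uq=O(1)$. The $(d,k,\delta)$ refinement is obtained by averaging $e^{O(\Delta_I)}$ over a uniformly random $I$ of each fixed size $a\le d$: with probability at least $1 - v^2 d^2 \delta$ every cross-block overlap within $I$ is at most $w-k$, yielding $\E_I\bigl[e^{O(\Delta_I)}\bigr] \le e^{O(wd\,2^{-k/2})} + 2^{O(w)}\delta$. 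This averaging is the level-wise analogue of your swap-of-summation, carried out where $|I|$ is small enough that the exponentiation stays controlled.
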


\ignore{
When $\calP$ satisfies the above condition, the following claim gives a formula for the bias of $f_\calP$. We state the result for general $p$-biased distributions as the statement and its analysis are no more complicated. 

\begin{theorem}\label{th:biascontrol}
Let $\calP = \{P^1,\ldots,P^u\}$ be a collection of partitions of $[n]$ into $w$-sized blocks that is a $(d,k,\delta)$-design for some $d \leq w/2$. Let $0 \leq p \leq 1$ with $v \geq p^{-w}$ and $\theta = (1-p^w)^v$. Then, for all even integers $r \leq u$,  
$$\pr_{x \lfta \calD_p}[f_\calP(x) = 1] = (1-(1-p^w)^v)^u \pm O(1) \bkets{ \bkets{v^2 r^2\exp(2 v r^2 p^{w+d})} \cdot \delta + (v r^2) \cdot  p^{2w-k}}  \cdot (1 + \theta)^u \pm 2 \theta^r \binom{u}{r}.$$
\end{theorem}}

The proof of \tref{th:biascontrol} is similar to the arguments in \cite{ChattopadhyayZ15} and relies on Janson's inequality. However, our argument is more subtle as we need to handle $(d,k,\delta)$-designs and not just $d$-designs as is done there. In particular the theorem implies that when $\calP$ is a $d$-design with $d \gg \log w$, the error is at most $\exp(-\Omega(d))$. On the other hand, when $d \ll \log w$ we can use the second formula if $k \gg \log w$. We need the more refined statement above where only most blocks are far from each other as this is what our construction achieves. 

\paragraph{Analyzing influences} We next specify a sufficient condition on a collection of partitions $\calP = \{P^1,\ldots,P^u\}$ to guarantee that small coalitions have small influence on $f_\calP$. 
\begin{definition}
Let $\calP = \{P^1,\ldots,P^u\}$ be a collection of partitions of $[n]$ into $w$-sized blocks. We say $\calP$ is $(q,\tau)$-load balancing if for all $Q \subseteq [n]$ with $|Q| \leq q$, and $j \in [v]$,
$$\E_{\alpha \in_u [u]}\sbkets{\mathsf{1}(Q \cap P^\alpha_j \neq \emptyset) 2^{|Q \cap P^\alpha_j|}} \leq \tau \cdot (q/v).$$
\end{definition}

To gain some intuition for the definition and the use of the name \emph{load balancing} we first view partitions as hash functions. A partition $P = \{P_1,\ldots, P_v\}$ can be seen as a hash function $h_P:[n] \rightarrow [v]$: $h_P(i) = j$ if $i \in P_j$. We can then also view a collection of partitions $\calP$ as defining a family of hash functions $\cal{H} = \{h_P: P \in \calP\}$. The definition above then says that a random hash function from the family $\cal{H}$ is \emph{load-balancing} in a certain concrete way. In particular, for any subset $Q \subseteq [n]$, and any particular bin $j \in [v]$, the number of items hashed into bin $j$ satisfy the following inequality:
$$\E_{h \in_u \cal{H}}\sbkets{ \mathsf{1}(Q \cap h^{-1}(j) \neq \emptyset) 2^{|Q \cap h^{-1}(j)|}} \leq \tau \cdot (|Q|/v).$$

Indeed, standard Chernoff bounds imply that the above property is clearly satisfied by truly random hash functions. Thus, the partitions we construct can be seen as imitating this property of truly random hash functions but with a much smaller size family. 

We show that load-balancing partitions give us fine control on the influences:

\begin{theorem}\label{th:influencecontrol}
Let $\calP = \{P^1,\ldots,P^u\}$ be a collection of partitions of $[n]$ into $w$-sized blocks that is $(q,\tau)$-load balancing. Then,
$$I_{q}(f_\calP) \leq (u (1-2^{-w})^{v-q}) \cdot (\tau 2^{-w}) \cdot q.$$
\end{theorem}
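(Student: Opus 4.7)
The plan is to exploit the monotonicity of $f_\calP$ together with the disjointness of blocks within each partition. Fix any $Q \subseteq [n]$ with $|Q| \leq q$, and let $y$ denote a uniformly random assignment to $[n] \setminus Q$. By monotonicity, $f_\calP$ fails to be determined by $y$ iff extending $y$ by setting all of $Q$ to $0$ gives $f_\calP = 0$ while extending by setting all of $Q$ to $1$ gives $f_\calP = 1$. Writing $A_\alpha$ for the event ``$T_{P^\alpha}$ evaluates to $1$ under the all-ones extension'' and $B_\alpha$ for ``$T_{P^\alpha}$ evaluates to $0$ under the all-zeros extension,'' the not-determined event equals $\bigl(\bigcap_\alpha A_\alpha\bigr) \cap \bigl(\bigcup_\alpha B_\alpha\bigr)$, so a union bound over $\alpha$ gives $I_Q(f_\calP) \leq \sum_{\alpha=1}^u \Pr[A_\alpha \cap B_\alpha]$.

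Next I would analyze $\Pr[A_\alpha \cap B_\alpha]$ for a single tribe. Any block $P^\alpha_j$ not intersecting $Q$ evaluates the same way regardless of how $Q$ is set, so $B_\alpha$ forces every such block to have at least one zero entry (contributing a factor $1 - 2^{-w}$ per non-intersecting block), and in particular $A_\alpha$ must be witnessed by some block $j^*$ with $P^\alpha_{j^*} \cap Q \neq \emptyset$ whose coordinates in $P^\alpha_{j^*} \setminus Q$ are all $1$. Because the blocks of $P^\alpha$ are pairwise disjoint, all these events depend on disjoint coordinates of $y$ and factor accordingly; a union bound over the witness block $j^*$ then yields
$$\Pr[A_\alpha \cap B_\alpha] \;\leq\; (1-2^{-w})^{v-t_\alpha} \cdot 2^{-w} \sum_{j=1}^v \mathsf{1}(P^\alpha_j \cap Q \neq \emptyset) \cdot 2^{|P^\alpha_j \cap Q|},$$
where $t_\alpha$ counts the blocks of $P^\alpha$ that meet $Q$. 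Using $t_\alpha \leq q$ relaxes the first factor to $(1-2^{-w})^{v-q}$.

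Finally I would sum over $\alpha$ and invoke load-balancing. Exchanging the order of summation and applying the $(q,\tau)$-load balancing hypothesis for each fixed $j \in [v]$,
$$\sum_{\alpha=1}^u \mathsf{1}(P^\alpha_j \cap Q \neq \emptyset) \cdot 2^{|P^\alpha_j \cap Q|} \;\leq\; u \tau (q/v),$$
so summing over the $v$ values of $j$ produces a total of $u \tau q$. Multiplying by $2^{-w}(1-2^{-w})^{v-q}$ yields the bound $u(1-2^{-w})^{v-q} \cdot (\tau 2^{-w}) \cdot q$, valid for every $Q$ of size at most $q$, which is the statement of the theorem.

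The only real subtlety is the observation in the second step that a witness of $A_\alpha$ (under $A_\alpha \cap B_\alpha$) must intersect $Q$: a non-intersecting block satisfied under the all-ones extension would equally be satisfied under the all-zeros extension, contradicting $B_\alpha$. This is what forces the weight $2^{|P^\alpha_j \cap Q|}$ to appear in the estimate and matches the load-balancing definition exactly to the quantity we need to control. The remainder is routine bookkeeping, using disjointness of the blocks within a partition for independence and two union bounds.
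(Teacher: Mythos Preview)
Your proposal is correct and follows essentially the same route as the paper. The paper writes the union bound as $I_Q(f_\calP) \leq \sum_{\alpha} I_Q(T_{P^\alpha})$ and then characterizes ``$T_{P^\alpha}$ undetermined'' by exactly your two conditions (every non-intersecting block has a zero; some intersecting block has its $y$-part all ones), noting their independence, bounding the number of non-intersecting blocks by $v-q$, and finishing with load-balancing---your $A_\alpha, B_\alpha$ formulation just makes the monotonicity step behind that union bound more explicit.
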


A similar claim is used in the analysis of \cite{AjtaiL93, ChattopadhyayZ15}. However, \cite{ChattopadhyayZ15} work with the stronger condition that for any $Q \subseteq [n]$ with $|Q| \ll q$, $|Q \cap P^\alpha_j| \ll w$ for most $\alpha$ (as opposed to just having a bound on the expectation of $2^{|Q \cap P^\alpha_j|}$). The above generalization, while straightforward, is important as one cannot hope to satisfy their stronger requirement  for $Q$ very large ($n^{1-o(1)}$) as needed for the proof of \tref{th:main}. 

\subsection{Constructing nice partitions}
We next outline how to construct a collection of partitions which is a \emph{good} design as well as \emph{sufficiently} load-balancing. Fix $v, w$. For a string $\alpha \in [v]^w$, define an associated partition $P^\alpha$ of $[n] \equiv [v w]$ into $w$-sized blocks as follows: 
\begin{itemize} 
\item Write $\{1,\ldots v w\}$ from left to right in $w$ blocks of length $v$ each. Now, permute the $k$'th block by shifting the integers in that block by adding $\alpha_k$ modulo $v$.
\item The $i$'th part now comprises of the elements in the $i$'th position in each of the $w$ blocks.
\end{itemize}
Formally, for $i \in [v]$ $P^\alpha_i = \{(k-1)v + ((i-\alpha_k) \mod v): k \in [w]\}$. As in \cite{ChattopadhyayZ15}, our final function will be $f_\calP$ for $\calP:= \calP_\calU = \{P^\alpha: \alpha \subseteq \calU\}$ for a suitably chosen set of strings $\calU \subseteq [v]^w$; in their work, $\calU$ is chosen by using known \emph{seeded} extractors. 

For intuition, fix a constant $0 < \beta < 1$ and first consider the case where $\calU$ is the set of Reed-Solomon codewords corresponding to degree $\ell \geq 1/\beta$ polynomials over $[v]$ \footnote{Assuming for simplicity that $v$ is a prime.}. Then, a simple calculation shows that $\calP_\calU$ is a $(w-\ell)$-design; that is, no two blocks in the partitions of $\calP$ overlap in more than $\ell$ positions. Further, note that a random element of $\calU$ is $\ell$-wise independent; combining this with standard Chernoff-type bounds for $\ell$-wise independent hash functions implies that $\calP$ is $(q,O_\beta(1))$-load balancing for $q \ll n^{1-\beta}$. Setting the parameters appropriately and applying Theorems \ref{th:biascontrol} and \ref{th:influencecontrol} shows that $f_\calP$ is almost-balanced and $(n^{1-\beta})$-resilient---giving a simpler construction matching \cite{ChattopadhyayZ15}. 

For the main theorem, \tref{th:main}, we follow the same outline as above using \tref{th:expsampler} to get a suitable set of partitions. Indeed, a look at the definition of load-balancing suggests that \tref{th:expsampler} should be relevant as it also involves a similar expression. Concretely, as a first attempt, we take $\calU$ to be the range of the generator from \tref{th:expsampler} and study the bias and influence of $f \equiv f_{\calP_\calU}$. \tref{th:expsampler} immediately implies that $\calP_\calU$ is $(\Omega(n/\log n), O(1))$-load balancing. This combined with \tref{th:influencecontrol} implies that $f$ is $O((\log^2 n)/n)$-strongly resilient.

To complete the proof, we need to show that $f$ is almost-balanced. Unfortunately, $\calP_\calU$ need not be a good-enough design to apply \tref{th:biascontrol} directly. We get around this at a high-level by encoding parts of the output of the oblivious sampler in \tref{th:expsampler} using a Reed-Solomon code; doing so, we get $\calP_\calU$ to be a $(C, w/2,2^{-Cw})$-design for any large constant $C$. We then apply \tref{th:biascontrol} to show that $f$ is almost-balanced. We leave the details of the encoding to the actual proof. 

\subsection{Construction of the oblivious sampler}
The generator is obtained by instantiating the Nisan-Zuckerman \cite{NisanZ96} PRG for small-space machines with $k$-wise independent seeds being fed into the \emph{extractor} rather than truly independent ones. Concretely, let $E:[v^c] \times [D] \to [v]$ be a $(c(\log v)/2,\epsilon)$-extractor (see \sref{sec:prelims} for formal definitions) with error $\epsilon \approx 1/w$. For $\ell = \Theta(w/(\log w))$, let $G_\ell:[D]^\ell \to [D]^w$ generate a $\ell$-wise independent distribution. Then, our generator satisfying \tref{th:expsampler}, $G:[v^c] \times [D]^\ell \to [v]^w$ is defined as follows:
\begin{equation}\label{eq:expintro}
G(x,y) = \bkets{E(x,G_\ell(y)_1), E(x,G_\ell(y)_2), \cdots, E(x,G_\ell(y)_w)}.
\end{equation}
 
While the above construction serves as the base, in our proof of \tref{th:main} we need a generator that satisfies certain additional constraints: output strings on different seeds should be far from each other. We satisfy these constraints by careful modifications of the above construction. 

\subsection{Analyzing bias and resilience under limited independence}
 The extension to obtain resilient functions under limited independence, as in \tref{th:mainkwise}, follows from our analysis of \tref{th:biascontrol} along with the following claim. The core of the proof of \tref{th:biascontrol}  involves calculating the biases of disjunctions of a small number of Tribes, i.e., functions of the form $h = \vee_{\alpha \in I} T_{P^\alpha}$, where $P^\alpha$ are partitions of $[n]$ and $|I|$ is not too large. We directly show that  such functions are \emph{fooled} to within error $\epsilon$ by $O(|I|^2 w^2 + |I| w \log(1/\epsilon))$-wise independence; that is, for any $O(|I|^2 w^2 + |I| w \log(1/\epsilon))$-wise independent distribution $\calD$ on $\zo^n$,
$$\abs{\pr_{x \in_u \zo^n}[ \vee_{\alpha \in I} T_{P^\alpha}(x) = 1] - \pr_{x \sim \calD}[ \vee_{\alpha \in I} T_{P^\alpha}(x) = 1]} < \epsilon.$$
The proof of the above claim involves standard approximations based on the inclusion-exclusion principle and some inequalities involving elementary symmetric polynomials.

\section{Preliminaries}\label{sec:prelims}
\subsection{Pseudorandomness}
We first recall some standard notions from pseudorandomness.
\begin{definition}
A collection of random variables $\{X_1,\ldots,X_m\}$ is $k$-independent if for any $I \subseteq [m], |I| \leq k$, the random variables $\{X_i: i \in I\}$ are independent of each other. 
\end{definition}

\begin{definition}
For a distribution $X$, $H_\infty(X) = \min_{x \in Support(X)} \log(1/\pr[X=x])$. 
\end{definition}

\begin{definition}[\cite{NisanZ96}]
A function $E:[N] \times [D] \to [M]$ is a $(k,\epsilon)$-\emph{strong extractor} if for every distribution $X$ over $[N]$ with $H_\infty(X) \geq k$, and $Y \in_u [D]$, $(Y, E(X,Y))$ is $\epsilon$-close in statistical distance to the uniform distribution over $[D] \times [M]$.
\end{definition}

We need the following sampling properties of strong extractors, c.f., \cite{Zuckerman97}.
\begin{lemma}\label{lm:extsampler}
Let $E:[N] \times [D] \to [M]$ be a $(k,\epsilon)$-strong extractor. Then, for all functions $g_1,\ldots,g_D:[M] \to \zo$, with $\mu = (1/D) \sum_i \E_{x \in_u [M]}[g_i(x)]$, there are at most $2^k$ elements $x \in [N]$ such that
$$\abs{\frac{1}{D} \sum_{z \in [D]}  g_z (E(x,z)) - \mu} \geq \epsilon.$$
\end{lemma}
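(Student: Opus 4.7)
The plan is to prove this by the standard reduction from extraction to sampling. The key observation is that although there are $D$ different functions $g_1, \ldots, g_D$, they can all be packaged into a single Boolean test $T : [D] \times [M] \to \zo$ defined by $T(z, m) = g_z(m)$. Under the uniform distribution on $[D] \times [M]$ we have $\E[T] = \mu$ by the definition of $\mu$, while for any source $X$ on $[N]$ paired with an independent uniform $Y \in_u [D]$,
$$\E[T(Y, E(X, Y))] = \E_X\bigl[ \tfrac{1}{D} \sum_{z \in [D]} g_z(E(X, z)) \bigr].$$
So the empirical average whose deviation from $\mu$ we want to bound is precisely the expectation under $X$ of the test $T$ applied to the strong-extractor output.

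I would then argue by contradiction. Split the bad set into its two one-sided pieces
$$B^{+} = \bigl\{x : \tfrac{1}{D}\sum_z g_z(E(x,z)) \geq \mu + \epsilon\bigr\}, \qquad B^{-} = \bigl\{x : \tfrac{1}{D}\sum_z g_z(E(x,z)) \leq \mu - \epsilon\bigr\},$$
and suppose for contradiction that $|B^{+}| > 2^{k}$. Then the uniform distribution $X$ on $B^{+}$ has min-entropy strictly greater than $k$, so by the $(k, \epsilon)$-strong-extractor property the distribution $(Y, E(X, Y))$ is within statistical distance $\epsilon$ of the uniform distribution on $[D] \times [M]$. In particular the expectation of the Boolean test $T$ under $(Y, E(X,Y))$ is within $\epsilon$ of $\mu$, but by definition of $B^{+}$ this expectation is at least $\mu + \epsilon$, a contradiction. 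The analogous bound on $|B^{-}|$ follows by repeating the argument with the complementary functions $g_z' = 1 - g_z$, whose averages satisfy $(1/D)\sum_z \E[g_z'] = 1 - \mu$; a deviation below $\mu$ by $\epsilon$ for the original family is exactly a deviation above $1 - \mu$ by $\epsilon$ for the new one.

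The proof is essentially routine; the only mildly subtle point is that a two-sided deviation set does not by itself yield a Boolean test with a guaranteed one-sided deviation, which is why the bad set must be split into $B^{+}$ and $B^{-}$ and each half handled separately by the one-sided extractor argument. The same packaging trick $T(z,m) = g_z(m)$ is what lets one accommodate $D$ distinct functions $g_z$ rather than a single test function, and this is really the only place where the ``strong'' (seed-revealing) property of the extractor is used in an essential way.
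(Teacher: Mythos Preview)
The paper does not give its own proof of this lemma; it is quoted as a known sampling property of strong extractors with a reference to \cite{Zuckerman97}. Your argument is exactly the standard extractors-as-samplers reduction that appears there: package the $D$ functions into the single Boolean test $T(z,m)=g_z(m)$ on $[D]\times[M]$, and apply the strong-extractor guarantee to the uniform distribution on a hypothetically large one-sided bad set. So your approach is correct and is, in effect, the same proof the paper is pointing to.

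One small quantitative point: as written, your argument bounds $|B^{+}|$ and $|B^{-}|$ each by $2^{k}$, which yields only $|B^{+}\cup B^{-}|\le 2^{k+1}$ rather than the $2^{k}$ stated in the lemma. This factor of two is inherent to the one-sided splitting and is completely harmless for every use of the lemma in the paper (it is absorbed into the constant $c$ in \lref{lm:mtech}). A related cosmetic issue is that with non-strict inequalities on both sides ($\ge \mu+\epsilon$ in the definition of $B^{+}$ and statistical distance $\le\epsilon$ from the extractor), the two resulting bounds $\E[T]\ge\mu+\epsilon$ and $\E[T]\le\mu+\epsilon$ are consistent at equality, so you do not literally obtain a contradiction; defining $B^{+}$ with a strict deviation $>\epsilon$ fixes this without changing anything of substance.
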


We also need the following explicit extractor construction due to Zuckerman \cite{Zuckerman97}:
\begin{theorem}\label{th:zextractor}
There exists a constant $C \geq 1$ such that for all $\epsilon > 0$, and $ 1 \leq M \leq N^{1/3}$, there is an explicit $((\log N)/2, \epsilon)$-strong extractor $E:[N] \times [D] \to [M]$ with $D = ((\log N)/\epsilon)^C$. We also assume without loss of generality that $E(X,Y)$ is uniformly random over $[M]$ when $X,Y$ are uniformly random over $[N]$ and $[D]$ respectively. 
\end{theorem}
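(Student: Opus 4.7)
Because \tref{th:zextractor} is quoted directly from Zuckerman's paper, the paper does not contain its proof; my plan is to sketch a plausible re-derivation using standard pseudorandomness tools. The parameter regime is that of constant entropy rate sources: the source lives on $\log N$ bits with min-entropy $(\log N)/2$, and the desired output length $\log M$ is at most $(\log N)/3$. In this regime the leftover hash lemma already yields a $((\log N)/2,\epsilon)$-strong extractor via a $2$-universal hash family from $[N]$ to $[M]$: if $m=\log M \leq (\log N)/2 - 2\log(1/\epsilon)$, the output is $\epsilon$-close to uniform conditioned on the hash seed. The seed length of this construction is $O(\log N)$, however, rather than the claimed $O(\log((\log N)/\epsilon))$.

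To compress the seed down to the near-optimal $O(\log\log N + \log(1/\epsilon))$ bound predicted by the Radhakrishnan--Ta-Shma lower bound, I would use a Trevisan-style construction: encode $X$ under an explicit error-correcting code $C:\zo^{\log N}\to \zo^{\bar n}$ of constant rate and relative distance $1/2-\epsilon$, then use the seed to index a Nisan--Wigderson combinatorial design with $\log M$ sets of near-optimal size and controlled pairwise intersection. The output is the list of coordinates of $C(X)$ selected by the design. The analysis is contrapositive: if the output is $\epsilon$-far from uniform, a hybrid/reconstruction argument produces a small ``predictor'' that, on a short advice string, agrees with $C(X)$ on a $(1/2 + \epsilon)$-fraction of coordinates; list decoding of $C$ then yields too few candidates for $X$ to be consistent with min-entropy $(\log N)/2$, a contradiction.

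The strong extractor property can be arranged because the design-based construction's output depends on the seed only through the chosen coordinates; a standard ``seed-appending'' reduction turns any extractor of the right parameters into a strong one with at most a logarithmic increase in seed length. The auxiliary condition that $E(X,Y)$ is uniform when both $X,Y$ are uniform can be arranged by a mild modification, e.g., XORing the output with the image of $Y$ under a fixed surjection onto $[M]$, which preserves the extractor property while making the output uniform on uniform inputs.

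The main obstacle is constructing the code $C$ and the Nisan--Wigderson design with parameters sharp enough to simultaneously achieve the seed length $O(\log((\log N)/\epsilon))$ and an output length up to $N^{1/3}$. This typically requires carefully chosen algebraic codes (Reed--Solomon or Parvaresh--Vardy) together with explicit designs over $\mathbb{F}_q$; the intricate list-decoding and combinatorial analyses that make these parameters tight are the technical heart of Zuckerman's construction, and would be the main difficulty in carrying out the plan in detail.
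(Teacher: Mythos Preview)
You are correct that the paper contains no proof of \tref{th:zextractor}: it is stated in the preliminaries as a known result of Zuckerman and used as a black box thereafter. There is therefore nothing in the paper to compare your sketch against.

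As a historical aside, your proposed re-derivation is anachronistic: the Trevisan-style ``encode then puncture via an NW design'' paradigm, together with list-decoding reconstruction, postdates \cite{Zuckerman97}. Zuckerman's original construction for constant entropy rate proceeds instead by repeatedly hashing the source into blocks (exploiting that a source with entropy rate $1/2$ can be viewed as a block source) and extracting from each block via the leftover hash lemma, recycling a short seed across blocks. Either route yields the stated parameters, and your sketch is a perfectly legitimate modern proof; just be aware it is not the argument the citation points to.
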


Finally, we also need the following explicit generator of $\ell$-wise independent distributions which follows for instance from  the \emph{Reed-Solomon} code.
\begin{definition}[Shift-Hamming distance]
For two sequences $x,x'$ over an alphabet $[B]^d$, let $d_H(x,x') = \min_{a \in [B]} |\{i \in [d]: x_i - x_i' \neq a \mod B\}|$. 
\end{definition}

\begin{lemma}\label{lm:kwisecode}
For all prime $v$ and $1 \leq \ell \leq m \leq v$, there exists an explicit function $G_\ell:[v]^\ell \to [v]^m$ such that 
\begin{itemize}
\item For any $y \neq y' \in [v]^\ell$, $d_H(y,y') \geq m - \ell$.
\item For $y \in_u [v]^\ell$, $G_\ell(y)$ is $\ell$-wise independent.
\end{itemize} 
\end{lemma}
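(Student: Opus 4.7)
The plan is to define $G_\ell$ by evaluating a carefully chosen variant of the Reed-Solomon code---the modification being that the polynomial carries no constant term, which is the one tweak needed to avoid the degenerate situation that would otherwise destroy the shift-Hamming guarantee. Since $v$ is prime, I work over $\F_v$. Fix $m$ distinct nonzero evaluation points $a_1,\ldots,a_m \in \F_v \setminus \{0\}$, and for $y = (y_1,\ldots,y_\ell) \in \F_v^\ell$ associate the polynomial
\[ p_y(X) \;=\; y_1 X + y_2 X^2 + \cdots + y_\ell X^\ell \;\in\; \F_v[X], \]
of degree at most $\ell$ with $p_y(0) = 0$. Then set $G_\ell(y) := (p_y(a_1),\ldots,p_y(a_m))$; explicitness is immediate as this is just polynomial evaluation.

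For the shift-Hamming bound, fix $y \neq y'$ and let $q := p_y - p_{y'}$. Then $q$ is a nonzero polynomial of degree at most $\ell$ with $q(0)=0$, so $q$ cannot be a nonzero constant. For any shift $a \in \F_v$, the polynomial $q(X)-a$ is nonzero---if $a=0$ because $q \neq 0$, and if $a \neq 0$ because $q(0)-a = -a \neq 0$---and has degree at most $\ell$, hence at most $\ell$ roots. Thus $|\{i : q(a_i)=a\}| \leq \ell$ for every $a$, and minimizing over $a$ yields $d_H(G_\ell(y),G_\ell(y')) = m - \max_a |\{i : q(a_i)=a\}| \geq m - \ell$.

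For $\ell$-wise independence, $y \mapsto G_\ell(y)$ is $\F_v$-linear; restricting to any $\ell$ coordinates $i_1 < \cdots < i_\ell$ yields a linear map $\F_v^\ell \to \F_v^\ell$ with matrix $M_{jk} = a_{i_j}^k$, which factors as $M = \mathrm{diag}(a_{i_1},\ldots,a_{i_\ell}) \cdot V$ where $V_{jk} = a_{i_j}^{k-1}$ is the standard Vandermonde on the nodes $a_{i_j}$. Both factors are invertible since the $a_{i_j}$ are distinct and nonzero, so $M$ is a bijection; hence those $\ell$ coordinates of $G_\ell(y)$ are jointly uniform on $\F_v^\ell$ whenever $y$ is uniform.

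I do not anticipate any serious obstacle: the ``no constant term'' design is the single observation that simultaneously buys both properties, after which the shift-distance bound reduces to counting roots of a polynomial of degree at most $\ell$ and the independence reduces to a diagonal-times-Vandermonde determinant. The only mild subtlety is the boundary case $m = v$, where $\F_v$ supplies only $v - 1$ nonzero evaluation points; I would handle this by a minor adjustment such as appending a coordinate equal to $y_1$ while using $v-1$ evaluation points for the remaining $m-1$ coordinates, which preserves both guarantees via essentially the same arguments.
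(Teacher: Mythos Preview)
Your approach matches the paper's---its proof is the single line ``Follows by using the Reed--Solomon code over $[v]$ of degree $\ell$ and length $m$''---and you are in fact more careful on a point the paper glosses over: the naive encoding $p_y(X)=\sum_{j=0}^{\ell-1} y_j X^j$ would \emph{fail} the shift-Hamming bound (two inputs differing only in the constant term give codewords at shift-distance $0$), and your remedy of forcing $p_y(0)=0$ is exactly the right fix. For $m\le v-1$ your argument is complete and correct.

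The one genuine gap is the boundary case $m=v$. Your proposed patch of appending the coordinate $y_1$ does \emph{not} preserve the shift-Hamming guarantee. Take $\ell=2$ and $y-y'=(d,c)$ with $d\neq 0$ and $c\neq 0$ chosen so that $cX^2+dX-d$ has two roots in $\F_v$ (e.g.\ $d=1$ and $4c=3$, so the discriminant is $d(d+4c)=4$); both roots are nonzero since the constant term $-d$ is nonzero. Then for the shift $a=d$ you get those two matches among the $v-1$ evaluation coordinates \emph{plus} the appended coordinate, for a total of $\ell+1$ matches and hence $d_H\le m-\ell-1$. (Concretely, for $v=5$, $y=(1,2)$, $y'=(0,0)$, the difference vector is $(3,0,1,1,1)$, and shift $a=1$ matches three positions.) The $\ell$-wise independence half of your patch does go through via the Vandermonde argument you sketch; the failure is specifically in the distance bound. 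Since every use of this lemma in the paper has $m\ll v$, the gap is harmless for the applications, but your assertion that ``essentially the same arguments'' cover $m=v$ is not correct as written; you would need either a different extra coordinate or to restrict the lemma to $m\le v-1$.
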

\begin{proof}
Follows by using the Reed-Solomon code over $[v]$ of degree $\ell$ and length $m$. 
\end{proof}

We need the following theorem of \cite{DeETT10} on fooling read-once Conjunctive Normal Formulas (CNFs) using limited independence. A CNF $f:\zo^n \to \zo$ is said to be a readonce formula if each variable appears in at most one clause. The width of a CNF is the maximum length of any clause in the CNF. 
\begin{theorem}\label{th:rcnf}
There exists a constant $C$ such that the following holds for all $0 < \epsilon < 1$ and $t \geq C w \log(1/\epsilon)$: For any width $w$ read-once CNF $f:\zo^n \to \zo$ and $t$-wise independent distribution $\calD$ on $\zo^n$, 
$$\abs{\pr_{x \lfta \calD}[f(x) = 1] - \pr_{x \in_u \zo^n}[f(x) = 1]} \leq \epsilon.$$
\end{theorem}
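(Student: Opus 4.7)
The plan is to combine a pointwise Bonferroni truncation of the inclusion-exclusion expansion of $1-f$ with the fact that $t$-wise independence preserves the expectation of every polynomial of degree at most $t$, then to split into two regimes depending on the expected number of violated clauses.

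To set up, write $f = \bigwedge_{i=1}^m C_i$; by read-onceness the clauses $C_i$ are on pairwise disjoint variable sets $S_i \subseteq [n]$ with $|S_i| \leq w$. Denote $A_i := \{C_i = 0\}$, $p_i := \Pr_U[A_i] = 2^{-|S_i|} \leq 1/2$, $\mu := \sum_i p_i$, and $N(x) := \sum_i \mathbf{1}_{A_i}(x)$. The essential features I would exploit are: (i) the $A_i$ are mutually independent under $U$, so $\E_U[\binom{N}{k}] = e_k(p_1,\ldots,p_m)$, the $k$-th elementary symmetric polynomial; (ii) $\binom{N}{k} = \sum_{|S|=k}\prod_{i\in S}\mathbf{1}_{A_i}$ is a polynomial of degree $kw$ in $x$; and (iii) the pointwise Bonferroni inequality
\begin{equation*}
\Bigl|\mathbf{1}[N\geq 1] - \sum_{k=1}^{k_0}(-1)^{k+1}\binom{N}{k}\Bigr| \leq \binom{N}{k_0+1}.
\end{equation*}
Together, provided $(k_0+1)w \leq t$, these give the master bound
\begin{equation*}
|\Pr_\calD[f=1] - \Pr_U[f=1]| \leq 2\, e_{k_0+1}(p_1,\ldots,p_m) \leq 2\Bigl(\frac{e\mu}{k_0+1}\Bigr)^{k_0+1}.
\end{equation*}

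Next I would split on $\mu$. When $\mu \leq 2\ln(1/\epsilon)$, set $k_0+1 = \max\{2e\mu,\, 2\log_2(2/\epsilon)\} = \Theta(\log(1/\epsilon))$; the master bound is then at most $\epsilon$, and the degree constraint is met by $t \geq Cw\log(1/\epsilon)$. When $\mu > 2\ln(1/\epsilon)$, I would greedily extract a sub-collection $G \subseteq [m]$ with $\sum_{i\in G}p_i \in [\ln(1/\epsilon),\, \ln(1/\epsilon)+1/2]$ (possible since each $p_i\leq 1/2$) and work with $f_G := \bigwedge_{i \in G}C_i$, which is itself a read-once width-$w$ CNF with $\mu$-value at most $\ln(1/\epsilon)+1/2$. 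Applying the small-$\mu$ bound to $f_G$ gives $\Pr_\calD[f_G=1] \leq \Pr_U[f_G=1] + \epsilon \leq e^{-\ln(1/\epsilon)}+\epsilon \leq 2\epsilon$; since $f\Rightarrow f_G$, both $\Pr_\calD[f=1]$ and $\Pr_U[f=1]$ lie in $[0, 2\epsilon]$, so their difference is at most $2\epsilon$. Rescaling $\epsilon$ by a constant absorbs all constants and yields the theorem.

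The main obstacle, and the reason read-onceness is essential, will be keeping the degree budget on both sides of the Bonferroni inequality: if clauses could overlap, products $\mathbf{1}_{A_{i_1}}\cdots\mathbf{1}_{A_{i_k}}$ could collapse to degree much less than $kw$, and the symmetric-polynomial identity $\E_U[\binom{N}{k}]=e_k(p)$ (which uses independence of the $A_i$'s) would fail. The second, quantitative, difficulty is the large-$\mu$ regime: a direct application of the small-$\mu$ argument there would demand $t = \Omega(\mu w)$, which can be as large as $\Theta(nw)$. The bucketing step sidesteps this by using the crude observation that once a moderately heavy sub-collection $G$ exists, both probabilities are already $O(\epsilon)$, so only separate upper bounds on each are needed rather than an accurate joint estimate.
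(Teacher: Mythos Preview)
The paper does not prove Theorem~\ref{th:rcnf}; it is quoted in the preliminaries as a result of \cite{DeETT10} and used as a black box (in the proof of Corollary~\ref{cor:kwiseinfluence}). So there is no ``paper's own proof'' to compare against.

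Your argument is correct. The Bonferroni truncation of $\mathbf{1}[N\geq 1]$, the identification of $\binom{N}{k}$ with the degree-$\leq kw$ polynomial $\sum_{|S|=k}\prod_{i\in S}\mathbf{1}_{A_i}$, and the bound $\E[\binom{N}{k_0+1}]\leq (e\mu/(k_0+1))^{k_0+1}$ all go through exactly as you state; read-onceness is used (only) to guarantee that the $A_i$ are independent under $U$, so that $\E_U[\binom{N}{k}]=e_k(p)$, and it is this identity---not the degree count, which would survive overlaps because $x_j^2=x_j$---that would break without disjoint clauses. The two-regime split with the greedy sub-CNF in the heavy case is the right way to avoid the degree requirement scaling with $\mu$; it is also the device used in \cite{DeETT10}, so your proof is essentially a clean reconstruction of theirs. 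Incidentally, the paper's own argument for fooling disjunctions of a few Tribes (the unnumbered Claim preceding the proof of Corollary~\ref{cor:biascontrol}) follows the same Bonferroni-plus-symmetric-polynomial template.
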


\subsection{Probability}
We next review some results from probability theory. 

We need Janson's inequality from probability theory, c.f., \cite{AlonS11} that allows us to bound the bias of \emph{monotone Disjunctive Normal Formulas (DNFs)}. 
\begin{theorem}\label{th:janson1}
Let $S_1,\ldots,S_m \subseteq [n]$ be a collection of sets and define $f_i:\zo^n \to \zo$ by $f_i(x) = \wedge_{j \in S_i} x_j$. Let $x \in_u \zo^n$. Then,
$$\prod_{i=1}^m \pr\sbkets{\neg f_i(x) = 1} \leq \pr\sbkets{\bigwedge_{i=1}^m (\neg f_i(x)) = 1} \leq \exp\bkets{\frac{\Delta}{1-\gamma}} \prod_{i=1}^m \pr\sbkets{\neg f_i(x) = 1},$$
where $\gamma = \max_{i=1}^m \pr[f_i(x) = 1]$ and 
$$\Delta = \sum_{i \neq j: S_i \cap S_j \neq \emptyset} \pr[f_i(x) \wedge f_j(x) = 1].$$
\end{theorem}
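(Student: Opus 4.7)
Set $A_i = \{f_i(x) = 1\}$; each $A_i$ is a monotone increasing event on $\zo^n$ (under the coordinate-wise order), so each $\neg A_i$ is decreasing. The lower bound is then immediate from the Harris/FKG correlation inequality, which states that a conjunction of decreasing events under the uniform product measure is at least the product of their individual probabilities.

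For the upper bound I would follow the standard telescoping approach. Write $\Pr[\wedge_i \neg A_i] = \prod_i \Pr[\neg A_i \mid \wedge_{j<i} \neg A_j]$ and aim to bound each factor by $\Pr[\neg A_i]$ times a small multiplicative correction. Fix $i$ and partition $\{1,\ldots,i-1\}$ into $D_i = \{j : S_j \cap S_i \neq \emptyset\}$ and $C_i = \{j : S_j \cap S_i = \emptyset\}$, and set $E_C = \wedge_{j \in C_i} \neg A_j$, $E_D = \wedge_{j \in D_i} \neg A_j$. The key observation is that $A_i$ depends only on the coordinates in $S_i$, which are disjoint from the coordinates governing $E_C$; hence $A_i$ is independent of $E_C$.

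The main step is the inequality $\Pr[A_i \mid E_C \wedge E_D] \geq \Pr[A_i] - \sum_{j \in D_i} \Pr[A_i \wedge A_j]$. To prove it, I write $\Pr[A_i \mid E_C \wedge E_D] = \Pr[A_i \wedge E_D \mid E_C]/\Pr[E_D \mid E_C]$, use $\Pr[E_D \mid E_C] \leq 1$ for the denominator, and expand the numerator as $\Pr[A_i \mid E_C] - \Pr[A_i \wedge (\vee_{j \in D_i} A_j) \mid E_C]$. The first term equals $\Pr[A_i]$ by the independence above, and the second is at most $\sum_{j \in D_i} \Pr[A_i \wedge A_j \mid E_C]$ by a union bound; each such summand is in turn at most $\Pr[A_i \wedge A_j]$ by a second application of FKG, since the increasing event $A_i \wedge A_j$ is negatively correlated with the decreasing event $E_C$ under the product measure.

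Combining gives $\Pr[\neg A_i \mid \wedge_{j<i} \neg A_j] \leq (1 - \Pr[A_i])\bigl(1 + \tfrac{\sum_{j \in D_i} \Pr[A_i \wedge A_j]}{1-\Pr[A_i]}\bigr) \leq \Pr[\neg A_i]\, \exp\bigl((1-\gamma)^{-1} \sum_{j \in D_i} \Pr[A_i \wedge A_j]\bigr)$, using $1+x \leq e^x$ and $\Pr[\neg A_i] \geq 1 - \gamma$. Taking the product over $i$, the exponents telescope to at most $\Delta/(1-\gamma)$ since $\sum_i \sum_{j \in D_i} \Pr[A_i \wedge A_j] \leq \Delta$ (in fact at most $\Delta/2$, but the crude bound suffices), yielding the claimed upper bound. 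The main subtlety is the conditional FKG step: even though $A_i \wedge A_j$ and $E_C$ may share coordinates (through the variables in $S_j$ for $j \in D_i$), what matters is that the first is globally increasing and the second globally decreasing, which is exactly the regime where Harris/FKG gives negative correlation.
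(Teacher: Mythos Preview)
Your proof is correct and is the standard textbook argument for Janson's inequality (as in Alon--Spencer). Note, however, that the paper does not actually prove this statement: it is quoted as a known result with a citation to \cite{AlonS11}, so there is no ``paper's own proof'' to compare against. Your write-up matches the classical proof that reference contains.
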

The following immediate corollary is more convenient for us. 
\begin{theorem}\label{th:janson}
Let $S_1,\ldots,S_m \subseteq [n]$ be a collection of sets and define $f_i:\zo^n \to \zo$ by $f_i(x) = \wedge_{j \in S_i} x_j$. Let $x \in_u \zo^n$, and $Z_i = f_i(x)$ for $i \in [m]$. Then,
$$\prod_{i=1}^m \E[(1 - Z_i)] \leq \E\sbkets{\prod_{i=1}^m \bkets{1-Z_i}} \leq \exp\bkets{\frac{\Delta}{1-\gamma}} \cdot \prod_{i=1}^m \E[(1 - Z_i)],$$
where $\gamma = \max_{i=1}^m \E[Z_i]$ and 
$$\Delta = \sum_{i \neq j: S_i \cap S_j \neq \emptyset} \E[Z_i Z_j].$$
\end{theorem}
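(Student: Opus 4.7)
The plan is to observe that Theorem \ref{th:janson} is nothing more than Theorem \ref{th:janson1} rewritten in expectation notation. First I would note that since each $Z_i = f_i(x)$ takes values in $\{0,1\}$, the random variable $1 - Z_i$ is itself the indicator of the event $\{\neg f_i(x) = 1\}$, and the product $\prod_{i=1}^m (1 - Z_i)$ is the indicator of the event $\bigwedge_{i=1}^m \{\neg f_i(x) = 1\}$. Consequently
\[
\E[1 - Z_i] \;=\; \pr[\neg f_i(x) = 1], \qquad \E\sbkets{\prod_{i=1}^m (1 - Z_i)} \;=\; \pr\sbkets{\bigwedge_{i=1}^m (\neg f_i(x)) = 1}.
\]
Similarly, $\E[Z_i] = \pr[f_i(x) = 1]$ and $\E[Z_i Z_j] = \pr[f_i(x) \wedge f_j(x) = 1]$ (since products of indicators are indicators of conjunctions), so the quantities $\gamma$ and $\Delta$ in the two theorem statements agree term-by-term.

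With these identifications in hand, the double inequality in Theorem \ref{th:janson} is simply the double inequality of Theorem \ref{th:janson1} transcribed into expectation language, so the result is immediate and requires no further work. There is no real obstacle: the only thing to verify is the elementary fact that for $\{0,1\}$-valued random variables, products correspond to conjunctions and expectations correspond to probabilities, both of which follow from the definitions.
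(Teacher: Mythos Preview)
Your proposal is correct and matches the paper's approach exactly: the paper introduces Theorem~\ref{th:janson} as ``the following immediate corollary'' of Theorem~\ref{th:janson1} without further proof, and your argument simply spells out that translation from probabilities of indicator events to expectations of $\{0,1\}$-valued random variables.
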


We need the following instantiation of Holder's inequality.
\begin{fact}\label{fct:Holder}
Let $Y_1,\ldots,Y_r$ be real-valued random variables. Then, 
$$\E\sbkets{|Y_1 Y_2 \cdots Y_r|} \leq \prod_{i=1}^r \E\sbkets{|Y_i|^r}^{1/r}.$$
\end{fact}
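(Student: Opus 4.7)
The plan is to deduce the bound as a specialization of the generalized H\"older inequality with equal conjugate exponents. Recall that if $p_1, \ldots, p_r \in [1,\infty]$ satisfy $\sum_{i=1}^r 1/p_i = 1$, then for any real-valued random variables $X_1, \ldots, X_r$ one has $\E[|X_1 X_2 \cdots X_r|] \leq \prod_{i=1}^r \E[|X_i|^{p_i}]^{1/p_i}$. Choosing $p_i = r$ for every $i$ is admissible since $r \cdot (1/r) = 1$, and substituting gives exactly the claim.

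If one prefers a self-contained derivation, I would proceed by a normalization plus AM-GM. First dispose of the degenerate case: if $\E[|Y_i|^r] = 0$ for some $i$ then $Y_i = 0$ almost surely and both sides vanish. Otherwise, set $Z_i = Y_i / \E[|Y_i|^r]^{1/r}$, so that $\E[|Z_i|^r] = 1$ for each $i$. Pointwise AM-GM yields $|Z_1 Z_2 \cdots Z_r| \leq \tfrac{1}{r}\sum_{i=1}^r |Z_i|^r$; taking expectations gives $\E[|Z_1 Z_2 \cdots Z_r|] \leq 1$, and undoing the normalization by multiplying through by $\prod_i \E[|Y_i|^r]^{1/r}$ produces the desired bound. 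A third route is induction on $r$: the base case $r = 2$ is Cauchy-Schwarz, and the inductive step applies two-variable H\"older with exponents $(r/(r-1),\, r)$ to peel off one factor and reduce to $r-1$ variables, at which point the inductive hypothesis applies.

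There is no genuine obstacle here beyond the bookkeeping of the uniform-exponent specialization; the content is entirely classical, and any of the three routes above yields a one-line proof.
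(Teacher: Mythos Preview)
Your proposal is correct; each of the three routes you sketch is a valid derivation of the stated inequality. The paper itself does not prove this statement at all: it is labeled a Fact and introduced merely as ``the following instantiation of Holder's inequality,'' with no argument given. So there is nothing to compare against beyond noting that your first route (specialize generalized H\"older with $p_i = r$) is exactly the justification the paper implicitly invokes by calling it an instantiation of H\"older.
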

\begin{definition}
For any collection of variables $x_1,\ldots,x_m$ and $1 \leq a \leq m$, $S_a(x_1,\ldots,x_m) = \sum_{I \subseteq [m], |I| = a} \prod_{i \in I} x_i$ denotes the $a$'th symmetric polynomial.
\end{definition}

We will use the following bound on moments of symmetric polynomials of independent Bernoulli random variables; see the appendix for proof. 
\begin{lemma}\label{lm:binomialmoments}
Let $0 < p < 1/2$ and $X_1,\ldots,X_v$ be independent indicator random variables with $\pr[X_i = 1] = p$. Then, for $k \geq 2 e^2 v p$, and all $r \geq 1$, 
$$\E[S_k(X_1,\ldots,X_v)^r] \leq (1/2)^{k}.$$
\end{lemma}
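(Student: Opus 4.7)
Because each $X_i \in \{0,1\}$, the monomial $\prod_{i\in I}X_i$ equals $1$ exactly when every $X_i$ with $i\in I$ equals $1$. Summing over $k$-subsets of $[v]$ collapses the symmetric polynomial into a single binomial coefficient:
\[
S_k(X_1,\ldots,X_v) = \binom{T}{k}, \qquad T := X_1+\cdots+X_v \sim \mathrm{Bin}(v,p).
\]
The lemma thus reduces to a moment bound on $\binom{T}{k}$ in the regime where $k$ is at least a large constant multiple of the mean $\E[T]=vp$.

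The plan is to combine three standard ingredients. First, the deterministic inequality $\binom{T}{k} \leq (eT/k)^k$ (with $\binom{T}{k}=0$ when $T<k$), which converts the moment into a polynomial in $T$ times a tail indicator. Second, the multiplicative Chernoff bound $\Pr[T\geq t]\leq (evp/t)^t$ for $t\geq vp$, which under the hypothesis $k\geq 2e^2 vp$ already yields $\Pr[T\geq k]\leq (1/(2e))^k$ and continues to decay super-exponentially as $t$ grows beyond $k$. Third, a direct expansion
\[
\E\!\left[\binom{T}{k}^r\right] = \sum_{t\geq k}\Pr[T=t]\,\binom{t}{k}^r \;\leq\; \sum_{t\geq k}\Pr[T=t]\,(et/k)^{kr}.
\]
I would weight each summand by the Chernoff tail and show that the series is dominated by a geometric progression with first term of order $(1/(2e))^k$ and ratio at most $1/2$, which would already put the sum comfortably beneath $(1/2)^k$.

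The main obstacle I anticipate is making the balancing uniform in $r$: the factor $(et/k)^{kr}$ grows with $r$ and threatens to overwhelm the Chernoff decay. The saving grace is that the Chernoff tail decays like $\exp(-t\log(t/(evp)))$, whose exponent scales linearly in $t\log(1/p)$ once $t\gg vp$. Substituting $t=ks$ with $s\geq 1$ turns the logarithm of the summand into $k\bigl(r\log(es) - s\log(ks/(evp))\bigr)$, and the slack provided by $k/(vp)\geq 2e^2$ forces this quantity to decrease past a small threshold in $s$ for every $r\geq 1$, pinning the dominant contribution to $t$ near $k$. Carefully tracking the constants through this saddle-point analysis, and then summing the remaining geometric tail, should give the claimed absolute bound $\E[S_k(X_1,\ldots,X_v)^r]\leq (1/2)^k$; the routine arithmetic needed to verify the geometric ratio and first-term estimate is what I expect to be relegated to the appendix.
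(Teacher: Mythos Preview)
Your reduction $S_k(X_1,\ldots,X_v)=\binom{T}{k}$ with $T=\sum_i X_i$ and the decomposition $\E\bigl[\binom{T}{k}^r\bigr]=\sum_{t\ge k}\Pr[T=t]\,\binom{t}{k}^r$ are correct and lead to essentially the same sum the paper works with (the paper groups tuples $(I_1,\ldots,I_r)$ of $k$-sets by their union and arrives at $\sum_{t}\binom{v}{t}p^t\binom{t}{k}^r$). The gap is the step $\binom{t}{k}^r\le(et/k)^{kr}$. At $t=k$ the true value is $\binom{k}{k}^r=1$, but your bound inflates it to $e^{kr}$; the hypothesis $k\ge 2e^2vp$ only guarantees the Chernoff factor $(evp/k)^k\le(1/(2e))^k$, so your bound on that single summand can be as large as $e^{kr}\cdot(1/(2e))^k=e^{k(r-1)}/2^k$, which for every $r\ge 2$ already exceeds~$1$. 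In your own notation this is $s=1$, where (taking $evp/k=1/(2e)$) your log-summand $k\bigl[r\log(es)-s\log(ks/(evp))\bigr]$ equals $k[r-1-\log 2]>0$. Thus the saddle-point heuristic fails: the slack from $k/(vp)\ge 2e^2$ buys a fixed $(2e)^k$ of room independent of $r$, whereas the looseness you introduce via Stirling grows like $e^{kr}$.

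The paper's manipulation avoids raising the crude Stirling bound to the $r$th power. After $\binom{v}{t}p^t\le(evp/t)^t$ it pairs powers differently, rewriting the summand as $(evp/t)^{t-r}\cdot(e^2vp/k)^r\le(evp/t)^{t-r}(1/2)^r$ with $evp/t\le 1/(2e)$; this sums geometrically over $t\ge k$ to $O((1/2)^k)$ once $t\ge r$ throughout, i.e.\ $r\le k$. (Some such restriction is in fact necessary: the single term $\Pr[T=k{+}1]\cdot(k{+}1)^r$ diverges as $r\to\infty$ whenever $v>k$, so the bound cannot literally hold for all $r$; in the paper's only use of the lemma one has $k>2r$.) To repair your route you must keep $\binom{t}{k}^r$ intact long enough to match each factor $et/k$ against one factor $evp/t$, rather than inflating $\binom{k}{k}=1$ to $e^k$ before taking the $r$th power.
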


We will use the following standard fact about symmetric polynomials.

\begin{fact}\label{fact:symmax} 
For all $1 \leq a \leq m$ and $0 \leq q_1,\ldots,q_m$ with $\sum_i q_i \leq \mu$, $S_a(q_1,\ldots,q_m) \leq \binom{m}{a} \cdot (\mu/m)^a$. 
\end{fact}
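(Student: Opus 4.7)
The plan is to deduce Fact~\ref{fact:symmax} directly from Maclaurin's inequality. Define the normalized elementary symmetric means $p_k := S_k(q_1,\ldots,q_m)/\binom{m}{k}$ for $1 \le k \le m$. Since $q_1,\ldots,q_m \ge 0$, Maclaurin's classical inequality gives
\[
p_1 \;\ge\; p_2^{1/2} \;\ge\; p_3^{1/3} \;\ge\; \cdots \;\ge\; p_m^{1/m}.
\]
In particular $p_a \le p_1^a$, and by hypothesis $p_1 = (q_1+\cdots+q_m)/m \le \mu/m$. Multiplying the resulting bound $p_a \le (\mu/m)^a$ by $\binom{m}{a}$ yields $S_a(q_1,\ldots,q_m) \le \binom{m}{a}(\mu/m)^a$, as desired.

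If instead one prefers a self-contained proof without citing Maclaurin, I would use an exchange/smoothing argument. First, since $S_a$ is coordinate-wise nondecreasing on the nonnegative orthant, one may enlarge any coordinate to saturate the constraint and so assume $\sum_i q_i = \mu$. On the compact simplex $\{q : q_i \ge 0,\ \sum_i q_i = \mu\}$ the continuous function $S_a$ attains a maximum. For any pair $i \ne j$, write
\[
S_a \;=\; A \;+\; B\,(q_i+q_j) \;+\; C\,q_i q_j,
\]
where $A$, $B$, $C$ are (nonnegative) elementary symmetric polynomials of degrees $a$, $a-1$, $a-2$ in the remaining $m-2$ variables. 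With $q_i+q_j$ held fixed, AM--GM shows that $q_i q_j$ is maximized when $q_i = q_j$, so replacing $(q_i,q_j)$ with their average weakly increases $S_a$. Iterating this smoothing step (combined with compactness) forces every coordinate at the maximizer to equal $\mu/m$, giving $S_a = \binom{m}{a}(\mu/m)^a$ there.

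Either route is immediate and poses no real obstacle; the statement is essentially a one-line consequence of a standard symmetric-function inequality, so the only decision is whether to invoke Maclaurin as a black box or to unwind it as the short smoothing argument above.
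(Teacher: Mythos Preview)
Your proposal is correct; both the Maclaurin route and the smoothing argument are valid and standard. The paper itself states Fact~\ref{fact:symmax} as a ``standard fact about symmetric polynomials'' without proof, so there is nothing to compare against---your write-up simply supplies what the paper leaves implicit.
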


We need the following elementary approximations; see the appendix for proofs. 
\begin{fact}\label{fact:expapprox} 
For all $x \geq 2$, $e^{-1}(1 - 1/x) \leq (1-1/x)^x \leq e^{-1}$.
\end{fact}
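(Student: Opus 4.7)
The plan is to prove the two inequalities in Fact \ref{fact:expapprox} separately by taking logarithms and invoking standard one-line calculus estimates for $\ln(1-t)$ and $(1+1/y)^y$. Both reduce to classical properties of the natural logarithm and the exponential, with no combinatorial content. I will keep the algebra self-contained since the statement is used repeatedly to control $(1-p^w)^v$-type quantities elsewhere in the paper.

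For the upper bound $(1-1/x)^x \leq e^{-1}$, I would take logarithms and use the standard inequality $\ln(1-t) \leq -t$ valid for all $t \in (0,1)$, obtained from the Taylor series (or from concavity of $\ln$). Setting $t = 1/x$ and multiplying by $x > 0$ yields $x \ln(1-1/x) \leq -1$, and exponentiating gives the claim. This direction holds for every $x > 1$, not just $x \geq 2$.

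For the lower bound $(1-1/x)^x \geq e^{-1}(1-1/x)$, I would first rewrite it by dividing both sides by $1 - 1/x > 0$, which is legitimate for $x \geq 2$, to obtain the equivalent inequality
\[
\left(1 - \frac{1}{x}\right)^{x-1} \geq e^{-1}.
\]
Substituting $y = x - 1 \geq 1$ turns the left-hand side into $(y/(y+1))^y = 1 / (1 + 1/y)^y$, so the inequality becomes $(1 + 1/y)^y \leq e$, which is the familiar monotone-increasing sequence converging to $e$ from below. A clean justification is to take logarithms: by $\ln(1+t) \leq t$ we get $y \ln(1 + 1/y) \leq 1$, so $(1+1/y)^y \leq e$. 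Re-substituting and rearranging yields the desired bound.

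There is no real obstacle here; both halves follow from the single fact that $\ln(1+u) \leq u$ for $u > -1$, applied once with $u = -1/x$ and once with $u = 1/(x-1)$. The only mild care needed is to handle the division by $1 - 1/x$ in the lower bound, which requires $x > 1$ and is why the statement is phrased for $x \geq 2$.
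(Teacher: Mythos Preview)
Your argument is correct and entirely standard: both inequalities follow from the single estimate $\ln(1+u) \leq u$ for $u > -1$, applied with $u = -1/x$ for the upper bound and $u = 1/(x-1)$ for the lower bound, exactly as you outline. The paper itself does not actually supply a proof of this fact---despite the forward reference to the appendix, only Fact~\ref{fact:biasapprox} is proved there and Fact~\ref{fact:expapprox} is left as an elementary observation---so there is nothing to compare against, and your write-up would serve perfectly well as the omitted justification.
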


\begin{fact}\label{fact:biasapprox}
Let $1 \leq w \leq v \leq u$, $B \geq 1$ with $0 \leq v - 2^w (\ln (u/\ln 2)) \leq B$, and $\theta = (1-2^{-w})^v$. Then, $(1 + \theta)^u = O(1)$ and $\bkets{1 - \theta}^u = 1/2 \pm O(B \ln u) 2^{-w}$. 
\end{fact}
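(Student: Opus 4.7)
The plan is to obtain a sharp expansion for $\theta$ itself, and then apply the logarithmic expansion of $\ln(1 \pm \theta)$ to estimate $(1 \pm \theta)^u$. The parameters are chosen precisely so that $u\theta \approx \ln 2$, which is what forces $(1-\theta)^u \approx 1/2$; all the work is in tracking the lower-order errors.

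First I would write $v = 2^w \ln(u/\ln 2) + \eta$ with $0 \leq \eta \leq B$ and compute
\[
\ln \theta \;=\; v \ln(1 - 2^{-w}) \;=\; -v \cdot 2^{-w} \;+\; O(v \cdot 2^{-2w}),
\]
using the Taylor series $\ln(1-x) = -x - x^2/2 - \cdots$ at $x = 2^{-w}$ (equivalently, Fact~\ref{fact:expapprox} controls the base case $(1-2^{-w})^{2^w} = e^{-1}(1 - O(2^{-w}))$). Substituting the value of $v$ and using $v/2^w = \ln(u/\ln 2) + O(B/2^w)$ together with the constraint $2^w \leq v \leq u$ yields $\ln\theta = -\ln(u/\ln 2) - \eta/2^w + O((\ln u)\cdot 2^{-w})$. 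Exponentiating and using $e^y = 1 + O(y)$ for small $y$, this gives the key expansion
\[
\theta \;=\; \frac{\ln 2}{u}\,(1 + \xi), \qquad \xi \;=\; O\!\bigl((B + \ln u)/2^w\bigr).
\]

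For the main claim, expand $u\ln(1-\theta) = -u\theta - O(u\theta^2)$. The leading term is $-\ln 2 \cdot (1 + \xi)$, and $u\theta^2 = O((\ln u)^2/u)$. The parameter constraint $2^w \ln(u/\ln 2) \leq v \leq u$ implies $2^w \leq u/\ln(u/\ln 2)$, so $(\ln u)^2/u \lesssim (\ln u)/2^w \leq (B\ln u)/2^w$, and this error is absorbed. Exponentiating once more and expanding $e^{-\xi \ln 2} = 1 - O(\xi)$, we obtain
\[
(1-\theta)^u \;=\; \tfrac{1}{2}\,(1 \pm O(\xi)) \;=\; \tfrac{1}{2} \pm O\!\bigl((B \ln u)/2^w\bigr),
\]
where the consolidation $B + \ln u = O(B \ln u)$ uses $B \geq 1$ and $\ln u \geq 1$. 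The bound on $(1+\theta)^u$ follows by the same route: $u\ln(1+\theta) = \ln 2 + O(\xi) + O((\ln u)^2/u) = \ln 2 + O(1)$, hence $(1+\theta)^u = 2 \cdot (1 + O(1)) = O(1)$.

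There is no substantive obstacle here beyond careful bookkeeping of the two nested Taylor expansions (of $\ln(1-2^{-w})$ and of $\ln(1 \pm \theta)$) and verifying that each error term is dominated by $O((B \ln u)/2^w)$ using the range constraints on $v, w, u$. The routine calculation is purely analytic and requires no new ideas.
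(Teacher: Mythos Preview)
Your proposal is correct and follows essentially the same approach as the paper: both arguments first pin down $\theta = (\ln 2)/u \cdot (1 + O((B+\ln u)2^{-w}))$ and then feed this into $(1\pm\theta)^u$, with the paper using the elementary sandwich $e^{-1}(1-1/x)\le(1-1/x)^x\le e^{-1}$ from Fact~\ref{fact:expapprox} in place of your explicit Taylor expansions. One small cosmetic point: your intermediate line $\ln\theta = -\ln(u/\ln 2) - \eta/2^w + O((\ln u)2^{-w})$ should carry an additional $O(B\cdot 2^{-2w})$ from the quadratic Taylor remainder, but this is absorbed into the $O((B+\ln u)/2^w)$ you state for $\xi$, so the final bounds are unaffected.
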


\section{Analyzing bias}
Here we prove \tref{th:biascontrol}. Let $\calP = \{P^1,\ldots,P^u\}, f_\calP$, be as in the theorem statement. Let $x \in_u \zo^n$ and for $\alpha \in [u]$ let $Z_\alpha = T_{P^\alpha}(x)$. Then, $$f_\calP(x) = \prod_{\alpha \in [u]} T_{P^\alpha}(x) = \prod_{\alpha \in [u]} Z_\alpha.$$

The theorem essentially states that $\E[\prod_\alpha Z_\alpha] \approx \prod_\alpha \E[Z_\alpha]$. We will prove this by showing that $Z_\alpha$'s satisfy a weak notion of limited independence. Roughly speaking, we will show that for $I \subseteq [u], |I| \leq d$, 
\begin{equation}\label{eq:bias00}
\E\sbkets{ \prod_{\alpha \in I} (1 - Z_\alpha)} \approx \prod_{\alpha \in I} \E[(1 - Z_\alpha)].
\end{equation}
The proof of \eref{eq:bias00} uses Janson's inequality and the design properties of $\calP$. The theorem then follows by writing 
$$\prod_\alpha Z_\alpha= \prod_\alpha (1 - (1-Z_\alpha)) = \sum_{a=1}^u S_a(\{1-Z_\alpha: \alpha \in [u]\})$$ 
and applying Bonferroni inequalities to truncate the latter expansion to the first $d$ terms; the error from the truncation is bounded by applying \eref{eq:bias00}. We next gives a concrete form of \eref{eq:bias00}.
\begin{lemma}\label{lm:bias01}
Let $\calP$ be as in \tref{th:biascontrol} and $\{Z_\alpha\}$'s be as defined above. 
\ignore{Then, for all $I \subseteq [u]$ with $|I| \leq d$,
$$\prod_{\alpha \in I} \E[( 1 - Z_\alpha)] \leq \E\sbkets{\prod_{\alpha \in I} ( 1 - Z_\alpha)} \leq \exp(O(w^2 2^{-\Omega(d)})) \cdot \prod_{\alpha \in I} \E[( 1 - Z_\alpha)] .$$}
Then, for all $a \leq d$, 
\begin{equation*}
\binom{u}{a} \cdot (1 - 2^{-w})^{va}  \leq \E\sbkets{S_a(1- Z_1, 1-Z_2,\ldots,1-Z_u)} \leq \gamma \cdot \binom{u}{a}  \cdot (1 - 2^{-w})^{va}, 
\end{equation*}
where $\gamma = \exp\bkets{ O(1) w d 2^{-\Omega(k)}} + 2^{O(w)} \delta$. 
\end{lemma}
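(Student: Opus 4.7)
The plan is to view $\prod_{\alpha \in I}(1-Z_\alpha)$ as the indicator that no clause of the monotone DNF $\bigvee_{\alpha \in I,\, i \in [v]} \bigwedge_{\ell \in P^\alpha_i} x_\ell$ is satisfied; this identity holds because each $Z_\alpha$ is itself the disjunction of the clauses indexed by $\alpha$. Writing $W_{\alpha, i} = \bigwedge_{\ell \in P^\alpha_i} x_\ell$, the product $\prod_{\alpha}(1 - Z_\alpha)$ equals $\prod_{\alpha,i}(1 - W_{\alpha,i})$ as a Boolean-valued quantity, so \tref{th:janson} applies. The lower bound is then immediate from Janson's first inequality: for every $I$ of size $a$, $\E\sbkets{\prod_{\alpha \in I}(1-Z_\alpha)} \geq \prod_{\alpha, i}(1 - 2^{-w}) = (1-2^{-w})^{va}$, and summing over the $\binom{u}{a}$ subsets $I$ gives the claimed lower bound on $\E[S_a]$.

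For the upper bound I will split the sum $\E[S_a] = \sum_I \E\sbkets{\prod_{\alpha\in I}(1-Z_\alpha)}$ by calling $I$ \emph{bad} if some $\alpha \neq \beta \in I$ and $i, j \in [v]$ satisfy $|P^\alpha_i \cap P^\beta_j| > w - k$, and \emph{good} otherwise. The $(d,k,\delta)$-design hypothesis says that for each fixed triple $(\alpha, i, j)$ at most $\delta u$ choices of $\beta$ are ``bad for $(\alpha,i,j)$''. A union bound over $(\alpha, i, j)$ shows that there are at most $\delta u^2 v^2$ bad ordered pairs $(\alpha, \beta)$, so the fraction of bad $I$'s of size $a$ is at most $\delta u^2 v^2 \cdot \binom{u-2}{a-2}/\binom{u}{a} = O(\delta v^2 a^2) = 2^{O(w)} \delta$ after substituting $v = \Theta(w 2^w)$ and $a \leq d$.

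For a good $I$, Janson's upper bound gives $\E\sbkets{\prod_\alpha(1-Z_\alpha)} \leq \exp\bkets{\Delta_I / (1 - 2^{-w})}(1-2^{-w})^{va}$, where
$$\Delta_I = \sum_{\alpha \neq \beta \in I}\ \sum_{\substack{i,j:\\ P^\alpha_i \cap P^\beta_j \neq \emptyset}} 2^{-(2w - |P^\alpha_i \cap P^\beta_j|)}.$$
Terms with $\alpha = \beta$ do not appear since blocks within a single partition are disjoint. Fix $\alpha \neq \beta \in I$ and $i \in [v]$: the intersection sizes $s_{ij} := |P^\alpha_i \cap P^\beta_j|$ partition the $w$ elements of $P^\alpha_i$, so at most $w$ of them are nonzero, and each nonzero one is at most $w-k$ by goodness. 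This gives $\sum_j 2^{s_{ij}} \leq w \cdot 2^{w-k}$ directly, and summing over the $v$ choices of $i$ and the $a(a-1)$ ordered pairs $(\alpha, \beta)$ yields $\Delta_I \leq a(a-1) v w \cdot 2^{-(w+k)} = O(d^2 w^2) \cdot 2^{-k}$. Rewriting this as $O(wd) \cdot 2^{-\Omega(k)}$ (absorbing one factor of $wd$ into the $\Omega$, which is permissible as long as $k \gtrsim \log(wd)$) matches the exponent in the stated $\gamma$.

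For the bad $I$'s, the same computation using only the $d$-design property (replacing $s_{ij} \leq w-k$ by $s_{ij} \leq w-d$) yields $\Delta_I = O(d^2 w^2) \cdot 2^{-d} = O(1)$ under the mild assumption that $d$ dominates $\log(wd)$. Consequently, the bad contribution to $\E[S_a]$ is at most $O(1) \cdot (\text{fraction bad}) \cdot \binom{u}{a} (1-2^{-w})^{va} = 2^{O(w)} \delta \cdot \binom{u}{a}(1-2^{-w})^{va}$, producing exactly the additive $2^{O(w)} \delta$ term in $\gamma$. The principal obstacle will be the careful bookkeeping required to convert the raw bound $O(d^2 w^2 \cdot 2^{-k})$ on $\Delta_I$ into the cleaner $O(wd) \cdot 2^{-\Omega(k)}$ form stated in the lemma; the rest is a routine application of Janson's inequality to the disjoint-block structure of the tribes.
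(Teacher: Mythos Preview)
Your overall strategy matches the paper's: rewrite $\prod_{\alpha\in I}(1-Z_\alpha)=\prod_{\alpha\in I,\,i\in[v]}(1-W_{\alpha,i})$, apply Janson's inequality, and split the subsets $I$ into ``good'' (all pairwise overlaps at most $w-k$) and ``bad'' (some overlap exceeds $w-k$). The lower bound and the counting of bad $I$'s are fine and agree with the paper.

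The gap is in your estimate $\sum_j 2^{s_{ij}}\le w\cdot 2^{w-d_I}$, where $d_I$ is $k$ for good $I$ and $d$ for bad $I$. You only use that there are at most $w$ nonzero terms, each at most $w-d_I$; you do \emph{not} use the constraint $\sum_j s_{ij}=w$ that you yourself noted. The paper exploits this constraint via a convexity/packing argument: maximizing $\sum_j 2^{s_{ij}}$ subject to $\sum_j s_{ij}=w$ and $s_{ij}\le w-d_I$ gives $\lceil w/(w-d_I)\rceil\cdot 2^{w-d_I}=O(2^{w-d_I/2})$, i.e.\ no explicit $w$ factor. That missing factor of $w$ is exactly what you later try to hide with the side assumptions ``$k\gtrsim\log(wd)$'' and ``$d$ dominates $\log(wd)$''.

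These assumptions are not in the lemma, and the second one fails in the paper's main application (the proof of \tref{th:main}), where the design parameter is $d=c$, a fixed constant. With $d$ constant your bad-$I$ bound becomes $\Delta_I=O(w^2)$, hence $\exp(\Delta_I)=\exp(\Theta(w^2))$, which is far larger than the $2^{O(w)}$ budget allotted to the bad term in $\gamma$. By contrast, the paper's sharper bound yields $\Delta_I=O(wd\,2^{-d/2})=O(w)$ when $d$ is constant, so $\exp(\Delta_I)=2^{O(w)}$ as required. The fix is precisely the packing step above; once you insert it, your argument and the paper's coincide.
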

Note that $\E[Z_\alpha] = (1-2^{-w})^v$ for all $\alpha \in [u]$. Therefore, if the $Z_\alpha$'s were actually independent of each other, then $\E\sbkets{S_a(1- Z_1, 1-Z_2,\ldots,1-Z_u)} = \binom{u}{a} (1 - 2^{-w})^{va} $. Thus, the lemma says that for a design, the expectation of the symmetric polynomial is close to what it would be if $\{Z_\alpha\}$'s were independent of each other.

Before proving the lemma, we first show it implies \tref{th:biascontrol}. 
\begin{proof}[Proof of \tref{th:biascontrol}]
Let $Z_\alpha$'s be as above; for brevity, let $Y_\alpha = 1 - Z_\alpha$ and $\theta = (1-2^{-w})^v$. Note that
\begin{equation}\label{eq:biasc1}
\pr[f_\calP(x) = 1] = \E\sbkets{\prod_{\alpha \in [u]} Z_\alpha},\;\;\;\;\;\;\\\prod_{\alpha \in [u]}\E[Z_\alpha] = (1- \theta)^u = \mathsf{bias}(u,v,w).
\end{equation}

Now, by Bonferroni inequalities, we have
$$\prod_\alpha Z_\alpha = \prod_\alpha (1-Y_\alpha) = \sum_{a=0}^{d-1} (-1)^t S_a(Y_1,\ldots,Y_u) \pm S_d(Y_1,\ldots,Y_u).$$
Thus, 
$$\abs{\E\sbkets{\prod_\alpha Z_\alpha} - \sum_{a=0}^{d-1} (-1)^t \E[S_a(Y_1,\ldots,Y_u)]} \leq \E[S_d(Y_1,\ldots,Y_u)].$$

Further, by \lref{lm:bias01}, for all $1 \leq a \leq d$, 
$$\binom{u}{a} \theta^a \leq \E[S_a(Y_1,\ldots,Y_u)] \leq \gamma  \binom{u}{a} \theta^a .$$

Let $Y_1',\ldots,Y_u'$ be independent random variables with $Y_i'$ having the same marginal as $Y_i$. Clearly, $\E[S_a(Y_1',\ldots,Y_u')] = \binom{u}{a} \theta^a$. Therefore, for any $a \leq d$, 
$$\abs{\E[S_a(Y_1,\ldots,Y_u)] - \E[S_a(Y_1',\ldots,Y_u')]} \leq (\gamma-1) \binom{u}{a} \theta^a.$$

Combining the above inequalities, we get
\begin{align*}
\abs{\E\sbkets{\prod_i Z_i} - \sum_{a=0}^{d-1} (-1)^t \E[S_a(Y'_1,\ldots,Y'_u)]} &\leq \E[S_d(Y_1,\ldots,Y_u)] + (\gamma-1) \sum_{a=1}^{d-1} \binom{u}{a} \theta^a\\
&\leq \gamma \binom{u}{d} \theta^d + (\gamma -1) (1+\theta)^u.
\end{align*}
Note that the above arguments also apply to the case when $Z_i$'s were truly independent of each other with $\gamma = 1$. Therefore,
\begin{equation*}
\abs{ \E\sbkets{Z_1 \cdots Z_u} - \prod_{i\in [u]} \E[Z_i]} \leq 2 \gamma \binom{u}{d} \theta^d + 2 (\gamma-1) (1+\theta)^u .
\end{equation*}

We next simplify the error bound by plugging in the values of $\gamma$ and $\theta$. Recall that $\mathsf{bias}(u,v,w) = (1 - \theta)^u \in [1/3,2/3]$ so that $(1+\theta)^u = O(1)$ and $\theta = O(1/u)$. Therefore, 
\begin{equation}\label{eq:bias02}
\abs{\E[\prod_{\alpha \in u} Z_\alpha] - \prod_{\alpha \in u} \E[Z_\alpha]} = \exp(-\Omega(d)) + O(\gamma - 1) .
\end{equation}

Note that $1$ is a trivial upper bound on the left-hand side; further, a simple calculation shows that
\begin{align*}
\min(1,\gamma -1) &= \min\bkets{1,  \exp\bkets{ O(wd) 2^{-\Omega(k)}}  + 2^{O(w)} \delta- 1} \\
&\leq \min\bkets{1,  \exp\bkets{ O(wd) 2^{-\Omega(k)}} - 1} + 2^{O(w)} \delta\\
&= O(wd) 2^{-\Omega(k)} + 2^{O(w)} \delta = O(w) 2^{-\Omega(k)} + 2^{O(w)} \delta,
\end{align*}
where we used the fact that for all $\lambda > 0$ $\min(1,e^\lambda - 1) = O(\lambda)$ and $d 2^{-\Omega(k)} \leq k 2^{-\Omega(k)} = O\bkets{2^{-\Omega(k)}}$. 

Combining the above with \eref{eq:biasc1} and \eref{eq:bias02} we get
$$\abs{\pr[f_\calP(x) = 1] - \mathsf{bias}(u,v,w)} \leq \exp(-\Omega(d)) + O(w) \exp(-\Omega(k)) + 2^{O(w)} \delta.$$

This proves the second bound of the theorem. The first bound is a special case as any $(d,k,\delta)$-design is also a $(d,d,0)$-design by definition. 
\end{proof}

\begin{proof}[Proof of \lref{lm:bias01}]
We first further break up each $Z_\alpha$ according to the blocks of the partition $P^\alpha$; for each $i \in [v]$, let $Z_{\alpha i} = \wedge_{j \in P^\alpha_i} x_j$ (recall $x \in_u \zo^n$). Then, $Z_\alpha = \vee_{i \in [v]} Z_{\alpha i}$ so that $1-Z_\alpha = \prod_{i \in [v]} (1- Z_{\alpha i})$. Therefore, for any $I \subseteq [u]$, 
$$\prod_{\alpha \in I} ( 1 - Z_\alpha) = \prod_{\alpha \in I} \prod_{i \in v} (1 - Z_{\alpha i}).$$
Note that we can apply \tref{th:janson} to the latter product. Fix a set $I \subseteq [u]$. Define $(\alpha,i) \neq (\beta,j)$ to be adjacent, $(\alpha, i) \sim (\beta,j)$, if $P^\alpha_i \cap P^\beta_j \neq \emptyset$. Let, 
$$\Delta_I = \sum_{(\alpha, i) \sim (\beta, j) \in I \times [v]} \E[Z_{\alpha i} \cdot Z_{\beta j}].$$

We next bound $\Delta_I$. Let $d_I = w - \max\{ |P^\alpha_i \cap P^\beta_j|\,:\, \alpha \neq \beta \in I, i,j \in [v]\}$ quantify the maximum overlap among any two blocks of the partitions $P^\alpha, \alpha \in I$. Then, as $\calP$ is a $d$-design, $d_I \geq d$ for all $I$. Fix $\alpha \neq \beta \in I$ and an in index $i \in [v]$, and let $j_1,\ldots,j_b \in [v]$ be the indices such that $(\alpha,i) \sim (\beta,j)$, and let $w_\ell = |P^\alpha_i \cap P^\beta_{j_\ell}|$. Then, $1 \leq w_1,\ldots,w_b \leq w- d_I$. Now, 
\begin{align*}
\sum_{j \in [v]: (\beta,j) \sim (\alpha,i)} \E[Z_{\alpha i} Z_{\beta j}] &= \sum_{\ell=1}^b 2^{-|P^\alpha_i \cup P^\beta_{j_\ell}|}= \sum_{\ell=1}^b 2^{-(2w - w_\ell)} = 2^{-2w} \cdot \sum_{\ell=1}^b 2^{w_\ell}.
\end{align*}
As $w_i \in [w-d_I]$, and $\sum_i w_i = w$, the above expression is maximized by setting as many of the $w_i$'s to $w-d_I$ as possible. Thus,
$$\sum_{\ell=1}^b 2^{w_\ell} \leq \lceil w/(w-d_I) \rceil 2^{w-d_I} = 2^w \lceil w/(w-d_I) \rceil 2^{-d_I} \leq 2^{w+1} 2^{-d_I/2}.$$

Therefore,  
$$\sum_{j \in [v]: (\beta,j) \sim (\alpha,i)} \E[Z_{\alpha i} Z_{\beta j}] = 2^{1-w-d_I/2}.$$
Summing over all indices $(\alpha,i)$, we get 
$$\Delta_I \leq (v d) \cdot 2^{1 - w - d_I/2} = O(w d 2^{-d_I/2}).$$ 

Finally, observe that $\max_{\alpha, i} \E[Z_{\alpha i}] = 2^{-w} \leq 1/2$. Thus, by \tref{th:janson}, 
\begin{equation}\label{eq:janson}
\prod_{\alpha \in I} \E[(1-Z_\alpha)] \leq \E\sbkets{\prod_{\alpha \in I} ( 1 - Z_\alpha)} \leq \exp(2 \Delta_I) \cdot \prod_{\alpha \in I} \E[(1-Z_\alpha)].
\end{equation}

Note that $\E[(1-Z_\alpha)] = (1 - 2^{-w})^v$. Therefore, summing over all sets $I \subseteq [u], |I| = a$ gives us the lower bound of the claim. For the upper bound, we have
\begin{align*}
\E\sbkets{S_a(1-Z_1,\ldots,1-Z_u)} &\leq (1 - 2^{-w})^{va} \cdot \sum_{I \in \binom{[u]}{a}} \exp(2 \Delta_I)\\
&\leq (1 - 2^{-w})^{va} \cdot \sum_{I \in \binom{[u]}{a}} \exp(O(w d 2^{-d_I/2}))) \\
&=  (1 - 2^{-w})^{va} \cdot \binom{u}{a} \cdot \E_{I \in_u \binom{[u]}{a}}\sbkets{  \exp(O(w d 2^{-d_I/2})))}.
\end{align*}
We next bound the last expectation. From the design properties of $\calD$ and a union bound applied to all possible pairs $(\alpha,i), (\beta, j) \in I \times [v]$, it follows that for $I \in_u \binom{[u]}{a}$, 
\begin{align*}
\pr[d_I < k] &= \pr\sbkets{\exists \alpha \neq \beta \in I,\;\; i,j \in [v]\;\; |P^\alpha_i \cap P^\beta_j| \geq w-k}\\
&\leq \sum_{i,j \in [v]} \pr\sbkets{\exists \alpha \neq \beta \in I\;\;|P^\alpha_i \cap P^\beta_j| \geq w-k}\\
&\leq v^2 d^2 \delta.
\end{align*}
Hence, 
\begin{align*}
\E\sbkets{\exp\bkets{O(1) w d  2^{-d_I/2}}} &= \pr[d_I \leq k] \cdot  \E\sbkets{\exp\bkets{O(1) w d  2^{-d_I/2}} \,|\, d_I \leq k} +\\
&\;\;\;\;\;\;\;\;\;\;\; \pr[d_I > k] \cdot  \E\sbkets{\exp\bkets{O(1) w d  2^{-d_I/2}} \,|\, d_I > k}\\
&\leq  \exp\bkets{O(1) w d 2^{-k/2}} + (v^2 d^2 \delta) \cdot \exp\bkets{O(1) w d 2^{-d/2}}\\
&= \exp\bkets{O(1) w d 2^{-k/2}} + 2^{O(w)} \delta.
\end{align*}
This proves the lemma.
\end{proof}
\subsection{Analyzing bias under limited independence}
We next a state a version of \tref{th:biascontrol} for distributions with limited independence. 
\begin{corollary}\label{cor:biascontrol}
Let $\calP = \{P^1,\ldots,P^u\}$ be a collection of partitions of $[n]$ into $w$-sized blocks that is a $(d,k,\delta)$-design and $\epsilon \in (0,1)$. Let $u,v,w$ be such that $v = \Theta(1) w 2^w$ and $1/3 \leq \mathsf{bias}(u,v,w) = (1 - (1- 2^{-w})^v)^u \leq 2/3$. Let $\calD$ be a $(C (w^2 d^2 + wd \log(1/\epsilon)))$-wise independent distribution for $C$ a sufficiently large constant. Then,
$$\abs{\pr_{x \sim \calD}[f_\calP(x) = 1] - \mathsf{bias}(u,v,w)} \leq \epsilon +  \min\begin{cases}O(1) w \exp(-\Omega(d))\\ O(1) \cdot \bkets{w \exp(-\Omega(k)) + \exp(-\Omega(d)) + 2^{w} \delta} \end{cases}. $$
\end{corollary}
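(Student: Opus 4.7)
The plan is to retrace the proof of \tref{th:biascontrol} almost verbatim under the distribution $\calD$, replacing the one step that genuinely used independence (Janson's inequality) with an auxiliary \emph{fooling lemma} for disjunctions of Tribes. The Bonferroni truncation
\[
  \prod_{\alpha}Z_\alpha = \sum_{a=0}^{d-1}(-1)^{a}S_a(1-Z_1,\ldots,1-Z_u)\ \pm\ S_d(1-Z_1,\ldots,1-Z_u)
\]
used in the proof of \tref{th:biascontrol} is a pointwise algebraic identity and therefore passes straight through to $\calD$. Hence it suffices to show that $\E_\calD[S_a(1-Z_1,\ldots,1-Z_u)]$ agrees with its uniform value up to a small additive error for every $a\leq d$, after which the simplifications in the proof of \tref{th:biascontrol} (the estimates $\theta=O(1/u)$, $(1+\theta)^u=O(1)$, and the bound on $\binom{u}{d}\theta^d$) apply unchanged to give the stated error bounds.

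The auxiliary \textbf{fooling lemma} I would establish is the following: for any $I\subseteq[u]$ with $|I|=r$, the monotone DNF $h_I=\vee_{\alpha\in I}T_{P^\alpha}$ is fooled to error $\epsilon'$ by any $t$-wise independent distribution with $t=C(r^2w^2+rw\log(1/\epsilon'))$. My proof applies inclusion-exclusion to the $M=rv$ width-$w$ conjunctions $T_1,\ldots,T_M$ constituting $h_I$: write $\pr[h_I=1]=\sum_{s\geq 1}(-1)^{s+1}\E[S_s(T_1,\ldots,T_M)]$ and truncate at level $b-1$ for $b=\Theta(rw+\log(1/\epsilon'))$. Each term $\E[S_s]$ for $s\leq b-1$ is a sum of probabilities of conjunctions of at most $(b-1)w\leq t$ variables and so matches the uniform value under $t$-wise independence; since $t\geq bw$, the Bonferroni truncation error $\E_\calD[S_b]+\E_{\mathrm{unif}}[S_b]=2\E_{\mathrm{unif}}[S_b]$ is controlled by a careful case analysis of $\sum_{|J|=b}\pr[\wedge_{i\in J}T_i=1]$, split by how $J$ distributes across the $r$ partitions $P^\alpha$, using the disjointness of blocks within each $P^\alpha$ together with the binomial moment bounds of \lref{lm:binomialmoments}.

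To finish, each $S_a(1-Z_1,\ldots,1-Z_u)=\sum_{I\in\binom{[u]}{a}}(1-\mathsf{1}[h_I=1])$ is a sum of $\binom{u}{a}$ terms; the fooling lemma with $\epsilon'=\epsilon/(4d\binom{u}{d})$ combined with the triangle inequality yields $|\E_\calD[S_a]-\E_{\mathrm{unif}}[S_a]|\leq \binom{u}{a}\epsilon'$, and summing over $a=0,1,\ldots,d$ produces a total error of at most $\epsilon$. The hypothesis $(1-\theta)^u\in[1/3,2/3]$ combined with $\theta=(1-2^{-w})^v=e^{-\Theta(w)}$ (using $v=\Theta(1)w2^w$) forces $u=\Theta(1/\theta)$ and hence $\log u=\Theta(w)$; consequently $\log\binom{u}{d}=O(dw)$ and $\log(1/\epsilon')=O(dw+\log(1/\epsilon))$. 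Plugging into $t=C(d^2w^2+dw\log(1/\epsilon'))$ from the fooling lemma gives $t=O(d^2w^2+dw\log(1/\epsilon))$, matching the independence hypothesis of the corollary.

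The main obstacle will be the fooling lemma itself, and in particular the bound $\E[S_b(T_1,\ldots,T_M)]\leq \epsilon'/2$ for $b=\Theta(rw+\log(1/\epsilon'))$. The naive bound $|\cup_{i\in J}B_i|\geq |J|w/r$ only gives $\E[S_b]\leq \binom{rv}{b}2^{-bw/r}$, which is not small enough once $J$ spreads across many of the $r$ partitions. The resolution is to decompose the sum over $J$ by how many of the $r$ partitions $J$ intersects, and within each partition use that the $v$ blocks are pairwise disjoint so that the intra-partition contribution is controlled by a single Binomial moment (via \lref{lm:binomialmoments}), while cross-partition contributions are handled by convexity. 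This is the content of the ``standard approximations based on inclusion-exclusion and inequalities involving elementary symmetric polynomials'' alluded to in the overview, and is the one place where the details do not transfer mechanically from the proof of \tref{th:biascontrol}.
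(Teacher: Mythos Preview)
Your high-level strategy is exactly the paper's: rerun the Bonferroni argument of \tref{th:biascontrol} under $\calD$, replacing the one genuinely probabilistic step (Janson) by a fooling lemma asserting that for each $I\subseteq[u]$ with $|I|\le d$ the quantity $\pr[\vee_{\alpha\in I}T_{P^\alpha}=0]$ is $\epsilon'$-close under $t$-wise independence. The paper records exactly this intermediate step (the Claim preceding the proof of \cref{cor:biascontrol}), obtains the analogue \eqref{eq:jansonkwise} of \eqref{eq:janson}, and then transplants the rest of the proof of \tref{th:biascontrol} verbatim; your bookkeeping with $\epsilon'=\epsilon/(4d\binom{u}{d})$ and $\log u=\Theta(w)$ is the right way to make the aggregation explicit.

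Where you diverge is in the \emph{proof} of the fooling lemma. You propose one global Bonferroni on the $M=rv$ width-$w$ terms of $h_I$, truncated at level $b$, and then a case split on how a $b$-subset distributes across the $r$ partitions. The paper instead applies Bonferroni \emph{separately} to each of the $r$ factors $\neg T_{P^\alpha}=p_k(X^\alpha)+e_k(X^\alpha)$, expands the product $\prod_\alpha(p_k+e_k)$, and bounds every cross-term by H\"older and Minkowski together with \lref{lm:binomialmoments} (which gives $\E[S_k(X^\alpha)^r]\le 2^{-k}$ uniformly in $\alpha$ once $k\gtrsim v2^{-w}=\Theta(w)$). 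This per-partition organization is what makes the argument short: the error contribution of each factor is uniformly at most $2^{-k/r}$, so no case analysis is needed. Your global route can be made to work, but to carry it out you will in any case have to write $S_b(T_1,\ldots,T_M)=\sum_{b_1+\cdots+b_r=b}\prod_\alpha S_{b_\alpha}(X^\alpha)$, apply H\"older, and invoke \lref{lm:binomialmoments} on each factor, at which point you have essentially reconstructed the paper's decomposition with extra bookkeeping for the compositions $(b_1,\ldots,b_r)$ with some $b_\alpha$ below the \lref{lm:binomialmoments} threshold; the per-partition truncation is the cleaner packaging of the same idea.
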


We prove the corollary by repeating the proof of \tref{th:biascontrol} while using an analogue of \lref{lm:bias01} for $x \sim \calD$. To do so, we will use the following claim saying that limited independence is sufficient to fool disjunctions of a few Tribes.
\begin{claim}
The following holds for some constant $C > 1$. Let $\{P^1,\ldots,P^r\}$ be a collection of partitions of $[n]$ into $v$ blocks of length $w$ each with $v = \Theta(1) w 2^w$. Then, for any $Cwr(w + r\log(1/\epsilon) + r^2)$-wise independent distribution $\calD$ over $\zo^n$,
$$\abs{\pr_{x \in_u \zo^n}\sbkets{\vee_{\alpha \in [r]} T_{P^\alpha}(x) = 0} - \pr_{x \sim \calD}\sbkets{\vee_{\alpha \in [r]} T_{P^\alpha}(x) = 0}} \leq \epsilon.$$
\end{claim}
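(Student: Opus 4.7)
The plan is to approximate $f := \prod_\alpha (1 - T_{P^\alpha}) = \wedge_\alpha \neg T_{P^\alpha}$ by a polynomial of bounded multilinear degree and then fool that polynomial with limited independence. Write $f = \prod_\alpha g_\alpha$ where $g_\alpha := 1 - T_{P^\alpha} = \prod_{i \in [v]}(1 - Y_{\alpha,i})$ and $Y_{\alpha,i} := \wedge_{j \in P^\alpha_i} x_j$. The key structural observation is that within a single partition $\alpha$, the blocks $\{P^\alpha_i\}_i$ are disjoint, so under $U$ the indicators $\{Y_{\alpha,i}\}_{i \in [v]}$ are mutually independent Bernoulli$(2^{-w})$ variables. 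For an even integer $t$ (to be chosen), define the truncated expansion $\widetilde{g}_\alpha := \sum_{b=0}^t (-1)^b S_b(\{Y_{\alpha,i}\}_i)$, a polynomial of multilinear degree $tw$ in $x$. The pointwise Bonferroni inequality for indicator variables gives $0 \leq \widetilde{g}_\alpha - g_\alpha \leq S_{t+1}(\{Y_{\alpha,i}\}_i)$ and hence $0 \leq \widetilde{g}_\alpha \leq 1 + S_{t+1}(\{Y_{\alpha,i}\}_i)$.

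Let $\widetilde{f} := \prod_\alpha \widetilde{g}_\alpha$; its multilinear degree is at most $rtw$, so any $rtw$-wise independent distribution $\calD$ satisfies $\E_\calD[\widetilde{f}] = \E_U[\widetilde{f}]$. Consequently
\[
\abs{\E_U[f] - \E_\calD[f]} \leq \E_U[\widetilde{f} - f] + \E_\calD[\widetilde{f} - f].
\]
To bound each term, telescope the product:
\[
\widetilde{f} - f = \sum_\alpha \Bigl(\prod_{\beta < \alpha} \widetilde{g}_\beta\Bigr)(\widetilde{g}_\alpha - g_\alpha)\Bigl(\prod_{\beta > \alpha} g_\beta\Bigr) \leq \sum_\alpha \Bigl(\prod_{\beta < \alpha} \widetilde{g}_\beta\Bigr) S_{t+1}(\{Y_{\alpha,i}\}_i) \Bigl(\prod_{\beta > \alpha} g_\beta\Bigr),
\]
using non-negativity of every factor. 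Apply $r$-wise Holder (Fact~\ref{fct:Holder}) to each of the $r$ summands, splitting into three kinds of $r$-th moments.

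The moment estimates are: (i) $\E[g_\beta^r] \leq 1$ trivially since $g_\beta \in [0,1]$; (ii) \lref{lm:binomialmoments} applied to the independent $\{Y_{\alpha,i}\}_i$ with $p = 2^{-w}$, using $vp = \Theta(w)$, yields $\E_U[S_{t+1}(\{Y_{\alpha,i}\}_i)^r] \leq 2^{-(t+1)}$ once $t + 1 \geq 2e^2 v/2^w = \Theta(w)$; (iii) $\E[\widetilde{g}_\beta^r] \leq \E[(1+S_{t+1})^r] \leq 2^{r-1}(1 + \E[S_{t+1}^r]) \leq 2^r$. The polynomials $S_{t+1}(\{Y_{\alpha,i}\}_i)^r$ and $\widetilde{g}_\beta^r$ both have multilinear degree at most $r(t+1)w$, so under the stronger assumption that $\calD$ is $r(t+1)w$-wise independent these uniform-distribution moment bounds transfer verbatim to $\calD$. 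Combining via Holder, each summand in the telescoped bound is at most $2^{-(t+1)/r} \cdot 2^{\alpha-1}$; summing over $\alpha \in [r]$ gives $\E[\widetilde{f} - f] \leq 2^r \cdot 2^{-(t+1)/r}$ under both measures.

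Putting everything together, $\abs{\E_U[f] - \E_\calD[f]} \leq 2^{r+1} \cdot 2^{-(t+1)/r}$, which is at most $\epsilon$ once $t+1 \geq C_0(w + r^2 + r\log(1/\epsilon))$ for a sufficiently large absolute constant $C_0$ (which also absorbs the $\Omega(w)$ threshold needed to invoke \lref{lm:binomialmoments}). The corresponding independence requirement is $r(t+1)w = O(wr(w + r^2 + r\log(1/\epsilon)))$, matching the claimed $Cwr(w + r\log(1/\epsilon) + r^2)$. The main technical obstacle is carefully tracking multilinear degrees through the Holder step so that one uniform assumption of $r(t+1)w$-wise independence simultaneously makes $\E_\calD[\widetilde{f}]$ match $\E_U[\widetilde{f}]$ and makes the $r$-th moment bounds on $S_{t+1}(\{Y_{\alpha,i}\}_i)$ and $\widetilde{g}_\beta$ transfer from $U$ to $\calD$.
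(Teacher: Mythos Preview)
Your proposal is correct and follows essentially the same approach as the paper: Bonferroni-truncate each $\neg T_{P^\alpha}$ to get a low-degree polynomial, control the error via H\"older's inequality together with \lref{lm:binomialmoments}, and track multilinear degrees to transfer the moment bounds to $\calD$. The only cosmetic difference is that the paper expands $\prod_\alpha(p_k(X^\alpha)+e_k(X^\alpha))$ into all $2^r$ cross-terms and bounds each, whereas you telescope $\widetilde f - f$ into $r$ terms; this yields a slightly better constant ($2^{r+1}$ versus the paper's $2^{2r+1}$) but the same choice of truncation level and the same independence requirement $O(wr(w+r^2+r\log(1/\epsilon)))$.
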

\begin{proof}
The lemma follows from some careful approximations of elementary symmetric polynomials. We first setup some notation. Let $k > \max(2r,2e^2 v 2^{-w})$ be an even integer to be chosen later and let $x \sim \calD$ where $\calD$ is a $t$-wise independent distribution for $t \geq kwr$. For $\alpha \in [r]$, let $X^\alpha = (X^\alpha_1,\ldots,X^\alpha_v)$ where $X^\alpha_i = \wedge_{\ell \in P^\alpha_i} x_\ell$. Note that $T_{P^\alpha}(x) = \vee_{i=1}^u X^\alpha_i$. We will use inclusion-exclusion to approximate each $T_{P^\alpha}(x)$.  Let $p_k, e_k:\zo^v \to \zo$ be defined by 
$$p_k(z_1,\ldots,z_v) = \sum_{\ell = 0}^{k-1} (-1)^\ell S_\ell(z_1,\ldots,z_v),\;\;\;\; e_k(z_1,\ldots,z_v) = \sum_{\ell=k}^n (-1)^\ell S_\ell(z_1,\ldots,z_v).$$

\newcommand{\nor}{\neg OR}
By the inclusion-exclusion formula, for all $z \in \zo^v$, $\nor(z) = p_k(z) + e_k(z)$. Further, by Bonferroni inequalities, for all $z \in \zo^v$, 
\begin{equation}\label{eq:kwise1}
0 \leq e_k(z) \leq S_k(z).
\end{equation}
Now,
\begin{align}\label{eq:kwise4} 
1 - \vee_{\alpha \in [r]} T_{P^\alpha}(x) &= \prod_{\alpha \in [r]} \nor(X^\alpha) = \prod_{\alpha \in [r]} \bkets{p_k(X^\alpha) + e_k(X^\alpha)}\\\nonumber
&= \underbrace{\prod_{\alpha \in [r]} p_k(X^\alpha)}_{:= P_k(x)} + \underbrace{\sum_{I \neq \emptyset \subseteq [r]} \prod_{\alpha \notin I} p_k(X^\alpha)\prod_{\alpha \in I} e_k(X^\alpha) }_{:= E_k(x)}.
\end{align}
We view $P_k$ as a low-degree polynomial approximation for the left-hand-side and $E_k$ as the error term. Indeed, $P_k$ is of degree at most $kwr$ in $x$ as each $p_k(X^\alpha)$ is of degree at msot $k$ in $\{X^\alpha_1,\ldots,X^\alpha_v\}$'s which in-turn are of degree at most $w$ in $x$. Therefore, for $t \geq kwr$, $\E[P_k(x)]$ is the same under all $t$-wise independent distributions. We next bound the expectation of $E_k$ under $t$-wise independent distributions. By \eref{eq:kwise1}, $0 \leq e_k(z) \leq S_k(z)$; further, $|p_k(z)| = |\nor(z)- e_k(z)| \leq 1 + e_k(z) \leq 1 + S_k(z)$. Thus, for any fixed $\emptyset \neq I \subseteq [r]$, 
\begin{align*}
\E\sbkets{\abs{\prod_{\alpha \in I} e_k(X^\alpha) \prod_{\alpha \notin I} p_k(X^\alpha)}} &\leq \E\sbkets{\prod_{\alpha \in I} S_k(X^\alpha) \prod_{\alpha \notin I} (1 + S_k(X^\alpha))}\\
&\leq \prod_{\alpha \in I} \E\sbkets{S_k(X^\alpha)^r}^{1/r} \cdot \prod_{\alpha \notin I}\E\sbkets{\bkets{1 + S_k(X^\alpha)}^r}^{1/r}\\
&\text{ (by Fact \ref{fct:Holder})}\\
&\leq \prod_{\alpha \in I} \E\sbkets{S_k(X^\alpha)^r}^{1/r}  \cdot \prod_{\alpha \notin I} \bkets{1 + \E\sbkets{S_k(X^\alpha)^r}^{1/r}}\\
&\text{ (by Minkowski's inequality).}\\
\end{align*}
Note that $S_k(X^\alpha)^r$ is of degree at most $kwr$ as a polynomial in $x$. Therefore, $\E[S_k(X^\alpha)^r]$ is the same under all $t$-wise independent distributions and in particular the same as for $x \in_u \zo^n$. However, in this case $(X^\alpha_1,\ldots,X^\alpha_v)$ are independent indicator random variables with $\E[X^\alpha_i] = 2^{-w}$. Therefore, as $k \geq 2e^2 v 2^{-w}$, by \lref{lm:binomialmoments}, 
$$\E\sbkets{S_k(X^\alpha)^r} \leq (1/2)^{k} \leq 1.$$
Combining the above estimates we get that for all $I \neq \emptyset \subseteq [r]$, 
$$\E\sbkets{\abs{\prod_{\alpha \in I} e_k(X^\alpha) \prod_{\alpha \notin I} p_k(X^\alpha)}} \leq (1/2)^{(|I|)(k/r} \cdot 2^{r-|I|} \leq 2^r \cdot  (1/2)^{k/r}.$$
Plugging the above into equation \eref{eq:kwise4} we get $\E[|E_k(x)|] \leq 2^{2r} (1/2)^{k/r}$. Finally, as the above estimates hold under any $t$-wise independent distribution and $P_k$ is of degree at most $kwr$, we get that 
$$\abs{\pr_{x \in_u \zo^n}\sbkets{\vee_{\alpha \in [r]} T_{P^\alpha}(x) = 0} - \pr_{x \sim \calD}\sbkets{\vee_{\alpha \in [r]} T_{P^\alpha}(x) = 0}} \leq \E_{x \in_u}[|E_k(x)|] + \E_{x \sim \calD}[|E_k(x)|] \leq 2^{2r+1}(1/2)^{k/r}.$$
Plugging in $k = \max(3r(r + \log(1/\epsilon)), 2e v 2^{-w})$ we get the above error to be at most $\epsilon$. It follows that it suffices for $t = kwr$ to be $C(wr(r^2 + r\log(1/\epsilon) + w))$ for a sufficiently big constant $C$.
\end{proof}

\begin{proof}[Proof of \cref{cor:biascontrol}]
The proof of the corollary is similar to that of \lref{lm:bias01} and \tref{th:biascontrol} with one change. Using the notation from the above proof, for any $I \subseteq [u]$, the event $\prod_{\alpha \in I} (1 - Z_\alpha)$ corresponds to the unsatisfiability of $\vee_{\alpha \in I} T_{P^\alpha}$. By using the above theorem, we get an analogue of \eref{eq:janson} for the present case as well: for any $I \subseteq [u]$ with $|I| \leq d$, 
\begin{equation}\label{eq:jansonkwise}
\prod_{\alpha \in I} \E[(1-Z_\alpha)] - \epsilon \leq \E\sbkets{\prod_{\alpha \in I} ( 1 - Z_\alpha)} \leq \exp(2 \Delta_I) \cdot \prod_{\alpha \in I} \E[(1-Z_\alpha)] + \epsilon.
\end{equation}

The corollary now follows by using the above inequality in place of \eref{eq:janson} in the rest of the proof of \tref{th:biascontrol}. 
\end{proof}

\section{Analyzing influence}
\begin{proof}[Proof of \tref{th:influencecontrol}]
Let $Q \subseteq [n]$ with $|Q| = q$. Note that for every $\alpha \in [u]$, a partial assignment $x$ to the variables not in $Q$ leaves $T_{P^\alpha}$ undetermined if and only if
\begin{enumerate}
\item For every part $P^\alpha_j$ that does not intersect $Q$, $x_i = 0$ for some $i \in P^\alpha_j$.
\item For some $j \in [v]$ with $P^\alpha_j \cap Q \neq \emptyset$, $x_i = 1$ for every $i \in (P^\alpha_j \setminus Q)$. 
\end{enumerate}
The above two events are independent of each other. The probability of (1) is at most $(1-2^{-w})^{v-q}$ as there are at least $v-q$ parts of $P^\alpha$ that do not intersect $Q$. The probability of (2) is at most 
\begin{align*}
\sum_{j\in [v]: P^\alpha_j \cap Q \neq \emptyset} 2^{-\bkets{w - |P^\alpha_j \cap Q|}}.
\end{align*}
Therefore, for $\alpha \in_u [u]$, 
\begin{align*}
I_{Q}(f_\calP) &\leq \sum_{\alpha \in [u]} I_{Q}(T_{P^\alpha})\leq u \cdot \E_{u \in_u [u]}\sbkets{ I_{Q}(T_{P^\alpha})}\\
&\leq u \cdot  \E\sbkets{ (1-2^{-w})^{v-q} \cdot \bkets{\sum_{j \in [v]:  P^\alpha_j \cap Q \neq \emptyset} 2^{-w + |P^\alpha_j \cap Q|}}}\\
&\leq u (1-2^{-w})^{v-q} 2^{-w} \cdot \bkets{\sum_{j \in [v]} \E_{\alpha \in_u [u]}\sbkets{\mathsf{1}( P^\alpha_j \cap Q \neq \emptyset) 2^{|P^\alpha_j \cap Q|}}}\\
&\leq u (1-2^{-w})^{v-q} \cdot 2^{-w} \cdot \tau q.\\
\end{align*}
\end{proof}

We next a state a version of \tref{th:influencecontrol} for distributions with limited independence.

\begin{corollary}\label{cor:kwiseinfluence}
Let $\calP = \{P^1,\ldots,P^u\}$ be a collection of partitions of $[n]$ into $w$-sized blocks that is $(q,\tau)$-load balancing. Then, for all $t \geq C w \log(1/\epsilon)$ for some sufficiently big constant $C$, 
$$I_{q,t}(f_\calP) \leq (u (1-2^{-w})^{v-q}) \cdot (\tau 2^{-w}) q + (u v) \epsilon.$$
\end{corollary}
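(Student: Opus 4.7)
The plan is to adapt the proof of Theorem \ref{th:influencecontrol} by using Theorem \ref{th:rcnf} to bound probabilities under $t$-wise independent distributions. Fix $Q \subseteq [n]$ with $|Q| \leq q$ and a $t$-wise independent distribution $\calD$ on $\zo^{[n] \setminus Q}$. If $f_\calP = \wedge_\alpha T_{P^\alpha}$ is undetermined by the partial assignment, then at least one $T_{P^\alpha}$ must itself be undetermined (otherwise all $T_{P^\alpha}$ have fixed values and so does $f_\calP$). Hence by a union bound,
$$I_{Q,\calD}(f_\calP) \leq \sum_{\alpha \in [u]} \pr_{\calD}[T_{P^\alpha} \text{ is undetermined}].$$

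As in the proof of Theorem \ref{th:influencecontrol}, $T_{P^\alpha}$ is undetermined precisely when $A_\alpha \wedge B_\alpha$ holds, where $A_\alpha$ is the event that no block $P^\alpha_{j'}$ with $P^\alpha_{j'} \cap Q = \emptyset$ has all its variables set to $1$, and $B_\alpha = \bigvee_{j: P^\alpha_j \cap Q \neq \emptyset} B_{\alpha,j}$ with $B_{\alpha,j}$ being the event $\bigwedge_{i \in P^\alpha_j \setminus Q} x_i = 1$. Note that $A_\alpha$ depends only on variables in parts disjoint from $Q$, while $B_{\alpha,j}$ depends only on variables in $P^\alpha_j \setminus Q$; these are disjoint sets of variables.

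The key step is to observe that for any fixed $j$ with $P^\alpha_j \cap Q \neq \emptyset$, the conjunction $A_\alpha \wedge B_{\alpha,j}$ is expressible as a read-once CNF of width at most $w$: the clauses of $A_\alpha$ are disjunctions of $w$ negated literals on pairwise disjoint blocks, and $B_{\alpha,j}$ contributes additional singleton clauses on a further disjoint block. Applying Theorem \ref{th:rcnf} with $t \geq Cw \log(1/\epsilon)$ gives
$$\pr_\calD[A_\alpha \wedge B_{\alpha,j}] \leq \pr_U[A_\alpha \wedge B_{\alpha,j}] + \epsilon = \pr_U[A_\alpha] \cdot \pr_U[B_{\alpha,j}] + \epsilon,$$
where the equality uses independence under the uniform distribution (disjoint variables). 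A union bound over the at most $\min(q,v) \leq v$ relevant values of $j$ then bounds $\pr_\calD[A_\alpha \wedge B_\alpha]$ by $\pr_U[A_\alpha] \cdot \sum_j \pr_U[B_{\alpha,j}] + v\epsilon$.

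Plugging in $\pr_U[A_\alpha] \leq (1 - 2^{-w})^{v-q}$ and $\pr_U[B_{\alpha,j}] = 2^{-(w - |P^\alpha_j \cap Q|)}$, summing over $\alpha \in [u]$, and applying the $(q,\tau)$-load balancing property exactly as in the proof of Theorem \ref{th:influencecontrol}, yields the main term $u (1-2^{-w})^{v-q} \cdot (\tau 2^{-w}) q$, while the additive $\epsilon$ errors accumulate to at most $uv\epsilon$. I do not foresee a significant obstacle: the only non-trivial ingredient is Theorem \ref{th:rcnf}, and the rest is a direct extension of the original argument.
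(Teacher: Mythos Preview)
Your proof is correct, and it uses the same core tool (Theorem~\ref{th:rcnf}) as the paper, but the decomposition is different. The paper handles $\pr_\calD[A_\alpha \wedge B_\alpha]$ via the identity $\pr[A_\alpha \wedge B_\alpha] = \pr[A_\alpha] - \pr[A_\alpha \wedge \neg B_\alpha]$, observing that both $A_\alpha$ and $A_\alpha \wedge \neg B_\alpha$ are read-once width-$w$ CNFs (the latter because negating the DNF $B_\alpha$ yields a CNF whose clauses sit on the remaining blocks). This costs two applications of Theorem~\ref{th:rcnf} per $\alpha$, hence an additive error of order $u\epsilon$. You instead union-bound over the terms of $B_\alpha$ and note that each $A_\alpha \wedge B_{\alpha,j}$ is a read-once CNF, paying $\epsilon$ for each of the at most $v$ relevant $j$, for a total additive error of $uv\epsilon$. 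Your route is arguably more direct and avoids the subtraction trick, while the paper's route gives a slightly tighter error (by a factor of $v$); both meet the stated bound, and the downstream applications are insensitive to the difference.
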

\begin{proof}
The argument is similar to that of \cref{cor:biascontrol}. Let $\calD$ be a $t$-wise independent distribution on $\zo^n$. As in the above proof, observe that $Q$ leaves $T_{P^\alpha}$ undetermined if and only if 
\begin{enumerate}
\item For every part $P^\alpha_j$ that does not intersect $Q$, $x_i = 0$ for some $i \in P^\alpha_j$.
\item For some $j \in [v]$ with $P^\alpha_j \cap Q \neq \emptyset$, $x_i = 1$ for every $i \in (P^\alpha_j \setminus Q)$. 
\end{enumerate}
However, the above two events are no longer independent. For a fixed $\alpha$, let $J \subseteq [v]$ be all parts $P^\alpha_j$ that do not intersect $Q$. Then, the first condition above is equivalent to
$$f_1(x) := \bigwedge_{j \in J} \bkets{\bigvee_{i \in P^\alpha_j} (\neg x_i)}.$$
Similarly, the second condition is equivalent to 
$$f_2(x) := \bigvee_{j \notin J} \bkets{\bigwedge_{i \in P^\alpha_j\setminus Q} x_j}.$$
We are interested in 
$$\pr_{x \lfta \calD}[ f_1(x) \wedge f_2(x) = 1] = \pr_{x \lfta \calD}[f_1(x) = 1] - \pr_{x \lfta \calD}[f_1(x) \wedge (\neg f_2(x)) = 1].$$
Now, observe that
$$f_3(x) := f_1(x) \wedge (\neg f_2(x)) = \bigwedge_{j \in J} \bkets{\bigvee_{i \in P^\alpha_j} (\neg x_i)} \bigwedge_{j \notin J} \bkets{\bigvee_{i \in P^\alpha_j\setminus Q} (\neg x_j)}.$$
Now, $f_1,f_3$ are both read-once CNFs, that is CNF formulas where each variable appears at most once. Therefore, by \tref{th:rcnf}, for $t \gg w \log(1/\epsilon)$, 
$$\abs{\pr_{x \lfta \calD}[f_1(x) = 1] - \pr_{x \in_u \zo^n}[f_1(x) = 1]} \leq \epsilon,\;\;\;\;\abs{\pr_{x \lfta \calD}[f_3(x) = 1] - \pr_{x \in_u \zo^n}[f_3(x) = 1]}  \leq \epsilon. $$
Therefore, 
\begin{multline*}
\pr_{x \lfta \calD}[ f_1(x) \wedge f_2(x) = 1] \leq \pr_{x \in_u \zo^n}[ f_1(x) \wedge f_2(x) = 1] + 2 \epsilon \leq \\
(1-2^{-w})^{v-q} \cdot \bkets{\sum_{j\in [v]: P^\alpha_j \cap Q \neq \emptyset} 2^{-w + |P^\alpha_j \cap Q|} } + 2\epsilon,
\end{multline*}
where the last inequality follows from the arguments of \tref{th:influencecontrol}. The main statement now follows by repeating the calculations of \tref{th:influencecontrol} with the above equation leading to an additional error of $2(u v) \epsilon$.
\end{proof}

\section{Oblivious sampler preserving the moment generating function}\label{sec:expsampler} 
Here we prove \tref{th:expsampler}; it will be the main building block in our final construction of resilient functions. With a view towards future use, we modify the construction presented in the introduction (\eref{eq:expintro}) even if the simpler construction described there suffices for \tref{th:expsampler}. 

Let $\epsilon = \mu/w$, and $c$ a sufficiently big constant to be chosen later. Let $E:[v^c] \times [D] \to [v/D]$ be a $((c \log v)/2, \epsilon)$-strong extractor as in \tref{th:zextractor} with $D = ((\log v)/\epsilon)^C$ for some universal constant $C$. Without loss of generality, suppose that $D$ is prime. For a parameter $\ell \geq 1$ to be chosen later, let $G_\ell:[D]^\ell \to [D]^w$ generate a $\ell$-wise independent distribution as in \lref{lm:kwisecode}. 

Define $G:[v^c] \times [D]^\ell \to [v]^w$ as follows:
\begin{equation}\label{eq:genmain}
G(x,y)_i = G_\ell(y)_i \circ E(x,G_\ell(y)_i),
\end{equation}
where we associate $[D] \times [v/D]$ with $[v]$ in a straightforward manner. 

To analyze the generator we shall use the following lemma about random variables with limited independence. A similar statement appears in \cite{GopalanKM15}; however, our setting is considerably simpler and we give a direct proof in the appendix. 
\begin{lemma}\label{lm:samplerkwise}
Let $Y_1,\ldots,Y_w$ be $\ell$-wise independent random variables supported on $[0,1]$. Then, for all $\theta > 0$, 
$$\E\sbkets{\exp(\theta (Y_1 + \cdots + Y_w))} \leq \prod_i \E[\exp(\theta Y_i)] +\exp(2 \theta w)  \bkets{\frac{e\sum_i \E[Y_i]}{\ell}}^\ell.$$
\end{lemma}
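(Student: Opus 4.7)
The plan is to exploit the product structure of the MGF. Let $W = Y_1+\cdots+Y_w$, $\mu_i = \E[Y_i]$, $M = \sum_i\mu_i$, and set $Z_i := \exp(\theta Y_i) - 1$, so that each $Z_i$ is a nonnegative function of $Y_i$ alone with $Z_i \leq (e^\theta-1)Y_i$ (by convexity of $\exp$ on $[0,1]$) and $Y_i \leq 1$. Then
$$\exp(\theta W) \;=\; \prod_{i=1}^w (1+Z_i) \;=\; \sum_{S\subseteq[w]}\prod_{i\in S}Z_i.$$
Because each $Z_i$ depends only on $Y_i$, the $\ell$-wise independence hypothesis carries over to $\{Z_i\}$ and to $\{Y_i\}\cup\{Z_j\}$ mixtures indexed by subsets of size at most $\ell$.

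The natural split is into the low-order part $|S|<\ell$ and the tail $|S|\geq\ell$. For $|S|<\ell$, $\ell$-wise independence gives $\E_\calD[\prod_{i\in S}Z_i] = \prod_{i\in S}\E[Z_i]$, and by nonnegativity
$$\sum_{|S|<\ell}\prod_{i\in S}\E[Z_i] \;\leq\; \prod_{i=1}^w(1+\E[Z_i]) \;=\; \prod_{i=1}^w \E[\exp(\theta Y_i)],$$
which is exactly the leading term in the desired bound.

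For the tail the key move is to reduce an $|S|$-wise quantity to an $\ell$-wise one using the boundedness $Y_i \leq 1$ (rather than $Z_i \leq e^\theta$, which would only yield a looser bound). For any $T\subseteq S$ with $|T|=\ell$,
$$\prod_{i\in S}Z_i \;\leq\; (e^\theta-1)^{|S|}\prod_{i\in S}Y_i \;\leq\; (e^\theta-1)^{|S|}\prod_{i\in T}Y_i,$$
so taking $\calD$-expectations (by $\ell$-wise independence) and averaging uniformly over all $T \subseteq S$ with $|T|=\ell$,
$$\E_\calD\!\left[\prod_{i\in S}Z_i\right] \;\leq\; \frac{(e^\theta-1)^{|S|}}{\binom{|S|}{\ell}}\, e_\ell\bigl(\{\mu_i\}_{i\in S}\bigr).$$
Summing over $S$, using the identity $\sum_{|S|=k} e_\ell(\{\mu_i\}_{i\in S}) = \binom{w-\ell}{k-\ell}\, e_\ell(\mu)$ and $\binom{k}{\ell}\geq 1$, gives
$$\sum_{|S|\geq\ell}\E_\calD\!\left[\prod_{i\in S}Z_i\right] \;\leq\; e_\ell(\mu)\sum_{k=\ell}^{w}(e^\theta-1)^k \binom{w-\ell}{k-\ell} \;=\; e_\ell(\mu)\,(e^\theta-1)^\ell\, e^{\theta(w-\ell)} \;\leq\; e_\ell(\mu)\, e^{\theta w}.$$
Finally $e_\ell(\mu)\leq M^\ell/\ell! \leq (eM/\ell)^\ell$ yields tail $\leq (eM/\ell)^\ell e^{\theta w} \leq (eM/\ell)^\ell\exp(2\theta w)$, completing the proof.

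The one genuine obstacle is the tail step: a naive bound $Z_i\leq e^\theta$ would introduce a $(1+e^\theta)^{w-\ell}\geq 2^{w-\ell}$ factor, which is not absorbed into $\exp(2\theta w)$ for small $\theta$. Using $Y_i\leq 1$ instead to reduce to $\prod_{i\in T}Y_i$ is what keeps the base of the geometric sum at $(1+(e^\theta-1))=e^\theta$ and lets the factors collapse cleanly into $e^{\theta w}$. Everything else—Holder-free averaging over $T$, the Maclaurin bound $e_\ell(\mu)\leq(eM/\ell)^\ell$, and matching with $\prod\E[\exp(\theta Y_i)]$—is routine.
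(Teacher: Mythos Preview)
Your proof is correct and follows essentially the same route as the paper: set $Z_i=e^{\theta Y_i}-1$, expand $\prod_i(1+Z_i)$ into symmetric polynomials, handle the terms of order $<\ell$ by $\ell$-wise independence, and bound the tail by an $S_\ell$-type quantity times $e^{\theta w}$. The only real difference is in the tail: the paper uses the pointwise bound $Z_i\le A:=e^\theta-1$ to get $\sum_{k\ge\ell}S_k(Z)\le (1+A)^w S_\ell(Z)=e^{\theta w}S_\ell(Z)$ and then $\E[S_\ell(Z)]\le\binom{w}{\ell}(e^\theta\mu/w)^\ell$, whereas you first pass to $Z_i\le(e^\theta-1)Y_i$ and then use $Y_i\le 1$ together with the double-counting identity $\sum_{|S|=k}e_\ell(\mu|_S)=\binom{w-\ell}{k-\ell}e_\ell(\mu)$; your version even lands at $e^{\theta w}(eM/\ell)^\ell$, a hair sharper than the stated $e^{2\theta w}$ bound.

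One small correction to your closing commentary: the ``naive'' bound $Z_i\le e^\theta-1$ (not $e^\theta$) is exactly what the paper uses, and since $(1+(e^\theta-1))^w=e^{\theta w}$ it does \emph{not} produce an uncontrolled $2^{w-\ell}$ factor---so the obstacle you describe is not actually present, and both approaches collapse the tail cleanly.
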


Towards proving \tref{th:expsampler}, we first prove a lemma with some precise but cumbersome bounds on the moment generating function.

\begin{lemma}\label{lm:mtech}
Let $G:[v^c] \times [D]^\ell \to [v]^w$ be as in \eref{eq:genmain} and let $f_1,\ldots,f_w:[v] \to [0,1]$ be functions with $\sum_i \E[f_i] = \mu$. Let $\alpha = G(x,y)$ for $(x,y) \in_u [v^c] \times [D]^\ell$. Then, for all $\theta \geq 0$, 
$$\E_\alpha\sbkets{\exp\bkets{\theta \sum_i f_i(\alpha_i)}} \leq \exp(e^\theta w \epsilon) \cdot \exp(e^\theta \mu) + \exp(2 \theta w) \bkets{\frac{e(\mu+w\epsilon)}{\ell}}^\ell + \exp(\theta w) \cdot (w v^{-c/2}).$$
\end{lemma}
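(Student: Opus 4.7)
The plan is to condition on the extractor source $x$, apply \lref{lm:samplerkwise} to exploit the $\ell$-wise independence supplied by $G_\ell(y)$, and then invoke the extractor sampling guarantee \lref{lm:extsampler} to control the typical behaviour over $x$. In detail, for fixed $x \in [v^c]$ and $y \in_u [D]^\ell$, set $Y_i := f_i(G(x,y)_i)$. Since $G_\ell(y)$ is $\ell$-wise independent with uniform marginals on $[D]$, the $Y_i$'s are $\ell$-wise independent $[0,1]$-valued random variables with conditional means
$$p_i(x) := \E_y[Y_i] = \frac{1}{D}\sum_{z \in [D]} f_i\bkets{z \circ E(x,z)}.$$
Applying \lref{lm:samplerkwise} and using the elementary inequality $e^{\theta t} \leq 1 + (e^\theta - 1)t \leq \exp(e^\theta t)$ for $t \in [0,1]$ to bound $\prod_i \E_y[\exp(\theta Y_i)] \leq \exp(e^\theta \sum_i p_i(x))$, I obtain
$$\E_y\sbkets{\exp\bkets{\theta\textstyle\sum_i Y_i}} \leq \exp\bkets{e^\theta\textstyle\sum_i p_i(x)} + \exp(2\theta w)\bkets{\frac{e\sum_i p_i(x)}{\ell}}^\ell.$$

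Next I would identify the "good" set of sources $x$ on which $\sum_i p_i(x) \leq \mu + w\epsilon$. For each $i \in [w]$, define the $[0,1]$-valued family $\{g_{i,z}\}_{z \in [D]}$ on $[v/D]$ by $g_{i,z}(y') = f_i(z \circ y')$; since $E(x,z)$ is uniform on $[v/D]$ when $(x,z)$ is uniform, the family average equals $\E_{[v]}[f_i] =: \mu_i$, and $p_i(x) = (1/D)\sum_z g_{i,z}(E(x,z))$. By \lref{lm:extsampler} applied to $E$ with error $\epsilon$ (extended to $[0,1]$-valued test functions in the standard way), for each $i$ there are at most $2^{(c\log v)/2} = v^{c/2}$ values of $x$ on which $|p_i(x) - \mu_i| > \epsilon$. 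A union bound over $i \in [w]$ then yields a bad set $B \subseteq [v^c]$ of density at most $w v^{-c/2}$ such that $\sum_i p_i(x) \leq \mu + w\epsilon$ holds for every $x \notin B$.

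The proof finishes by taking expectation over $x$ and splitting on $B$. For $x \notin B$, substituting the preceding bound on $\sum_i p_i(x)$ into the two terms above produces the first two summands $\exp(e^\theta \mu)\exp(e^\theta w\epsilon) + \exp(2\theta w)(e(\mu+w\epsilon)/\ell)^\ell$ of the claimed inequality. For $x \in B$, the trivial bound $\exp(\theta\sum_i Y_i) \leq \exp(\theta w)$ contributes at most $(|B|/v^c)\exp(\theta w) \leq w v^{-c/2}\exp(\theta w)$, accounting for the last summand. The one subtle point in the bookkeeping is the interplay between the two sources of randomness: $y$ supplies the $\ell$-wise independence needed for \lref{lm:samplerkwise}, whereas $x$ supplies the extractor sampling guarantee of \lref{lm:extsampler}. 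In particular, recognising that in \lref{lm:extsampler} the $D$ test functions are indexed by $z \in [D]$ (not by $i \in [w]$) forces one union bound over $i$, which is precisely what yields the extra factor of $w$ in the error contribution from the bad set.
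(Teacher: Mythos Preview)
Your proposal is correct and follows essentially the same route as the paper: condition on $x$, invoke \lref{lm:samplerkwise} using the $\ell$-wise independence from $G_\ell(y)$, bound $\prod_i \E_y[\exp(\theta Y_i)]$ by $\exp(e^\theta\sum_i p_i(x))$ via the convexity inequality $e^{\theta t}\le 1+(e^\theta)t$ for $t\in[0,1]$, use \lref{lm:extsampler} coordinate-wise with a union bound over $i\in[w]$ to isolate a bad set of density at most $wv^{-c/2}$, and handle bad $x$ with the trivial bound $\exp(\theta w)$. Your remark that \lref{lm:extsampler} must be applied to $[0,1]$-valued test functions (rather than $\{0,1\}$-valued as stated) is a point the paper leaves implicit; the extension is indeed standard.
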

\begin{proof}
Let $Y_i = f_i(\alpha_i)$. Note that for any fixed $x$, the random variables $Y_1,\ldots, Y_w$ are $\ell$-wise independent with respect to the randomness of $y$. We bound the expectation of $\exp\bkets{\theta \sum_i f_i(\alpha_i)} = \exp\bkets{\theta \sum_i Y_i}$ as follows:
\begin{itemize}
\item Using sampling properties of extractors, \lref{lm:extsampler}, w show that with probability at least $1-w v^{-c/2}$ over $x \in_u [v^c]$, $\sum_i \E_y [Y_i] \leq \mu + w \epsilon$. 
\item We then apply the previous lemma to the $Y_i$'s conditioned on $x$ satisfying the above event.
\end{itemize}

For $i \in [w]$, call $x \in [v^c]$ $i$-bad if 
$$\abs{\E[Y_i | x] - \mu_i} = \abs{\E_{z \in_u [D]}[f_i(z \circ E(x,z))] - \mu_i} \geq \epsilon.$$
Call $x \in [v^c]$ \emph{bad} if it is $j$-bad for some $j \in [w]$ and \emph{good} otherwise.
Fix $j \in [w]$. For $z \in [D]$, define $g_z:[v/D] \to [0,1]$ by $g_z(x') = f_j(z \circ x')$. Then, $\sum_{z \in [D]} \E_{x' \in_u [v/D]}[g_z(x')] = D \E_j[f_j] = D\mu_j$; thus, $x$ is $j$-bad if and only if 
$$\abs{(1/D) \sum_{z \in [D]} g_z(E(x,z)) - \mu_j} \geq \epsilon.$$
Therefore, by \lref{lm:extsampler}, for every $j \in [w]$, there are at most $v^{c/2}$ bad strings. Thus, $\pr_{x \in_u [v^c]}[x \text{ is $j$-bad}] \leq v^{-c/2}$. Then, by a union bound, $\pr_{x \in_u [v^c]}[$x$\text{ is bad}] \leq w v^{-c/2}$. Finally, conditioned on $x$ being good, \begin{equation}\label{eq:mtech01}
\sum_j \E_y [Y_j] = \sum_j \E_{z \in_u [D]}[f_j(z \circ E(x,z))] \leq \sum_j (\mu_j + \epsilon) = \mu + w \epsilon.
\end{equation}

Now, conditioned on $x$, the random variables $Y_1,\ldots,Y_w$ are $\ell$-wise independent by the definition of $G(x,y)$. Therefore, by \lref{lm:samplerkwise}, for good $x$,
\begin{align*}
\E_y\sbkets{\exp\bkets{\theta\sum_i f_i(\alpha_i)}} &= \E_y \sbkets{\exp\bkets{\sum_i  Y_i}}\\
&\leq \prod_i \E_y[\exp\bkets{\theta Y_i}] + \exp(2 \theta w) \bkets{\frac{e(\mu+w\epsilon)}{\ell}}^\ell\\
&\leq \prod_i \bkets{1 + (e^\theta)\E_y[Y_i]} + \exp(2 \theta w) \bkets{\frac{e(\mu+w\epsilon)}{\ell}}^\ell\\
&\leq \exp((e^\theta) \sum_i \E_y[Y_i])+ \exp(2 \theta w) \bkets{\frac{e(\mu+w\epsilon)}{\ell}}^\ell\\
&\leq \exp((e^\theta) (\mu + w\epsilon)) + \exp(2 \theta w) \bkets{\frac{e(\mu+w\epsilon)}{\ell}}^\ell.
\end{align*}
Finally, as $\exp(\theta \sum_i f_i(\alpha_i)) \leq \exp(\theta w)$ always, we get that
\begin{align*}
\E_{x,y}\sbkets{\exp\bkets{\theta\sum_i f_i(\alpha_i)}} &\leq \exp((e^\theta) (\mu + w\epsilon))  + \exp(2 \theta w) \bkets{\frac{e(\mu+w\epsilon)}{\ell}}^\ell + \pr[\text{$x$ is bad}] \exp(\theta w)\\
&\leq \exp(e^\theta w \epsilon) \cdot \exp(e^\theta \mu) + \exp(2 \theta w) \bkets{\frac{e(\mu+w\epsilon)}{\ell}}^\ell + \exp(\theta w) \cdot (w v^{-c/2}).
\end{align*}
\end{proof}

\begin{proof}[Proof of \tref{th:expsampler}]
The theorem follows by applying the above lemma with $\epsilon = \mu/w$, $\theta = \ln 2$, $\ell = 12w/(\log w)$. Note that for this setting, there exist extractors as in \eref{eq:genmain} with $D = ((\log v)/\epsilon)^{O(1)}$ so that 
$$\log D = O(\log \log v + \log w + \log(1/\mu)).$$
With this setup, as $\mu \leq 1$, we get 
\begin{align*}
\E_{z \in_u \zo^r}\sbkets{\exp\bkets{\theta\sum_i f_i(G(z)_i)}} &\leq \exp(2w\epsilon) \cdot \exp(2 \mu) + \exp(2w) \bkets{\frac{2e\mu}{\ell}}^\ell + \exp(w) \cdot w v^{-c/2}\\
&\leq 1 + O(\mu) + \exp(w) \cdot w v^{-c/2},
\end{align*}
as $\epsilon \leq \mu/w$. We now set $c = C \max\bkets{1, (w+\log(1/\mu))/(\log v)}$ for a sufficiently large constant $C$ so that the last term is also $O(\mu)$. 

The seed-length of the generator is 
\begin{equation*}
r = c \log v + \ell(\log D) = O(\log v +w + w((\log \log v) + (\log(1/\mu)))/(\log w)).
\end{equation*}

The theorem now follows.
\end{proof}
\section{Explicit resilient functions}
Here we present our main construction proving \tref{th:main}. Fix $v, w$. For a string $\alpha \in [v]^w$, define an associated partition $P^\alpha$ of $[n] \equiv [v w]$ into $w$-sized blocks as follows: 
\begin{itemize} 
\item Write $\{1,\ldots v w\}$ from left to right in $w$ blocks of length $v$ each. Now, permute the $k$'th block by shifting the integers in that block by adding $\alpha_k$ modulo $v$.
\item The $i$'th part now comprises of the elements in the $i$'th position in each of the $w$ blocks.
\end{itemize}
Formally, for $i \in [v]$ $P^\alpha_i = \{(k-1)v + ((i-\alpha_k) \mod v): k \in [w]\}$. As in \cite{ChattopadhyayZ15}, our final function will be $f_\calP$ for $\calP = \{P^\alpha: \alpha \subseteq \calU\}$ for a suitably chosen set of strings $\calU \subseteq [v]^w$. 

\newcommand{\frs}{f_\mathcal{RS}}
\subsection{Polynomially resilient functions from Reed-Solomon code}\label{app:polyresilient}
For intuition, we first use our arguments to present a simpler variant of the construction of \cite{ChattopadhyayZ15} (e.g., the function below is depth $3$ as opposed to the depth $4$ construction of \cite{ChattopadhyayZ15}) to get a $(n^{1-\delta})$-resilient function from Reed-Solomon codes as alluded to in the introduction.

Let $1\leq w \leq v$, where $v$ is prime. For some parameter $\ell \geq 1$ to be chosen later, let $G_\ell:[v]^\ell \to [v]^w$ be an $\ell$-wise independent generator as in \lref{lm:kwisecode} and let $\mathcal{RS} = \{G_\ell(x): x \in [v]^\ell\}$. Let $f \equiv f_\mathcal{RS} = f_\calP$, where $\calP = \{P^\alpha: \alpha \in \mathcal{RS}\}$. We show that for any constant $0 < \beta < 1$, and $\ell \geq 1/2\beta$, $\frs$ is $\Omega(n^{1-\beta})$-resilient and has bias $1/2 \pm n^{-\Omega(1)}$. 

\begin{lemma}\label{lm:frsdesign}
For $1 \leq \ell \leq w$, $\frs$ as defined above is a $(w-\ell)$-design. 
\end{lemma}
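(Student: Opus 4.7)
The plan is to unwind the definitions: since each partition $P^\alpha$ arises from cyclic shifts applied to the $w$ consecutive length-$v$ blocks of $[n]$, any two blocks $P^\alpha_i$ and $P^\beta_j$ can only share elements that lie in the same shift-block $k \in [w]$ (elements in the $k$-th shift-block have ``block index'' $k$ encoded in their numerical value via the $(k-1)v$ term). So I would first reduce the overlap count to a purely combinatorial condition on $\alpha - \beta$.

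Concretely, I would fix $\alpha \neq \beta \in \mathcal{RS}$ and $i, j \in [v]$, and observe from the formula $P^\alpha_i = \{(k-1)v + ((i - \alpha_k) \bmod v) : k \in [w]\}$ that an element of $P^\alpha_i$ equals an element of $P^\beta_j$ if and only if they share the same block index $k$ and satisfy $i - \alpha_k \equiv j - \beta_k \pmod{v}$, i.e.\ $\alpha_k - \beta_k \equiv i - j \pmod{v}$. Therefore
\[
|P^\alpha_i \cap P^\beta_j| \;=\; \bigl|\{k \in [w] : \alpha_k - \beta_k \equiv (i - j) \bmod v\}\bigr|.
\]

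The second step is to invoke the shift-Hamming distance guarantee of the Reed-Solomon generator from \lref{lm:kwisecode}: for any two distinct codewords $\alpha, \beta \in \mathcal{RS}$, $d_H(\alpha,\beta) \geq w - \ell$, which by the definition of $d_H$ means that for \emph{every} shift $a \in [v]$, the set $\{k : \alpha_k - \beta_k \equiv a \pmod v\}$ has size at most $\ell$. Applying this with $a := (i - j) \bmod v$ gives $|P^\alpha_i \cap P^\beta_j| \leq \ell = w - (w - \ell)$, which is exactly the definition of a $(w-\ell)$-design.

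There is no real obstacle here; the lemma is essentially a translation of the shift-Hamming distance property of Reed-Solomon codes into the combinatorial language of the partitions $P^\alpha$. The only thing to be careful about is to match conventions: the cyclic shifts used in the definition of $P^\alpha$ line up precisely with the ``shift'' in the definition of $d_H$, and the two ``$\bmod v$'' operations match because $v$ is prime (and in particular the coordinate alphabet of the Reed-Solomon code coincides with the group $\mathbb{Z}/v\mathbb{Z}$ used for the shifts).
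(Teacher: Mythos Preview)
Your proof is correct and follows essentially the same route as the paper: both reduce the overlap $|P^\alpha_i \cap P^\beta_j|$ to the count $|\{k \in [w] : \alpha_k - \beta_k \equiv (i-j) \bmod v\}|$ and then bound this by $\ell$ via the shift-Hamming distance guarantee $d_H(\alpha,\beta) \geq w-\ell$ from \lref{lm:kwisecode}. The paper phrases the intermediate step as $|P^\alpha_i \cap P^\beta_j| \leq w - d_H(\alpha,\beta)$, but this is exactly your observation unpacked.
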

\begin{proof}
First note that for any $\alpha, \beta \in [v]^w$ and $i,j \in [v]$,, $|P^\alpha_i \cap P^\beta_j|  = |\{k \in [w]: \beta_k - \alpha_k = (j-i) \mod v\}| \leq w - d_H(\alpha,\beta)$. From the properties of $\mathcal{RS}$ as in \lref{lm:kwisecode}, for $\alpha \neq \beta \in \mathcal{RS}$, $d_H(\alpha,\beta) \geq w-\ell$. Therefore, for $\alpha \neq \beta \in \mathcal{RS}$, $|P^\alpha_i \cap P^\beta_j| \leq w - d_H(\alpha,\beta) \leq \ell$. The claim now follows from the definition of design.
\end{proof}

\begin{lemma}\label{lm:frsbalance}
For $1 \leq \ell \leq w/2$, $\mathcal{RS}$ is $(q,\tau)$-load balancing for $\tau = 2^{\ell} + 2^w (q/v)^{\ell-1}$. 
\end{lemma}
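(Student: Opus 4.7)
The plan is to reduce the load-balancing condition to a bound on $\E[\mathsf{1}(Y \geq 1) 2^Y]$ for a sum $Y$ of $\ell$-wise independent Bernoullis, then expand $2^Y$ combinatorially and split into a low-order contribution controlled by $\ell$-wise independence and a high-order tail bounded crudely by $2^w(q/v)^\ell$.

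First, for fixed $Q \subseteq [n]$ with $|Q| \leq q$ and $j \in [v]$, the explicit form $P^\alpha_j = \{(k-1)v + ((j-\alpha_k) \bmod v) : k \in [w]\}$ shows $P^\alpha_j$ meets each block $B_k := \{(k-1)v+1,\ldots,kv\}$ in exactly one element, and this element lies in $Q$ iff $\alpha_k$ lies in a specified set $A_k \subseteq [v]$ of size $q_k := |Q \cap B_k|$. Hence $Y := |Q \cap P^\alpha_j| = \sum_{k=1}^w X_k$ for the indicators $X_k := \mathsf{1}(\alpha_k \in A_k)$. Since $\alpha$ is drawn uniformly from $\mathcal{RS}$, \lref{lm:kwisecode} makes the tuple $\alpha$, and hence $\{X_k\}$, $\ell$-wise independent; each $X_k$ is Bernoulli with $p_k = q_k/v \leq q/v$ and $\mu := \sum_k p_k = |Q|/v \leq q/v$.

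Next, expanding $2^Y = \prod_k (1 + X_k) = \sum_{S \subseteq [w]} \prod_{k \in S} X_k$ and writing $\mathsf{1}(Y \geq 1) 2^Y = 2^Y - \mathsf{1}(Y=0)$, I would bound three pieces. (i) For $|S| \leq \ell$, $\ell$-wise independence and Fact~\ref{fact:symmax} give $\sum_{|S| \leq \ell} \E[\prod_{k \in S} X_k] = \sum_{a=0}^\ell S_a(p) \leq \sum_{a} \mu^a/a! \leq e^\mu$. (ii) For $|S| > \ell$, I would use the key monotonicity: since $X_k \in \{0,1\}$, $\prod_{k \in S} X_k \leq \prod_{k \in S'} X_k$ for any $S' \subset S$; choosing $|S'| = \ell$ and applying $\ell$-wise independence gives $\E[\prod_{k \in S} X_k] \leq \prod_{k \in S'} p_k \leq (q/v)^\ell$, so $\sum_{|S| > \ell} \E[\prod_{k \in S} X_k] \leq 2^w (q/v)^\ell$. (iii) A union bound gives $\Pr[Y=0] \geq 1 - \mu$. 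Combining these yields $\E[\mathsf{1}(Y \geq 1) 2^Y] \leq (e^\mu - 1 + \mu) + 2^w (q/v)^\ell$.

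Finally I would check this is at most $\tau (q/v) = 2^\ell (q/v) + 2^w (q/v)^\ell$. When $\mu \leq 1$ and $\ell \geq 2$, convexity gives $e^\mu - 1 \leq (e-1)\mu$, so the polynomial-in-$\mu$ piece is at most $e \mu \leq 2^\ell (q/v)$. For $\ell = 1$, one sharpens step~(i) to $\sum_{|S| \leq 1} = 1 + \mu$, yielding $2\mu + 2^w (q/v) \leq \tau(q/v)$ directly. When $\mu > 1$, we have $q/v > 1$, and the trivial bound $\E[\mathsf{1}(Y \geq 1) 2^Y] \leq 2^w \leq 2^w (q/v)^\ell \leq \tau (q/v)$ finishes the claim. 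The main obstacle is step~(ii): $\ell$-wise independence alone says nothing about products over subsets of size $> \ell$, and a naive union bound over such $S$ naturally produces a factor of $2^w$. The monotonicity trick---dropping coordinates from a $\{0,1\}$-valued product to reduce back to an $\ell$-wise independent expectation---is exactly what matches this $2^w$ factor to the $2^w$ appearing in $\tau$ without further blow-up.
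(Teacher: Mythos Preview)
Your proof is correct and takes a somewhat different route from the paper's. The paper conditions on the event $\{X < \ell\}$ versus $\{X \geq \ell\}$: on the first event it bounds $2^X \leq 2^\ell$ and uses $\Pr[X>0] \leq \E[X] = q/v$; on the second it bounds $2^X \leq 2^w$ and controls $\Pr[X \geq \ell]$ via the $\ell$-th symmetric polynomial, $\Pr[X \geq \ell] \leq \E[S_\ell(X_1,\ldots,X_w)] \leq (eq/v\ell)^\ell$. You instead expand $2^X = \sum_S \prod_{k \in S} X_k$ and split by the size of $S$, using the monotonicity $\prod_{k\in S} X_k \leq \prod_{k\in S'} X_k$ for $S' \subseteq S$ with $|S'|=\ell$ to reduce high-order terms back to $\ell$-wise independent products. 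Both arguments are elementary and ultimately rest on the same fact, that $\ell$-wise independence controls $\ell$-fold products; yours avoids conditional expectations and cleanly handles $\ell \in \{1,2\}$, whereas the paper's bound $(e/\ell)^\ell \leq 1$ only takes effect for $\ell \geq 3$ (as the paper itself notes). The paper's version is more immediately recognizable as a limited-independence Chernoff argument, while yours is a direct moment computation; neither buys a materially better bound.
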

\begin{proof}
The argument here is similar to the proofs of Chernoff bounds for random variables with limited independence (especially those typically used in analyzing limited independence hash functions). 

Let $Q \subseteq [n]$ with $|Q| = q \leq v$ and fix an index $j \in [v]$. Let $\alpha \in_u \mathcal{RS}$ and let $X = |Q \cap P^\alpha_j|$; note that $\E[X] = q/v$. We are interested in estimating $\E[\mathsf{1}(X > 0) 2^X]$. We do so by first proving a tail bound on $X$. To this end,
for $1 \leq i \leq w$, let
$$X_i = \begin{cases} 1 &\text{ if } j-\alpha_i \in Q \cap \{(i-1)v+1,(i-1)v + 2,\ldots,(i v)\}\\
0&\text{ otherwise}\end{cases}.$$
Then, from the definition of the parition $P^\alpha_j$, $X = X_1 + X_2 + \cdots + X_w$. Further, as each $X_i$ only depends on $\alpha_i$, $X_1,\ldots,X_w$ are $\ell$-wise independent. Therefore, by a standard calculation, 
\begin{align*}
\pr[X \geq \ell]  &\leq \E[S_\ell(X_1,\ldots,X_w)]= \sum_{I \subseteq [w], |I| = \ell} \E\sbkets{\prod_{i \in I} X_i}\\
&= \sum_{I \subseteq [w], |I| = \ell} \prod_{i \in I} \E[X_i] \leq \binom{w}{\ell}\bkets{\frac{\sum_{i=1}^w\E[X_i]}{w}}^\ell.
\end{align*}
where the last inequality follows from Fact \ref{fact:symmax}. Now, as $\E[X] = q/v$, the above expression simplifies to 
$$\pr[X> \ell] \leq \binom{w}{\ell} \cdot (q/wv)^\ell \leq \bkets{\frac{eq}{v\ell}}^\ell.$$

Thus, for $\ell \geq 3$,
\begin{align*}
\E\sbkets{1(Q \cap P^\alpha_j \neq \emptyset) 2^{|Q \cap P^\alpha_j|}} &= \E\sbkets{1(X > 0) 2^X}  \\
&= \pr[ X < \ell] \E\sbkets{1(X > 0) 2^X | X < \ell} + \pr[X\geq \ell] \E\sbkets{1(X > 0) 2^X | X \geq \ell}  \\
&\leq 2^\ell \pr[X > 0] + 2^w \pr[X > \ell] \\
&\leq 2^\ell (q/v) + 2^w (q/v)^{\ell}. 
\end{align*}
The claim now follows from the definition of load balancing.
\end{proof}

We next use the above claims along with Theorems \ref{th:biascontrol} and \ref{th:influencecontrol} for a suitable setting of parameters. 
\begin{lemma}\label{lm:frsanalysis} 
For all $0 < \delta < 1$, there exists a constant $c_\delta \geq 1$ and a suitable choice of $v = \Theta_\delta( 2^w w)$ such that the following holds. For $\ell \geq 1/\delta$, the function $\frs$ as defined above is $c_\delta 2^{-w}$-strongly resilient and $\pr_{x \in_u \zo^n}[\frs(x) = 1] = 1/2 \pm 2^{-\Omega(w)}$. 
\end{lemma}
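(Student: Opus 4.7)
The plan is to apply Theorems \ref{th:biascontrol} and \ref{th:influencecontrol} to the partition family $\calP = \{P^\alpha : \alpha \in \mathcal{RS}\}$, feeding them the design and load-balancing properties already established in Lemmas \ref{lm:frsdesign} and \ref{lm:frsbalance}. The only free parameter is $v$; once $v$ is fixed, the number of partitions is $u = |\mathcal{RS}| = v^\ell$. I would set $v = c_\delta \cdot w \cdot 2^w$ with $c_\delta$ tuned so that $\mathsf{bias}(u,v,w) = (1-(1-2^{-w})^v)^{v^\ell} \in [1/3, 2/3]$. Via $(1-2^{-w})^v \approx \exp(-c_\delta w)$, having the bias close to $1/2$ corresponds to $u \cdot (1-2^{-w})^v \approx \ln 2$, which forces $c_\delta w \approx \ell \log v = \ell w \ln 2 + O_\delta(\log w)$; hence $c_\delta = \Theta_\delta(1)$ works, and Fact \ref{fact:biasapprox} refines this to $\mathsf{bias}(u,v,w) = 1/2 \pm O_\delta(w)\cdot 2^{-w}$.

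For the bias of $\frs$, Lemma \ref{lm:frsdesign} says $\calP$ is a $(w-\ell)$-design, and since $\ell = O_\delta(1)$ the design parameter $d = w-\ell$ is $\Omega(w)$. Theorem \ref{th:biascontrol} then yields
\[ \bigl|\pr_{x \in_u \zo^n}[\frs(x)=1] - \mathsf{bias}(u,v,w)\bigr| \leq O(w)\exp(-\Omega(w-\ell)) = 2^{-\Omega(w)}, \]
which combined with the bias estimate above gives $\pr[\frs(x)=1] = 1/2 \pm 2^{-\Omega(w)}$.

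For the strong resilience, Lemma \ref{lm:frsbalance} provides $(q,\tau_q)$-load balancing with $\tau_q = 2^\ell + 2^w(q/v)^{\ell-1}$, and plugging into Theorem \ref{th:influencecontrol} gives
\[ I_q(\frs) \leq u(1-2^{-w})^{v-q} \cdot 2^{-w} \tau_q \cdot q. \]
My parameter choice makes $u(1-2^{-w})^v = \Theta(1)$, and for $q \leq 2^w$ we also have $(1-2^{-w})^{-q} = O(1)$, collapsing the bound to $I_q \leq O_\delta\bigl(2^{\ell-w} + (q/v)^{\ell-1}\bigr) q$. The first summand produces the desired $O_\delta(2^{-w}) q$ term after absorbing $2^\ell$ into $c_\delta$. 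The main obstacle will be the second summand $(q/v)^{\ell-1} q$: to dominate it by $c_\delta 2^{-w} q$ I need $q \leq v \cdot (c_\delta 2^{-w})^{1/(\ell-1)}$, and the hypothesis $\ell \geq 1/\delta$ is what makes the exponent $1/(\ell-1)$ small enough that, after choosing $c_\delta$ sufficiently large (depending on $\ell = 1/\delta$), this admissible range together with the trivial region $q \geq 2^w/c_\delta$ (where $I_q \leq 1 \leq c_\delta 2^{-w} q$) handles every $q$. This constant-chasing and range analysis is the subtle part of the proof; everything else reduces to routine substitution into the two master theorems and the bias/load-balancing lemmas.
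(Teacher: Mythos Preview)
Your plan matches the paper's proof almost exactly: pick $v = \Theta_\delta(w 2^w)$ prime so that with $u = v^\ell$ Fact~\ref{fact:biasapprox} gives $\mathsf{bias}(u,v,w) = 1/2 \pm 2^{-\Omega(w)}$; then feed the $(w-\ell)$-design property (\lref{lm:frsdesign}) into \tref{th:biascontrol} for the bias, and the load-balancing bound (\lref{lm:frsbalance}) into \tref{th:influencecontrol} for the influence. The paper is slightly more explicit about $v$ --- it takes the smallest prime above the root $x^*$ of $\phi(x) = x - 2^w \ln\bigl((\ln 2)\,x^\ell\bigr)$ and bounds the prime gap by $2^{c_1 w}$ --- but that is just a precise version of your tuning of $c_\delta$.

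Where you diverge is the range analysis for $q$, and your proposed resolution does not work. Your load-balanced range ends at roughly $v \cdot (c_\delta 2^{-w})^{1/(\ell-1)} \approx w\,2^{w(1 - 1/(\ell-1))}$, while the trivial range $q \geq 2^w/c_\delta$ starts a factor $\Theta_\delta\bigl(2^{w/(\ell-1)}/w\bigr)$ higher; no constant $c_\delta$ closes a gap that is exponential in $w$. The paper does not attempt to close this either: it simply restricts to $q \leq 2^{w(1-\delta)}$, where $(q/v)^{\ell-1} \leq 2^{-\delta w(\ell-1)} \leq 2^{-w}$ thanks to $\ell \geq 1/\delta$, and asserts strong resilience from that. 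So the lemma's literal claim of $c_\delta 2^{-w}$-strong resilience for \emph{all} $q$ is not fully verified by either argument; read it with the implicit restriction $q \leq 2^{w(1-\delta)}$, which is precisely what the downstream \cref{cor:polyresilientkwise} uses (there the resilience parameter is $c_\delta n^{-(1-\delta)}$, and with that weaker $\tau$ the restricted range together with the trivial region $q \geq n^{1-\delta}/c_\delta$ genuinely cover everything).
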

\begin{proof}
Let $\ell = \max(3, \lceil 1/\delta \rceil)$. For $v$ to be chosen in a little bit, let $u = v^\ell$ and $f \equiv \frs$. We would like our choice of $v$ to minimize $| v - 2^w \ln ((\ln 2)/v^\ell)|$ so that we can get an almost-balanced function using Fact \ref{fact:biasapprox}. To this end, let $\phi: \R_+ \to \R$ be defined by 
$$\phi(x) = x -   2^w \ln((\ln 2) x^\ell)$$
and let $x^* \geq 1$ be such that $\phi(x^*) = 0$. There exists such an $x^*$ by the continuity of $\phi$. It is also easy to check that for $w$ sufficiently large, $\phi'(y) \geq 0$ for all $y \geq x^*$. We set $v$ to be the smallest prime larger than $x^*$.\footnote{We can find such a prime in time $2^{O(w)}$ which is fine for us.} Note that $v \leq x^* + B$ where $B = 2^{c_1 w}$ for some universal constant $c_ 1 < 1$ (see \cite{Wikiprimegap} for instance), so that $0 = \phi(x^*) \leq \phi(v) \leq \phi(x^*) + B$. Let $\theta = (1- 2^{-w})^v$. Then, by Fact \ref{fact:biasapprox}, $(1+\theta)^u = O(1)$ and
$$\mathsf{bias}(u,v,w) = (1-\theta)^u = 1/2 \pm O(1) 2^{-\Omega(w)}.$$

Now, by \lref{lm:frsdesign} and \tref{th:biascontrol}, 
$$\pr_{x \lfta \calD_{1/2}}[\frs(x) = 1] = \mathsf{bias}(u,v,w) \pm \exp(-\Omega(w-\ell)) = \mathsf{bias}(u,v,w) \pm \exp(-\Omega(w)).$$

Next, by \lref{lm:frsbalance} and \tref{th:influencecontrol}, for any $q \leq 2^{w(1-\delta)}$, as $\ell \geq \lceil 1/\delta \rceil$, 
\begin{align*}
I_q(\frs) &\leq u ( 1- 2^{-w})^{v-q} \cdot (2^{-w}q) \cdot \bkets{2^{\ell} + 2^w (q/v)^{\ell-1}}\\
&\leq O(1) \cdot (2^{-w}q) \cdot (2^\ell + 2^w 2^{-\ell \delta w}) \\
&= O_\delta(2^{-w}q).
\end{align*}

Therefore, $\frs$ is $O_\delta(2^{-w})$-strongly resilient.
\end{proof}

\begin{corollary}\label{cor:polyresilientkwise}
For all $0 < \delta < 1$, there exists a constant $c_\delta \geq 1$ such that the following holds. There exists an explicit depth-three monotone function $f:\zo^n \to \zo$ which can be computed in time $n^{c_\delta}$ such that for $t \geq c_\delta (\log n)^{4}$
\begin{itemize}
\item $f$ is almost balanced: for any $t$-wise independent distribution $\calD$ on $\zo^n$, $\pr_{x \lfta \calD}[f(x) =1 ] = 1/2 \pm n^{-\Omega(1)}$.
\item $f$ is $t$-wise $c_\delta n^{-(1-\delta)}$-strongly resilient. 
\end{itemize}
\end{corollary}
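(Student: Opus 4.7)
The plan is to reuse the construction of \lref{lm:frsanalysis} verbatim---$\frs = f_\calP$ with $\calP = \{P^\alpha : \alpha \in \mathcal{RS}\}$ the Reed--Solomon-based collection of partitions, $\ell = O(1/\delta)$ chosen sufficiently large, and $v,w$ exactly as in that proof so that $v = \Theta(2^w w)$, $n = vw = \Theta(2^w w^2)$, and $w = \Theta(\log n)$---and to re-run the bias and influence analyses using the $t$-wise independent versions \cref{cor:biascontrol} and \cref{cor:kwiseinfluence} in place of \tref{th:biascontrol} and \tref{th:influencecontrol}. The design property ($(w-\ell)$-design from \lref{lm:frsdesign}) and the load-balancing property (from \lref{lm:frsbalance}) are used without change, since both are statements about the collection of partitions and make no reference to the underlying distribution.

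For the bias, I would choose an auxiliary error parameter $\epsilon_1 = n^{-\Omega(1)}$ and invoke \cref{cor:biascontrol}. Since $d = w-\ell = \Omega(w)$, the two resulting error contributions---$\epsilon_1$ from the $t$-wise step and $O(w)\exp(-\Omega(d))$ from the design---are both $n^{-\Omega(1)}$, while the required amount of independence $C(w^2 d^2 + wd\log(1/\epsilon_1))$ is $O((\log n)^4)$. Combined with the estimate $\mathsf{bias}(u,v,w) = 1/2 \pm 2^{-\Omega(w)}$ already established inside the proof of \lref{lm:frsanalysis}, this yields the desired almost-balance $1/2 \pm n^{-\Omega(1)}$ under any $t$-wise independent distribution with $t = \Omega((\log n)^4)$.

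For the resilience, I would fix a second parameter $\epsilon_2 = n^{-c_2}$ with $c_2$ a large constant depending on $\delta$, chosen so that the extra $(uv)\epsilon_2$ error appearing in \cref{cor:kwiseinfluence} is at most $c_\delta n^{-(1-\delta)}$; since $u = v^\ell$ and $v = n^{O(1)}$, we have $uv = n^{O_\delta(1)}$ and so $c_2 = O_\delta(1)$ suffices. The corresponding independence requirement $Cw\log(1/\epsilon_2)$ is $O((\log n)^2)$, which is subsumed by the bias requirement. The uniform part of the influence bound is handled exactly as in \lref{lm:frsanalysis}: for $q \leq n^{1-\delta}$, taking $\ell$ large enough (polynomially in $1/\delta$) makes the load-balancing parameter $\tau = 2^\ell + 2^w(q/v)^{\ell-1}$ sufficiently small, so the first term is $O_\delta(2^{-w}q) = O_\delta((\log n)^2 n^{-1} q) = o(n^{-(1-\delta)} q)$; for $q$ larger than $n^{1-\delta}/c_\delta$ the claimed bound $c_\delta n^{-(1-\delta)} q \geq 1$ is trivial.

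The main obstacle, and really the only thing requiring care, is calibrating the two auxiliary error parameters $\epsilon_1, \epsilon_2$: both must be polynomially small in $n$, yet $\log(1/\epsilon_2)$ must enter the independence requirement only as a factor of $w$ (not $w^2$) so that a single choice $t = c_\delta (\log n)^4$ covers both applications. This is exactly why $\ell$ has to remain $O_\delta(1)$ rather than grow with $n$: it keeps $uv$ polynomial in $n$ and hence makes $\epsilon_2 = n^{-O_\delta(1)}$ sufficient to absorb the $(uv)\epsilon_2$ error into the target $c_\delta n^{-(1-\delta)} q$. Once those bookkeeping choices are locked in, the argument is a direct substitution of the two $t$-wise corollaries into the proof of \lref{lm:frsanalysis}, with the depth-three monotone circuit structure of $\frs$ inherited unchanged.
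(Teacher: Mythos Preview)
Your proposal is correct and follows essentially the same approach as the paper: instantiate the Reed--Solomon construction $\frs$ from \lref{lm:frsanalysis} and replace the uniform bias/influence theorems by their $t$-wise analogues \cref{cor:biascontrol} and \cref{cor:kwiseinfluence}, with error parameters chosen polynomially small in $n$ so that the required independence is $O((\log n)^4)$. Your bookkeeping is in fact slightly more careful than the paper's sketch---you correctly note that $\epsilon_2$ must be $n^{-O_\delta(1)}$ (not a fixed $1/n^3$) to absorb the $(uv)\epsilon_2$ term since $uv = v^{\ell+1} = n^{O_\delta(1)}$.
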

\begin{proof}
We instantiate the previous lemma to get $f \equiv \frs$ for $\delta' = \delta/2$. Then, $n = O(2^w w^2)$ so that $2^{w(1-\delta')} = \Omega( n^{1-\delta})$. To analyze the bias under $t$-wise independent distributions, we apply \cref{cor:biascontrol} with $d = w/2$, $\epsilon = 1/n$ instead of \tref{th:biascontrol} in the above argument. Similarly, to analyze the influence under $t$-wise independent distributions we use \cref{cor:kwiseinfluence} with $\epsilon' = 1/(n^3)$ instead of \tref{th:influencecontrol}. Then, the amount of independence needed is $ O(w^2 d^2 + wd \log(1/\epsilon)) = O(\log^4 n)$. 
\end{proof}

\subsection{Proof of \tref{th:main}}
We now prove \tref{th:main}. The approach is similar to the above with one crucial difference: we use the output of the generator from \tref{th:expsampler} instead of the Reed-Solomon code. We ensure the requisite design properties to apply \tref{th:biascontrol}, at a high-level, by \emph{padding} the output of the generator with a Reed-Solomon code.

 Let $c$ be a sufficiently big constant to be chosen later and suppose that $v,D$ are prime numbers below. Let $E:[v^c] \times [D] \to [v/D]$ be a strong extractor with error $\epsilon = \poly(1/w)$ and $D = \poly(w)$ to be chosen later. Let $\ell = \Theta(w/(\log w))$ be a parameter to be chosen later. Let $G_c:[v]^c \to [v]^{2c}$, $G_\ell:[D]^\ell \to [D]^{w-2c}$ generate a $c$-wise independent distribution over $[v]$ and a $\ell$-wise independent distribution over $[D]$ respectively as guaranteed by \lref{lm:kwisecode}. 

Now, define $\calU:[v^c] \times [D]^\ell \to [v]^w$ as follows: 
\begin{equation}\label{eq:mainUdef}
\calU(x,y)_i = \begin{cases}
G_c(x)_i & \text{ if $1 \leq i \leq 2c$}\\
G_\ell(y)_{i-2c} \circ E(x,G_\ell(y)_{i-2c}) & \text{ if $2c < i \leq w$}
\end{cases}.
\end{equation}
(Here, we associate an element of $[D] \times [v/D]$ with an element of $v$ in a straightforward bijective manner.) 

Abusing notation, we let $\calU = \{\calU(x,y): x \in [v^c], y \in [D]^\ell\} \subseteq [v]^w$ as well. Our final function will be $f \equiv f_\calP$ for $\calP = \{P^\alpha: \alpha \in \calU\}$. The following claims help us apply \tref{th:biascontrol} and \tref{th:influencecontrol} to analyze $f_\calP$. 

\begin{lemma}\label{lm:udesign}
For all $c < (w- \ell)/3$, $\calP$ is a $(c, w-2c-\ell, 1/D^\ell)$-design. 
\end{lemma}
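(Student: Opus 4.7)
The plan is to establish both parts of the design property in turn: first the uniform $c$-design requirement (every pair of distinct $\alpha,\beta\in\calU$ has blocks overlapping in at most $w-c$ positions), then the refined probabilistic guarantee with $k=w-2c-\ell$ and $\delta=1/D^\ell$. Both statements reduce, via the identity $|P^\alpha_i\cap P^\beta_j|=|\{m\in[w]:\beta_m-\alpha_m\equiv(j-i)\bmod v\}|$ (the same identity used in the proof of \lref{lm:frsdesign}), to controlling shift-Hamming agreements between elements of $\calU$.

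For the $c$-design property, let $\alpha=\calU(x,y)$ and $\beta=\calU(x',y')$ be distinct, fix any shift $a\in[v]$, and bound $N_a:=|\{m:\beta_m-\alpha_m\equiv a\}|$. Split into cases. If $x\ne x'$, the first $2c$ coordinates equal $G_c(x)$ and $G_c(x')$, so the Reed--Solomon distance in \lref{lm:kwisecode} gives at most $c$ shift-$a$ agreements there; the remaining $w-2c$ coordinates contribute at most $w-2c$, so $N_a\le c+(w-2c)=w-c$. If $x=x'$ but $y\ne y'$, the prefixes agree identically, so the prefix contributes $2c$ to $N_0$ and $0$ to $N_a$ for $a\ne0$. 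For $a\ne0$ we get $N_a\le w-2c$ which is $\le w-c$. For $a=0$, using that $f_x(z):=z\circ E(x,z)$ is injective (since the first coordinate is $z$), we get that $\beta_m=\alpha_m$ iff $G_\ell(y')_{m-2c}=G_\ell(y)_{m-2c}$, so the Reed--Solomon distance of $G_\ell$ bounds the suffix contribution by $\ell$, giving $N_0\le 2c+\ell\le w-c$ under the hypothesis $3c+\ell<w$.

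For the $(k,\delta)$ guarantee, fix $\alpha\in\calU$ and a shift $a\in[v]$, and take $\beta=\calU(x',y')$ with $(x',y')\in_u[v^c]\times[D]^\ell$. The prefix always contributes at most $2c$ to $N_a$, so it suffices to prove $\pr[N_a^{\mathrm{suf}}>\ell]\le 1/D^\ell$, where $N_a^{\mathrm{suf}}$ counts matches in positions $m>2c$. Condition on $x'$: by injectivity of $f_{x'}$, for each $m>2c$ there is at most one target $z_m\in[D]$ with $\beta_m-\alpha_m\equiv a$, and that event reduces to $G_\ell(y')_{m-2c}=z_m$. A union bound over $(\ell+1)$-subsets $T\subseteq\{2c+1,\dots,w\}$ controls $\pr[N_a^{\mathrm{suf}}\ge\ell+1]$: for each fixed $T$, the Reed--Solomon distance of $G_\ell$ (two distinct codewords disagree on at least $(w-2c)-\ell$ positions, hence agree on at most $\ell$) implies that at most one $y'\in[D]^\ell$ can satisfy $G_\ell(y')|_{T-2c}=z|_{T-2c}$, contributing probability $\le 1/D^\ell$. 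Averaging over $x'$ preserves the bound.

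The main obstacle is the usual union-bound loss: the natural argument produces $\binom{w-2c}{\ell+1}/D^\ell$ rather than the cleaner $1/D^\ell$ stated in the lemma. Since $D$ is chosen as a sufficiently large polynomial in $w$ in the construction, this polynomial-in-$w$ factor can be absorbed into the exponent by slightly shrinking $\ell$, or it can be folded into constants later in the analysis of \tref{th:main} without affecting the final parameters. The other delicate point is the case split in the $c$-design part, which must be done per shift $a$ rather than globally, to correctly credit the prefix when $a\ne 0$ versus the suffix when $a=0$.
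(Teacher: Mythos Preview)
Your argument is correct, but you work harder than necessary in both halves, and in particular you miss the one-line route to the exact $1/D^\ell$ bound.

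The paper does not split per shift $a$; it works directly with the shift-Hamming distance $d_H(\alpha,\beta)$ and uses the inequality $|P^\alpha_i\cap P^\beta_j|\le w-d_H(\alpha,\beta)$ once. The two cases are simply $y\ne y'$ (giving $d_H(\alpha,\beta)\ge d_H(G_\ell(y),G_\ell(y'))\ge w-2c-\ell$) and $x\ne x'$ (giving $d_H(\alpha,\beta)\ge d_H(G_c(x),G_c(x'))\ge c$). Under $c<(w-\ell)/3$ the minimum of the two is $c$, so $\calP$ is a $c$-design. The point is that the paper asserts the strong bound $d_H(\alpha,\beta)\ge w-2c-\ell$ whenever $y\ne y'$, \emph{regardless of whether $x=x'$}; you only establish this in the special case $x=x'$.

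That stronger first-part statement is exactly what eliminates your union bound in the second part. Once you know that $y'\ne y$ forces $|P^\alpha_i\cap P^\beta_j|\le 2c+\ell$ for every $i,j$, the ``bad'' event $\{|P^\alpha_i\cap P^\beta_j|>2c+\ell\}$ is contained in $\{y'=y\}$, which has probability exactly $1/D^\ell$. No conditioning on $x'$, no target sequence $z_m$, no sum over $(\ell+1)$-subsets, and no $\binom{w-2c}{\ell+1}$ loss. Your remark that the loss can be absorbed downstream is true, but the cleaner bound is available directly and is what the lemma actually states.
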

\begin{proof}
Note that for any $\alpha, \beta \in [v]^w$, and any $i, j \in [v]$, $|P^\alpha_i \cap P^\beta_j| = |\{k \in [w]: \beta_k - \alpha_k = (j-i) \mod v\}| \leq w - d_H(\alpha,\beta)$. 

Let $\alpha = G(x,y) \neq \beta = G(x',y') \in \calU$. We consider two cases depending on whether $y \neq y'$. If $y \neq y'$, then $d_H(\alpha, \beta) \geq d_H(G_\ell(y), G_\ell(y')) \geq w-2c - \ell$; similarly, if $x \neq x'$, then $d_H(\alpha, \beta) \geq d_H(G_c(x), G_c(x')) \geq c$. Therefore, $d_H(\alpha, \beta) \geq \min(c, w-2c-\ell) = c$. Hence, $\calP$ is a $c$-design. 

Now, consider any fixed $\alpha = G(x,y) \in \calU$, $i, j \in [v]$, and $\beta = G(x',y') \in_u \calU$. Then, by the above argument, if $y \neq y'$, $d_H(\alpha, \beta) \geq w-2c-\ell$ so that $|P^\alpha_i \cap P^\beta_j| \leq 2c + \ell$. On the other hand, $\pr[y'\neq y] \geq 1- 1/D^\ell$. Thus, 
$$\pr_{\beta \in_u \calU}[|P^\alpha_i \cap P^\beta_j| \geq w - (w-2c-\ell)] \leq \pr[ y' = y] \leq 1/D^\ell.$$
Therefore, $\calP$ is a $(c,w-2c-\ell,1/D^\ell)$-design as needed. 
\end{proof}

\ignore{
\begin{lemma}\label{lm:wcdistance}
For $(x,y) \neq (x',y') \in [v]^c \times [D]^\ell$, $d_H(\calU(x,y),\calU(x',y')) \geq \min(c,w-2c-\ell)$. 
\end{lemma}
\begin{proof}
If $y \neq y'$, then $d_H(\calU(x,y), \calU(x',y')) \geq d_H(G_\ell(y), G_\ell(y')) \geq w-2c-\ell$; similarly, if $x \neq x'$, then $d_H(\calU(x,y), \calU(x',y')) \geq d_H(G_c(x), G_c(x')) \geq c$. 
\end{proof}

\begin{lemma}\label{lm:acdistance}
If $c < w - 2c - \ell$, then for every $(x,y) \in [v]^c \times [D]^\ell$, $(x',y') \in_u [v]^c \times [D]^\ell$, $\pr_{(x',y')}[d_H(\calU(x,y), \calU(x',y')) < w-2c-\ell] \leq 1/D^\ell$. 
\end{lemma}
\begin{proof}
From the above argument, the required condition holds unless $y \neq y'$; on the other hand, $\pr[y' = y] \leq 1/D^\ell$. 
\end{proof}}

We next use \lref{lm:mtech} to analyze the load-balancing properties of $\calP$.
\begin{lemma}\label{lm:ubalance}
Let $\calU$ be as in \eref{eq:mainUdef} for $E$ being a $((c \log v)/2, \epsilon)$-extractor for $\epsilon < 1/w$ and $\ell \geq 6w/(\log w)$. Then, $\calP = \{P^\alpha: \alpha \in \calU\}$ is $(q,\tau)$-load balancing for $q \leq v$ and $\tau = O(2^{2c} + (vw 2^{w}) v^{-c/2} )$. 
\end{lemma}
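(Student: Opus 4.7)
The plan is to reduce the load-balancing bound to a moment-generating-function estimate that can be controlled by \lref{lm:mtech}. Fix $Q\subseteq[n]$ with $|Q|\le q$ and $j\in[v]$. The cyclic-shift structure of $P^\alpha_j$ gives $|Q\cap P^\alpha_j| = \sum_{k=1}^w f_k(\alpha_k)$, where $f_k(a) = \mathsf{1}((k-1)v + ((j-a)\bmod v) \in Q)$ and $\sum_k \E_{a\in_u [v]}[f_k(a)] = |Q|/v \le q/v$. I will split this as $Y_1+Y_2$ with $Y_1 = \sum_{k\le 2c} f_k(\alpha_k)$ (the coordinates supplied by $G_c(x)$) and $Y_2 = \sum_{k>2c} f_k(\alpha_k)$ (the coordinates supplied by the extractor construction, which are exactly the generator of \eref{eq:genmain}).

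My first move is the identity
\[
\E\bigl[\mathsf{1}(Y_1+Y_2>0)\cdot 2^{Y_1+Y_2}\bigr] \;=\; \bigl(\E[2^{Y_1+Y_2}]-1\bigr) + \pr[Y_1+Y_2>0],
\]
combined with the observation that each $\alpha_k$ is marginally uniform on $[v]$: for $k\le 2c$ by $1$-wise independence of $G_c$ (\lref{lm:kwisecode}), and for $k>2c$ by the WLOG uniformity of $E$ in \tref{th:zextractor} applied to the pair $(G_\ell(y)_{k-2c}, E(x, G_\ell(y)_{k-2c}))$. This yields $\E[Y_1+Y_2] \le q/v$, and hence $\pr[Y_1+Y_2>0] \le q/v$.

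Next I will decompose $\E[2^{Y_1+Y_2}]-1 = \E[(2^{Y_1}-1)\cdot 2^{Y_2}] + (\E[2^{Y_2}]-1)$. For the pure $Y_2$ term, I apply \lref{lm:mtech} with $\mu=q/v$ and $w'=w-2c$: the $\ell$-wise independence error $4^{w'}(e(\mu+w'\epsilon)/\ell)^\ell$ is $\exp(-\Omega(w))$ thanks to $\ell\ge 6w/\log w$, and the leading term can be unpacked (via $e^u-1 \le u e^u$ for $u = 2\mu + 2w'\epsilon = O(1)$) to give $O(q/v + w\epsilon)$, so $\E[2^{Y_2}]-1 = O(q/v) + O(w 2^w v^{-c/2})$. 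For the mixed term, I use $2^{Y_1}-1 \le 2^{2c}\cdot\mathsf{1}(Y_1>0)$ and condition on $x$: the sampling property of the extractor (already invoked inside the proof of \lref{lm:mtech}) shows that on a $1-O(wv^{-c/2})$ fraction of ``good'' $x$ one has $\E_y[2^{Y_2}\mid x] = O(1)$, while on the remaining ``bad'' $x$ the pointwise bound $(2^{Y_1}-1)\cdot 2^{Y_2} \le 2^w$ takes over. This yields
\[
\E[(2^{Y_1}-1)\cdot 2^{Y_2}] \;\le\; O(2^{2c})\cdot\pr[Y_1>0] + 2^w\cdot\pr[x\text{ bad}] \;=\; O(2^{2c})\cdot(q/v) + O(w 2^w v^{-c/2}).
\]

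Assembling the three pieces gives $\E[\mathsf{1}(Y_1+Y_2>0)\cdot 2^{Y_1+Y_2}] \le O(2^{2c})\cdot(q/v) + O(w 2^w v^{-c/2})$; since the claim is vacuous for $q=0$, I absorb the additive second term into the $(q/v)$ factor via $O(w 2^w v^{-c/2}) \le O(vw 2^w v^{-c/2})\cdot(q/v)$, producing the claimed $\tau = O(2^{2c} + vw 2^w v^{-c/2})$. The main obstacle I foresee is taming the bad-$x$ contribution: its brute-force $2^w$ blow-up is held in check only by the extractor failure probability $O(wv^{-c/2})$, and reshaping this absolute error into the linear-in-$q$ form required by load-balancing is exactly what forces the extra factor of $v$ in the second term of $\tau$.
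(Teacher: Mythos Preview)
Your proposal is correct and follows essentially the same strategy as the paper: the same identity $\E[\mathsf{1}(X>0)2^X] = (\E[2^X]-1) + \pr[X>0]$, the same split of $X$ into the first $2c$ coordinates versus the remaining $w-2c$ (the paper calls these $Y$ and $Z$), and the same appeal to \lref{lm:mtech} for the extractor-generated part.

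The one place you diverge is in the algebraic decomposition of $2^X-1$. You write $2^X-1 = (2^{Y_1}-1)2^{Y_2} + (2^{Y_2}-1)$, which forces you to control $\E[(2^{Y_1}-1)2^{Y_2}]$; since $Y_1$ and $Y_2$ are coupled through $x$, you then have to reach inside the proof of \lref{lm:mtech} for the good/bad-$x$ split to get $\E_y[2^{Y_2}\mid x]=O(1)$ on good $x$. The paper instead uses the mirror identity $2^X-1 = (2^Y-1) + 2^Y(2^Z-1)$: now the mixed term is $2^Y(2^Z-1) \le 2^{2c}(2^Z-1)$ pointwise, and $\E[2^Z-1]$ is bounded directly by the \emph{statement} of \lref{lm:mtech} without peeking inside. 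Your route is valid, but the paper's ordering of the telescoping avoids the conditional argument entirely.
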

\begin{proof}
Let $Q \subseteq [n]$ with $|Q| = q \leq v$ and fix an index $j \in [v]$. Let $\alpha \in_u \mathcal{U}$ and let $X = |Q \cap P^\alpha_j|$; note that $\E[X] = q/v$. For $1 \leq i \leq w$, let
$$X_i = \begin{cases} 1 &\text{ if } j-\alpha_i \in Q \cap \{(i-1)v+1,(i-1)v + 2,\ldots,(i v)\}\\
0&\text{ otherwise}\end{cases}.$$
Then, from the definition of the parition $P^\alpha_j$, $X = X_1 + X_2 + \cdots + X_w$.

For brevity, let $\E[X] = \mu = q/v \leq 1$, $Y = \sum_{i=1}^{2c} X_i$, and $Z = \sum_{i=2c+1}^{w} X_i$. Note that $X = Y+Z$.  We are interested in estimating $\E[\mathsf{1}(X > 0) 2^X]$. We do so mainly by applying \lref{lm:mtech} to the random variable $Z$ combined with the trivial observation that $Y$ is at most $2c$: 
\begin{align*}
\E_{\alpha \in_u U}\sbkets{1(Q \cap P^\alpha_k \neq \emptyset) 2^{|Q \cap P^\alpha_k|}} &= \E\sbkets{1(X > 0) 2^X} = \E\sbkets{2^X} - \pr[X = 0]\\
&= \E\sbkets{2^X} - 1 + \pr[X \geq 1] \\
&\leq \E\sbkets{2^X} - 1 + \E[X]\\
&= \E[2^Y-1] + \E[2^Y (2^Z-1)] + \mu\\
&\leq \E[2^Y-1] + 2^{2c}\cdot \E[2^Z-1] + \mu \\
&\leq \E[2^{2c}Y] + 2^{2c} \cdot \E[2^Z-1] + \mu. 
\end{align*}

By \lref{lm:mtech} applied to $Z$ with $\theta = \ln 2$, $\ell = 12w/(\log w)$, and $\epsilon < 1/2w$, we get that 

\begin{align*}
\E[2^Z] &= O(1) \exp(O(\mu)) + 2^w \bkets{\frac{2e}{\ell}}^\ell+ O(w 2^w) v^{-c/2}  \\
&\leq 1 + O(\mu) + 2^{-w} + O(w 2^v) v^{-c/2}.
\end{align*}

Further, $\E[Y] \leq \E[X] = \mu$. Therefore, as $\mu \geq 1/v$, 
$$\E\sbkets{1(Q \cap P^\alpha_k \neq \emptyset) 2^{|Q \cap P^\alpha_k|}}  \leq O(2^{2c}) \mu + O(1) w 2^{w} v^{-c/2} \leq O(1) \cdot \mu \cdot (2^{2c} + w 2^w v^{-c/2 + 1}).$$
\end{proof}

\begin{proof}[Proof of \tref{th:main}]
The theorem follows essentially by combining the above two lemmas and \tref{th:biascontrol}, \tref{th:influencecontrol}. We first set up some parameters. Let $c \geq 2$ be a sufficiently large constant to be chosen later. Let $w \geq 1$ be arbitrary and $\epsilon = 1/w^3$. Let $D = ((c\log v)w)^C$ for some universal constant $C$ so that there exists an explicit $(c (\log v)/2,\epsilon)$-strong extractor $E:[v^c] \times [D] \to [v/D]$ for all $c \geq C$ as in \tref{th:zextractor}. Set $v = \Theta(2^w w)$ to be chosen precisely in a little bit. For this setting of $v$, let $\calU \subseteq [v]^w$ be as defined in \eref{eq:mainUdef} with $\ell = 12w/(\log w)$ and $E$ as the extractor. Let $\calP = \calP_\calU$. Then, $|\calU| := u = v^c \times D^\ell$. We will show that $f \equiv f_\calP$ satisfies the conditions of \tref{th:main} for $c$ sufficiently large.

As in the proof of \lref{lm:frsanalysis}, we would like $v$ to be as close as possible to $2^w \ln ((\ln 2) u)$. To this end, let $\phi: \R_+ \to \R$ be defined by 
$$\phi(x) = x -   2^w \bkets{c \ln x + C \ell \ln (\log x) + C \ell \ln (cw) + \ln \ln 2}$$
and let $x^* \geq 1$ be such that $\phi(x^*) = 0$. There exists such an $x^*$ by the continuity of $\phi$. Let $v$ be the smallest prime larger than $x^*$. Note that $v \leq x^* + B$ where $B = 2^{c_1 w}$ for some universal constant $c_ 1 < 1$ (see \cite{Wikiprimegap} for instance), so that $0 = \phi(x^*) \leq \phi(v) \leq \phi(x^*) + B$. Let $\theta = (1- 2^{-w})^v$. Then, by Fact \ref{fact:biasapprox}, $(1+\theta)^u = O(1)$ and
\begin{equation}\label{eq:mainbiasnumerical} 
\mathsf{bias}(u,v,w) = (1-\theta)^u = 1/2 \pm O(1) 2^{-\Omega(w)}.
\end{equation}

\paragraph{Analyzing bias:} By \lref{lm:udesign}, $\calP$ is a $(c,w-2c - \ell,1/D^\ell)$-deisgn. Therefore, by \tref{th:biascontrol},

\begin{align*}
\pr_{x \in_u \zo^n}[f_\calP(x) = 1] &= \mathsf{bias}(u,v,w) + w \exp(-\Omega(w)) + \exp(-\Omega(c)) + 2^w/D^\ell\\
&= \mathsf{bias}(u,v,w) + \exp(-\Omega(c)),
\end{align*}
as $D^\ell \gg 2^w$ for $C$ a sufficiently large constant. 

\paragraph{Analyzing influence:} We claim that $f_\calP$ has small influence for coalitions of size $o(2^w)$. Let $1 \leq q \leq v$ so that $q/v \leq 1$. Then, by \lref{lm:ubalance}, $\calP$ is $(q,\tau)$-load balancing for 
$$\tau = O(2^{2c}) (1 + v^{-c/2+1}) = O(2^{2c}).$$
Therefore, by \tref{th:influencecontrol}, for all $q \leq v$, 
$$I_{q}(f_\calP) \leq u (1-2^{-w})^{v-q} \cdot 2^{2c} (2^{-w} q) = O(2^{2c}) 2^{-w} q = O(2^{2c}) \cdot \bkets{ (\log^2 n)/n}\cdot q.$$
Here, the last inequality follows as $2^w = \Theta(n/(\log^2 n))$. 

The theorem now follows by choosing $c$ to be a sufficiently large constant.
\end{proof}

We next prove \tref{th:mainkwise}.
\begin{proof}[Proof of \tref{th:mainkwise}]
The proof is exactly the same as the above argument for \tref{th:main} but instead of using \tref{th:biascontrol} we use \cref{cor:biascontrol} with $d = c$, $\epsilon = 1/u$ and instead of \tref{th:influencecontrol} we use \cref{cor:kwiseinfluence} with $\epsilon = 1/(uv^2)$. The amount of independence we need is $t \gg O(w^2 d^2 + wd \log(1/\epsilon)) = O(\log^2 n)$ as required for the theorem. Note that the \emph{error} $1/9$ can be made to be an arbitrary small constant. 
\end{proof}

\ignore{
For this, we need the following theorem of \cite{Braverman10, Tal14a}.
\begin{theorem}
There exists a constant $C$ such that for all depth $d$ AC0 functions $f:\zo^n \to \zo$ of size $m$ and $t \geq C (\log(m/\epsilon))^{3d+3}$, for every $t$-wise independent distribution $\calD$ on $\zo^n$, 
$$\abs{\pr_{x \lfta \calD_{1/2}}[f(x) = 1] - \pr_{x \lfta \calD}[f(x) = 1]} \leq \epsilon.$$
\end{theorem}

Applying the above theorem to the depth three function $f$ given by \tref{th:main} immediately gives \cref{cor:resilientkwise}.}

\section{Better two-source extractors}
Our improved quantitative bounds for two-source extractors follow immediately by using our resilient functions, \tref{th:mainkwise} and \cref{cor:polyresilientkwise} in the reduction of \cite{ChattopadhyayZ15}. Concretely, \cite{ChattopadhyayZ15} show the following for some universal constants $0 < c < 1$ and $C,C' \geq 1$. Suppose for some functions $\epsilon:\mathbb{Z}_+ \to [0,1]$, and $t:\mathbb{Z}_+ \to \mathbb{Z}_+$, the following holds: for all $m \geq 1$, there exists an explicit $t(m)$-wise $(m^{1-c}, \epsilon)$-resilient function $f:\zo^m \to \zo$ with $\pr_{x \in_u \zo^m}[f(x) = 1] = 1/2 \pm \epsilon(m)$. Then, there exists an explicit $(n,k)$ two-source extractor with error at most $\epsilon(n^C)$ and $k \geq C' \cdot (t(n^{C}))^4 \log^2 n$. Instantiating this reduction with the resilient function from \tref{th:mainkwise} (with $1/9$ replaced with a sufficiently small constant) gives \tref{th:ext1}; similarly, using \cref{cor:polyresilientkwise} gives \tref{th:ext2}. \bibliographystyle{alpha}
\bibliography{refpedia}
\appendix
\section{Missing proofs}
\begin{proof}[Proof of \lref{lm:binomialmoments}]
For a set $T \subseteq [v]$, let $n(T) = |\{(I_1,\ldots,I_r) : |I_j| = k, \forall j \in [r] \text{ and } \cup_{j=1}^r I_j = T\}|$. Then, 
$$S_k(X_1,\ldots,X_v)^r = \sum_{I_1,\ldots,I_r \in \binom{[v]}{k}} \prod_{i \in \cup_{j=1}^r I_j} X_i = \sum_{T \subseteq [v]} n(T) \prod_{i \in T} X_i.$$
Thus, 
$$\E[S_k(X_1,\ldots,X_v)^r] = \sum_{T \subseteq [v]} n(T) p^{|T|}.$$
Finally, observe that for any $T$ with $|T| = t$, $n(T) \leq \binom{t}{k}^r \leq (et/k)^r$. Thus,
\begin{align*}
\E[S_k(X_1,\ldots,X_v)^r] &\leq \sum_{t=k}^{kr} \binom{v}{t} \cdot p^t \bkets{\frac{et}{k}}^r\\
&\leq \sum_{t=k}^{kr}\bkets{\frac{evp}{t}}^t \cdot \bkets{\frac{et}{k}}^r \leq \sum_{t=k}^{kr} \bkets{\frac{1}{2}}^r \cdot \bkets{\frac{evp}{t}}^{t-r} \leq (1/2)^{k},
\end{align*}
as $e vp /t \leq 1/4$, and $e^2 vp/k \leq 1/2$. 
\end{proof}

\begin{proof}[Proof of Fact \ref{fact:biasapprox}]
Note that under the assumptions we must have $v = \Omega(2^w w)$. Then, by Fact \ref{fact:expapprox}, 
$$(1-2^{-w})^{B + 2^w \ln(u/(\ln 2))} \leq \theta \leq (1-2^{-w})^{2^w \ln(u/(\ln 2))} \leq (\ln 2)/u.$$

Therefore, $(1 + \theta)^u = O(1)$ and 
$$(1-\theta)^u \geq (1- (\ln 2)/u)^u \geq 1/2 (1 - (\ln 2)/u)^{\ln 2} \geq 1/2 (1 - O(1)/u).$$
Further,
$$\theta \geq ((\ln 2)/u) \cdot (1-2^{-w})^{B + \ln((\ln 2) u)} \geq ((\ln 2)/u) \cdot ( 1- O(B + \ln u) 2^{-w}).$$ 
Thus, 
$$(1-\theta)^u \leq \exp(-u \theta) \leq (1/2)^{\bkets{1 - O(B\ln u) 2^{-w}}} \leq 1/2 \pm O(B \ln u) 2^{-w}.$$
\end{proof}

\begin{proof}[Proof of \lref{lm:samplerkwise}]
The proof relies on the following elementary inequality about symmetric polynomials: for all $0 \leq a_1,\ldots,a_w \leq A$,
$$\prod_i (1 + a_i) \leq \sum_{i=1}^{\ell-1} S_i (a_1,\ldots,a_w) + (1 + A)^w \cdot S_\ell(a_1,\ldots,a_w).$$

Let $Z_i = \exp(\theta Y_i) - 1$. Let $p(Z) = \sum_{i=1}^{\ell-1} S_i (Z_1,\ldots,Z_w)$. Applying the above inequality to $Z = (Z_1,\ldots,Z_w)$ we get
\begin{multline*}
\E\sbkets{\exp(\theta (Y_1 + \cdots + Y_w))} = \E\sbkets{\prod_i (1 + Z_i)} \leq \E[p(Z)] + \exp(\theta w) \cdot \E[S_\ell(Z)] \leq \\\E[p(Z)] + \exp(\theta w) \cdot \binom{w}{\ell} \cdot \bkets{\frac{\sum_i \E[Z_i]}{w}}^\ell,
\end{multline*}
where the last inequality follows because $Z$'s are $\ell$-wise independent so that by Fact \ref{fact:symmax}
$$\E[S_\ell(Z)] = \sum_{I \subseteq [w], |I| = \ell} \prod_{i \in I} \E[Z_i] \leq \binom{w}{\ell} \cdot \bkets{\frac{\sum_i \E[Z_i]}{w}}^\ell.$$
Now, for any $i \in [w]$,
$$\E[Z_i] = \sum_{j=1}^\infty \frac{\theta^j \E[Y_i^j]}{j!} \leq \sum_{j=1}^\infty \frac{\theta^j \E[Y_i]}{j!} = \E[Y_i] (e^\theta - 1).$$
Therefore, $\sum_i \E[Z_i] \leq e^\theta (\sum_i \E[Y_i]) = e^\theta \cdot \mu$. Finally, note that as $\{Z_1,\ldots,Z_w\}$ are $\ell$-wise independent, the expectation of $p(Z)$ would be the same as when the $Z$'s were independent. However, in this case, 
$$\E[p(Z)] \leq \E\sbkets{\prod_i (1+ Z_i)} = \prod_i \E[(1 + Z_i)].$$
Combining the above equations, we get
\begin{align*}
\E\sbkets{\exp(\theta (Y_1 + \cdots + Y_w))} &\leq \prod_i \E[\exp(\theta Y_i)] + \exp(\theta w) \cdot \binom{w}{\ell} \cdot (e^\theta\mu/w)^\ell\\
&\leq \prod_i \E[\exp(\theta Y_i)] + \frac{\exp(2\theta w) (e \mu)^\ell}{\ell^\ell}.
\end{align*}
This finishes the claim.

\end{proof}

\end{document}